\newcommand\LONGCOMMENT[1]{%
  /*\;\;\hfill\begin{minipage}[t]{0.9\textwidth}\itshape #1\strut\hfill*/\end{minipage}%
}
\definecolor{myGray}{RGB}{80,80,80}
\definecolor{myRed}{RGB}{250,200,200}
\definecolor{myBlue}{RGB}{220,220,250}
\DeclareRobustCommand{\bigO}{%
  \text{\usefont{OMS}{cmsy}{m}{n}O}%
}
\def\VarSet{\mathbb{X}}
\def\supr{\sqcup}
\def\infn{\sqcap}
\newcommand\latop{\mathbin{\raisebox{-0.33ex}{\ooalign{\hss\raisebox{0.5ex}{$\supr$}\hss\cr$\infn$}}}}
\newcommand\suprcart{\,\mathbin{\mathrlap{\scalebox{1.75}{\raisebox{-0.3ex}{\hspace*{-0.3ex}$\supr$}}}\scalebox{0.65}{$\tcartesian$}}\,}
\newcommand\infncart{\,\mathbin{\mathrlap{\scalebox{1.75}{\raisebox{-0.3ex}{\hspace*{-0.3ex}$\infn$}}}\scalebox{0.65}{$\tcartesian$}}\,}
\def\oxbr#1{[\hspace*{-0.35ex}[{#1}]\hspace*{-0.35ex}]}
\def\Lang{\mathcal{L}}
\def\Sol{\mathit{\mathbf{Sol}}}
\def\lenmorph{\sigma_{\forall\rightarrow \nullsymb}}
\def\JS{\textsc{JavaScript}}
\def\Eq{\mathit{E}}
\def\Pat{\mathit{P}}
\def\logimpl{\mathrel{\Rightarrow}}
\def\logand{\mathrel{\&}}
\def\logor{\mathrel{\vee}}
\def\nullsymb{\delta_{\varnothing}}
\def\fancyp{\bm{+}}
\def\fancym{\bm{-}}
\def\fancypm{\bm{\pm}}
\def\WL1{\mathcal{WL}_1}
\def\quote{\hspace{-0.2ex}\raisebox{-0.4ex}{\rotatebox[origin=c]{13}{${}^{{\tiny{\prime}}}$}}}
\def\empt{\varepsilon}
\def\infopat#1{\mathit{E}^{#1}}
\def\absdom{\mathcal{L}^{\#}}
\def\absval{\nu^\alpha}
\def\StringObjectDom{\mathcal{SOB}}
\newcommand\SOb[3]{\biggl\langle {#1}\logand \text{\textbf{len}}_Z\in[{#2},\,{#3})\biggr\rangle}
\def\objlen#1{\text{\textbf{len}}(#1)}
\def\objval#1{\text{\textbf{val}}(#1)}
\newcommand\objprop[2]{\text{\textbf{prop}}_{#2}(#1)}
\newcommand\weql{{=}}
\newcommand\wequat[2]{#1 = #2}
\newcommand\nequat[2]{#1\neq #2}
\newcommand\oconcat[2]{#1.\mathtt{concat}(#2)}
\newcommand\osubstr[2]{#1.\mathtt{substring}(#2)}
\newcommand\oreplace[3]{#1.\mathtt{replace}(#2,\,#3)}
\newcommand\oindex[2]{#1.\mathtt{indexOf}(#2)}
\newcommand\ochar[2]{#1.\mathtt{charAt}(#2)}
\newcommand\concr[1]{\gamma\bigl(#1\bigr)}
\newcommand\abstr[1]{\alpha\bigl(#1\bigr)}
\def\CAlph{\Sigma}
\newtheorem{theorem}{Theorem}[section]
\newtheorem{lemma}{Proposition}[section]
\newtheorem{definition}{Definition}[section]
\newtheorem{example}{Example}[section]
\tikzset{
    ncbar angle/.initial=90,
    ncbar/.style={
        to path=(\tikztostart)
        -- ($(\tikztostart)!#1!\pgfkeysvalueof{/tikz/ncbar angle}:(\tikztotarget)$)
        -- ($(\tikztotarget)!($(\tikztostart)!#1!\pgfkeysvalueof{/tikz/ncbar angle}:(\tikztotarget)$)!\pgfkeysvalueof{/tikz/ncbar angle}:(\tikztostart)$)
        -- (\tikztotarget)
    },
    ncbar/.default=0.5cm,
}
\tikzset{square left brace/.style={ncbar=0.5cm}}
\tikzset{square right brace/.style={ncbar=-0.5cm}}
\tikzset{round left paren/.style={ncbar=0.5cm,out=120,in=-120}}
\tikzset{round right paren/.style={ncbar=0.5cm,out=60,in=-60}}
\newcommand{\eqinterval}[2]{
\begin{tikzpicture}[line width=0.8pt, color=white!10!gray, scale=#1,every node/.style={inner sep=-8pt}]

\draw[]  (0,0.5)   edge [draw=gray!90!white,decorate,decoration={zigzag,segment length=(#1+1)/2*0.4em, amplitude=(#1+1)/2*0.3mm}] (0.75,0.5) (0.75,0.5) -- (0.75,0) (0.75,0) edge[draw=black,line width=1.1pt] (0,0) (0,0) -- (0,0.5);
\draw[]  (0,0) edge [draw=black,line width=1.1pt] (0,0.15);
\draw[]  (0.75,0) edge [draw=black,line width=1.1pt] (0.75,0.15);

\node[text=black!30!gray] at (0.375,0.25) {#2};
\end{tikzpicture}
}
\tikzset{
  pics/eqnode/.style args={#1,#2,#3,#4,#5}{
     code={
    \draw[]  (0,#4*#5*0.5) edge [draw=gray!90!white,decorate,decoration={zigzag,segment length=(#4+1)/2*0.4em, amplitude=(#4+1)/2*0.3mm}] (#4*0.75,#4*#5*0.5) (#4*0.75,#4*#5*0.5) -- (#4*0.75,0) (#4*0.75,0) edge [draw=black,line width=1.1pt] (0,0) (0,0) -- (0,#4*#5*0.5);
    \draw[]  (0,0) edge [draw=black,line width=1.1pt] (0,#4*#5*0.15);
    \draw[]  (#4*0.75,0) edge [draw=black,line width=1.1pt] (#4*0.75,#4*#5*0.15);
     \node[] (#1) at (#4*0.375,#4*#5*0.25) {#2};
     }
  }
}
\newcommand{\neqinterval}[2]{
\begin{tikzpicture}[line width=0.8pt, color=white!10!gray, scale=#1,every node/.style={inner sep=-8pt}]

\draw[]  (0,0.4)   edge[out=30,in=150,looseness=1,draw=black,line width=1.1pt] (0.75,0.4) (0.75,0.4) -- (0.75,0) (0.75,0) edge[draw=gray!90!white,decorate,decoration={zigzag,segment length=(#1+1)/2*0.4em, amplitude=(#1+1)/2*0.3mm}] (0,0) (0,0) -- (0,0.4);
\node[text=black!30!gray] at (0.375,0.225) {#2};
\end{tikzpicture}
}
\newcommand{\Cartesian}[1]{\mathbin{\begin{tikzpicture}[x=#1ex,y=#1ex,line width=(#1)/4.0*1ex] \draw (-0.7,-0.9) -- (0.7,0.5) (-0.7,0.5) -- (0.7,-0.9); \draw (0,-0.2) circle (1.41);\end{tikzpicture}}}
\newcommand{\tcartesian}{\raisebox{-0.2ex}{\,$\Cartesian{0.6}$\,}}
\newcommand{\Reduced}[1]{\mathbin{\begin{tikzpicture}[x=#1ex,y=#1ex,line width=(#1)/4.0*1ex] \draw (-0.7,-0.7) -- (0,0.7) (0.7,-0.7) -- (0,0.7); \draw (0,0) circle (1.41);\end{tikzpicture}}}
\newcommand{\treduced}{\raisebox{-0.2ex}{\,$\Reduced{0.6}$\,}}
\newcommand{\Point}{\mathcal{C}}
\newcommand{\Prefix}{\mathcal{P}}
\newcommand{\Suffix}{\mathcal{S}}
\newcommand{\FactorCode}{\mathcal{I}}
\newcommand{\AntiDictionary}{\mathcal{F}}
\newcommand{\Bool}{\mathcal{B}}
\newcommand{\StringProperty}{\mathcal{SP}}
\newcommand{\objfullreprN}[3]{\biggeral \objval{#1}\treduced \objlen{#1}\treduced \raisebox{-0.8ex}{$\Reduced{1.2}$}_{i=#2}^{#3}\objprop{#1}{i}\biggerar}
\newcommand{\syminterval}[2]{
\begin{tikzpicture}[line width=0.8pt, color=white!60!gray, scale=#1,every node/.style={inner sep=-8pt}]

\draw[]  (0,0.45)   edge[out=30,in=150,looseness=1,draw=black,line width=1.1pt] (0.75,0.45) (0.75,0.45) -- (0.75,0) (0.75,0) edge[draw=black,line width=1.1pt] (0,0) (0,0) -- (0,0.45);
\node[text=black!30!gray] at (0.375,0.25) {#2};
\draw[]  (0,0) edge [draw=black,line width=1.1pt] (0,0.15);
\draw[]  (0.75,0) edge [draw=black,line width=1.1pt] (0.75,0.15);
\end{tikzpicture}
}
\newcommand{\biggerl}{\bBigg@{1.15}({}\hspace*{-0.25ex}} 
\newcommand{\biggerr}{\hspace*{-0.25ex}\bBigg@{1.15})} 
\newcommand{\biggeral}{\bBigg@{1.25}\langle{}\hspace*{-0.1ex}} 
\newcommand{\biggerar}{\hspace*{-0.1ex}\bBigg@{1.25}\rangle} 
\begin{document}
\title{Abstract String Domain Defined \\with Word Equations as a Reduced Product
}
\date{\today}
\author{Antonina Nepeivoda \and Ilya Afanasyev}

\maketitle
\begin{abstract}
We introduce a string-interval abstract domain, where string intervals are characterized by systems of word equations (encoding lower bounds on string values) and word disequalities (encoding upper bounds). Building upon the lattice structure of string intervals, we define an abstract string object as a reduced product on a string property semilattice, determined by length-non-increasing morphisms. We consider several reduction strategies for abstract string objects and show that upon these strategies the string object domain forms a lattice. We define basic abstract string operations on the domain, aiming to minimize computational overheads on the reduction, and show how the domain can be used to analyse properties of JavaScript string manipulating programs.
\end{abstract}


\section{Introduction}

Finding program invariants is one of the main goals of static program analysis, and the set of invariants discovered by an analyser must be expressed in an appropriate language. Therefore, the choice of the language used to describe these invariants is a crucial aspect of any program analysis, as it determines the methods employed in the analysis itself.

Languages capturing numeric invariants naturally involve equations. For example, the invariant \emph{``the value of variable 
$Z$ is never less than $n$''} can be expressed by the formula
$\exists X\geq 0\bigl(Z=X+n\bigr)$
meaning that the desired values of $Z$ belong to the 
$Z$-projection of the solution set of equation $\wequat{Z}{X+n}$ with the constraint $X\geq 0$. In the context of abstract interpretation, well-known abstract domains --- such as intervals~\cite{Intervals}, octagons, and polyhedra~\cite{ComposingBoxes} --- are all based on systems of linear equations. Once the language of equations is chosen, the full power of linear algebra can be leveraged in program analysis. The domain may be relational (capturing relationships between values) or non-relational (capturing properties of individual values), with the latter interpretable as projections of solutions to multi-variable equations.

For non-numeric data, the choice of a language remains an active area of research. For instance, string constraints can be expressed using grammars, automata, or Boolean formulas involving predefined string predicates~\cite{Regular}. Word equations are also widely used to capture string invariants~\cite{Rummer,Rummer2}. But in abstract interpretation, the idea of constructing word-equations-based domains is relatively new and largely unexplored~\cite{N}.

Given a set of variables $\VarSet$ and a set of constant letters $\CAlph$, a word equation is an equality $\wequat{\Phi}{\Psi}$, where $\Phi,\Psi\in (\VarSet\cup\CAlph)^*$. An expressive power of a word equation language depends on the constraints imposed on the equation’s form. For example, the work~\cite{N} introduces a simple abstract domain based on one-variable word equations, where $\Phi\Psi$ includes occurrences of a single variable. But even for expressing non-relational program properties, equations with multiple variables are meaningful. Analogous to numerical non-relational domains, the corresponding formal languages in this case are projections of equation solutions. For example, linear word equations, i.e. equations with each variable occurring in $\Phi\Psi$ at most once, can capture properties of dense languages (defined below). While solutions of these equations are regular, their representation in terms of regexes or NFA can be exponentially larger than in terms of the equation systems (see~\cite{Birget} and Lemma~\ref{Lemma::NFA} in Appendix).

A formal language $\Lang$ is called dense, if for any $\omega\in\Sigma^+$, the intersection $\Lang\cap \Sigma^* \omega \Sigma^*$ is non-empty~\cite{Day}. Such languages are practically interesting because they can capture some of invariants on results of string operations over unknown values. If a language is not dense, it is called thin.

There are many subtle problems of the string domain, as compared to the numeric domain, having impact on expressible power of word equations. For example, introduction of a linear order on strings can result in a non-monotonicity of usual string operations, e.g. given a lexicographic order with alphabet ordering $a\prec b$, $aab\prec b$, while $aab$ includes $b$ as a substring. Introduction of a monotone wrt the subword relation string order usually involves \emph{length comparison}, and the length property is inexpressible in terms of mere word equations~\cite{Day}. 

Hence, we are interested in finding such an abstract domain that is  based on equations whose solution sets are able to express properties of dense languages, and can capture the string length property in a natural way.

The main contributions of the paper are following. 

\begin{itemize}
\item We introduce a notion of a string-interval abstract domain, being a generalization of a numeric interval abstract domain. The string intervals are described via systems of word equations and word disequalities to be satisfied by the strings in the concretisation set of the abstract value. We show that the equation-based string interval domain satisfies lattice axioms, and describe the reduction constructing an unique representation of the string interval in a given alphabet.
\item We introduce a notion of an abstract string object, being described in terms of string intervals and morphisms, generalizing the notion of a $\JS$ string object having value and length properties.
\item We discuss reduction strategies for abstract string objects, preserving their consistency as lattice elements, and describe abstract semantics of basic string operations. We show how the abstract string objects can be used for analysing invariants of string manipulating programs.
\end{itemize}

After giving preliminary definitions in Sect.~\ref{Sect:prelim}, we introduce string intervals (Sect.~\ref{Sect:interval}) and assemble them in objects (Sect.~\ref{Sect:object}). Reduction is discussed in Sect.~\ref{Sect:reduction}, Sect.~\ref{Sect:operations} is devoted to string operation abstraction, and contains program examples analysed. Sect.~\ref{Sect:last} discusses related works and concludes the paper. 

\section{Preliminaries}\label{Sect:prelim}

Let $\Sigma$ denote an alphabet. Small Latin letters $a$, $b$, $c$, $d$, and a letter parameter $\delta$ are considered to be elements of $\Sigma$ ($\delta$ can be also considered as an unknown one-letter string). Small Greek letters $\tau$, $\omega$, $\upsilon$ stand for words (elements of $\Sigma^*$), Greek letters $\sigma$, $\eta$, $\theta$ are reserved for morphisms, small Greeks $\alpha$ and $\gamma$ are reserved for special operations of abstraction and concretisation, and $\nu$ is used for denoting abstract values. 

Capital Latin letters $X$, $Y$, $Z$ stand for elements of the variable alphabet $\VarSet$. The notation $\tau^n$ stands for $n$-concatenation of $\tau$ with itself, i.e. $\underbrace{\tau\tau\dots\tau}_n$. The empty word is denoted by $\empt$. Given a 
word $\tau$ and $t\in \Sigma \cup \VarSet$, $|\tau|$ is the length of $\tau$, and $|\tau|_t$ stands for the number of occurrences of $t$ in $\tau$. We denote an application of a morphism $\sigma$ to $\rho\in (\Sigma\cup \VarSet)^*$ either by $\sigma(\rho)$, or by $\rho[t\mapsto \nu]$ if $\sigma$ is an identity on all elements of $\Sigma\cup\VarSet\setminus\{t\}$.

\begin{definition}
Given a letter alphabet $\Sigma$ and a variable alphabet $\VarSet$, \emph{a word equation} is an equation $\wequat{\mathcal{U}}{ \mathcal{V}}$, where $\mathcal{U},\mathcal{V}\in (\Sigma\cup \VarSet)^*$. The equation is linear iff for every $X\in\VarSet$, $|\mathcal{U}\mathcal{V}|_X\leq 1$.

\emph{A solution} to an equation $\wequat{\mathcal{U}}{ \mathcal{V}}$ is a morphism $\sigma:(\VarSet\cup\Sigma)^*\rightarrow \Sigma^*$ preserving elements of $\Sigma$, s.t. $\sigma(\mathcal{U})=\sigma(\mathcal{V})$~\cite{Plandowski,Makanin}.

We assume that the variable set of equation $\wequat{\mathcal{U}}{\mathcal{V}}$ is lexicographically ordered. The solution set of $\wequat{\mathcal{U}}{ \mathcal{V}}$ is the set of tuples of variable images determined by all solutions of $\wequat{\mathcal{U}}{\mathcal{V}}$ Given a variable set $\mathcal{Q}$, the solution set of $\wequat{\mathcal{U}}{ \mathcal{V}}$ wrt $\mathcal{Q}$, denoted $\Sol_{\mathcal{Q}}$, is the projection of the whole solution set of $\wequat{\mathcal{U}}{ \mathcal{V}}$ onto the coordinates corresponding to the elements of $\mathcal{Q}$. 
\end{definition}

Henceforth, given an equation or an equation system depending on a set of variables including $Z$, we are interested in its $Z$-solution set\footnote{For the sake of brevity, we omit the set notation in $\Sol_{\mathcal{Q}}$ if $|\mathcal{Q}|=1$.} $\Sol_Z$, considering its as a default language defined by the equation. E.g. $\wequat{ZX}{a}$ defines $\Sol_Z$ equal to $\{\empt,a\}$; and $\Sol_Z(\wequat{ZXaY}{XaYZ})=\{\empt\}\cup\Sigma^* a \Sigma^*$. In the similar way, given a word disequality $\nequat{\Phi}{\Psi}$, where $X_1, X_2, ..., X_n, Z\in \VarSet$ are all variables occurring in $\Phi\Psi$, its $Z$-solution set $\Sol_Z$ is a set of all $\omega$ s.t. $\Phi[Z\mapsto\omega]\neq\Psi[Z\mapsto\omega]$ holds for any values of $X_1,\dots, X_n$. For example, $\Sol_Z(\nequat{ZZ}{X_1 a X_2})$ is the set of all words in the alphabet $\Sigma\setminus\{a\}$. Indeed, if $\sigma(Z)=\omega_1 a\omega_2$, then for $\sigma(X_1)=\omega_1$, $\sigma(X_2)=\omega_2 \omega_1 a \omega_2$ the disequality does not hold. 

\subsection{Lattices and Reduced Products}

\begin{definition}
A triple $\bigl\langle \Lang_{\Delta},\supr,\infn\bigr\rangle$, where $\Lang_{\Delta}$ is a set, $\supr$ and $\infn$ are binary operations over $\Lang_{\Delta}$ (also called join and meet respectively), is said to be a lattice if it satisfies the following axioms~\cite{Cousot} for all $x,y,z\in\Lang_{\Delta}$:

\begin{itemize}
\item $\bigl(x\supr(x\infn y)= x\bigr)\logand \bigl(x\infn (x\supr y) = x\bigr)$;

\item $\bigl(x\supr y = y \supr x\bigr)\logand \bigl(x\infn y = y\infn x\bigr)$;

\item $\bigl(x\supr(y\supr z) = (x\supr y)\supr z\bigr)\logand \bigl(x\infn (y\infn z) = (x\infn y)\infn z\bigr)$.
\end{itemize}

A partial order on $\mathcal{L}_{\Delta}$ induced by the lattice above is defined as follows: $x\preceq y \equiv (x\supr y=y)$.

\end{definition}

Let $S$ and $\Lang_\Delta$ be sets of concrete and 
abstract data values, s.t. $\bigl\langle \Lang_\Delta,\supr,\infn\bigr\rangle$ is a lattice. Consider the following two functions:
\begin{itemize}
\item {Abstraction} $\alpha: 2^S\rightarrow {\Lang_\Delta}$;
\item {Concretisation} $\gamma: \Lang_\Delta\rightarrow 2^{S}$. 
\end{itemize}

The functions define a Galois connection iff for all $A\in 2^S$ $A\subseteq \concr{\alpha(A)}$ and for all $\rho \in \Lang_{\Delta}$ $\rho \preceq \abstr{\gamma(\rho)}$.

Henceforth we say that $\absdom$ is an abstract domain of data $S$ assuming that $\absdom$ is a lattice and it forms a Galois connection with $S$.

A lattice being an abstract domain $E$ is said to be atomistic, if for any element $\omega\in S$ $\concr{\abstr{\{\omega\}}}=\{\omega\}$, i.e. any singleton defines a corresponding unique abstract value, which is called an atom~\cite{LatticeAutomata}. Atomistic lattices are of special interest of abstract interpreters, since such an abstract interpreter is able to perform constant propagation ``for free''.

Given two abstract domains $\absdom_1=\langle\mathcal{L}_1,\supr_1,\infn_1\rangle$, $\absdom_2=\langle\mathcal{L}_2,\supr_2,\infn_2\rangle$ abstracting the same concrete set $S$, their Cartesian product, defined as 

\vspace*{-2.7ex}
\begin{multline*}\absdom_1\tcartesian\absdom_2=\biggeral (\mathcal{L}_1,\mathcal{L}_2),\lambda (\nu_1,\nu_2),(\nu'_1,\nu'_2).(\nu_1\supr_1 \nu'_1,\nu_2\supr_2 \nu'_2),\\\lambda (\nu_1,\nu_2),(\nu'_1,\nu'_2).(\nu_1\infn_1 \nu'_1,\nu_2\infn_2 \nu'_2)\biggerar,
\end{multline*} 

is also an abstract domain. Here we use the usual $\lambda$ notation. This definition is naturally extended on any number of elements of the product.

If the abstract domains are not completely independent (i.e. the domains capture related properties of concrete data), then certain subsets of the Cartesian product can describe the same concretisation set of $S$ elements. For the purpose of tracking such subsets, the notion of reduction and reduced product is introduced in papers~\cite{CousotReduced,ComposingDomains}.

\begin{definition}
Given an abstract domain $\absdom$, function $\rho: \absdom\rightarrow \absdom$ is said to be \emph{a reduction} iff $\forall \nu\in\absdom\bigl(\concr{\nu}=\concr{\rho(\nu)}\logand \rho(\nu)\preceq \nu\bigr)$. 

\end{definition}

In this paper we require that $\rho\circ\rho=\rho$. In general a reduction may be non-idempotent and even non-stabilizing for any number of steps~\cite{CousotReduced}.

\begin{example}\label{Example::IntIntervalReduced}

Let us consider a product of abstract numeric integer domains tracking minimal and maximal possible values of unknowns in $\mathbb{Z}^+$. Elements of the both domains can be represented with extended integers in $\mathbb{Z}^+\cup\,\{+\infty\}$, or with $\bot$. Let $\mathcal{LB}$ be the abstract domain representing closed lower bound of value sets, and $\mathcal{UB}$ be the abstract domain representing open upper bound. The concretisation set of the value $\langle \nu_1, \nu_2\rangle\in \mathcal{LB}\tcartesian\mathcal{UB}$ s.t. $\nu_1 \geq \nu_2$ is $\varnothing$. Hence, such a value can be reduced to $\langle\bot, \bot \rangle$, as well as values in which some component of the pair is equal to $\bot$. The following function serves as a reduction for $\mathcal{LB}\treduced\mathcal{UB}$:

$$\rho(\langle\nu_1, \nu_2\rangle)=\begin{cases}\langle \bot,\bot\rangle,\text{ if }\nu_1=\bot \vee \nu_2 =\bot\vee \nu_1 \geq \nu_2,\\
\langle\nu_1, \nu_2\rangle, \text{ otherwise}.\end{cases}$$

\end{example}

Equivalence classes wrt $\rho$, i.e. sets $S_i$ s.t. $\forall \nu_1\in S_i, \nu_2\in S_i\bigl(\rho(\nu_1)=\rho(\nu_2)\bigr)$, determine a quotient set of $\absdom$, $\absdom_{/\equiv_{\rho}}$. Given a reduction function on Cartesian product $\absdom_1\tcartesian \absdom_2\dots \tcartesian\absdom_k$, we define \emph{a reduced product} $\absdom_1\treduced \absdom_2\dots \treduced \absdom_k = \langle \mathcal{L}_r,\supr_r,\infn_r\rangle$, shortcut as $\raisebox{-0.75ex}{$\Reduced{1.2}$}_{i=1}^n \absdom_i$, as follows.

\begin{itemize}
\item $\mathcal{L}_r=(\absdom_1\tcartesian \absdom_2\dots \tcartesian\absdom_k)_{/\equiv_{\rho}}$;
\item $\supr_r = \lambda (\nu_1,\dots,\nu_k),(\nu'_1,\dots,\nu'_k).\rho\bigl((\nu_1\supr_1 \nu'_1,\dots,\nu_k\supr_k \nu'_k)\bigr)$;
\item $\infn_r = \lambda (\nu_1,\dots,\nu_k),(\nu'_1,\dots,\nu'_k).\rho\bigl((\nu_1\infn_1 \nu'_1,\dots,\nu_k\infn_k \nu'_k)\bigr)$.
\end{itemize} 

Note that, in contrast to Cartesian product (which can be considered as a reduced product with $\rho=id$), arbitrary reduction of a lattice is not necessarily a lattice. Given pointwise join and meet operations, $\suprcart$ and $\infncart$ on the Cartesian product, in order to guarantee that the reduced product forms a lattice, the reduction function $\rho$ is to satisfy the following properties, assuming $x$, $y$, $z$ are already given in the normal form (i.e. $\rho(x)=x$, $\rho(y)=y$, $\rho(z)=z$):

\begin{itemize}
\item $\biggerl\rho\bigl(x\suprcart\rho(x\infncart y)\bigl)= x\biggerr\logand \biggerl\rho\bigl(x\infncart \rho(x\suprcart y)\bigr) = x\biggerr$;

\item $\biggerl\rho\bigl(x\suprcart\rho(y\suprcart z)\bigr) = \rho\bigl(\rho(x\suprcart y)\suprcart z\bigr)\biggerr\logand \biggerl\rho\bigl(x\infncart\rho(y\infncart z)\bigr) = \rho\bigl(\rho(x\infncart y)\infncart z\bigr)\biggerr$.
\end{itemize}

\section{String Properties}\label{Sect:interval}

Let $s\in \Sigma$, and $m,n\in\mathbb{N}$ be given. For a unary Peano number $Z$ represented in a string notation, word equation $\wequat{Z}{X s^n Y}$ determines numeric disequality $Z\geq n$, i.e. the lower bound of the interval containing possible values of $Z$. Any word disequality $\forall X,Y(\nequat{Z}{X s^m Y})$ determines numeric disequality $Z< m$. Thus, we can define any natural interval of values by means of word equations and disequalities.

In the similar way, given alphabet $\Sigma$, we can say that an equation $\wequat{Z}{X\omega Y}$, $\omega\in\Sigma^+$, determines a lower bound on $Z$ value, while any disequality $\nequat{Z}{X\omega Y}$ determines
an upper bound. A set of such disequalities (\emph{anti-dictionary}~\cite{Crochemore}) determines subword-closed set of words, and a set of linear word equations of the form above determines an ideal in free monoid over $\Sigma$ defined with the corresponding \emph{factor code}~\cite{Crochemore}. Indeed, given an equation system $\begin{cases}Z=X_1 \omega_1 Y_1 \\\dots \\ Z = X_n\omega_n Y_n\end{cases}$, if some $\omega_i$ is a factor of $\omega_j$, i.e. $\omega_j = \omega_{j,1}\omega_i\omega_{j,2}$, then the equation $Z=X_i\omega_i Y_i$ is redundant in the system, because $Z = X_j \omega_j Y_j$ implies that whenever given $X_i = X_j \omega_{j,1}$, $Y_i = \omega_{j,2} Y_j$. Hence, such systems can be represented by the factor codes of the constant strings in their right-hand sides.

Any finite subword-closed set of words can be described via finite anti-dictionary~\cite{Crochemore,Crochemore2}. In order to define a singleton $\{\omega\}$, both the equation $\wequat{Z}{X\omega Y}$ and the set of corresponding disequalities are required. In general, the disequalities set determining forbidden factors of a word $\omega$ has the size $\bigO\bigl(|\omega|\cdot |\Sigma|\bigr)$. If we are interested in building an atomistic lattice, this representation seems too involved. Hence, we chose to use also equations $\wequat{Z}{\omega}$, and equations of the forms $\wequat{Z}{\omega Y}$ and $\wequat{Z}{X\omega}$. 

It is known that when $|\Sigma|\geq 2$, a union of languages with finite anti-dictionaries may have an infinite anti-dictionary~\cite{Crochemore}. A simple example is $\Sol_Z(\nequat{Z}{X ab Y})\cup \Sol_Z(\nequat{Z}{X ba Y})$ in $\Sigma=\{a,b\}$. This union has the anti-dictionary described by regex $\bigl(ab^+a\;\vert\; ba^+b\bigr)$, which defines an infinite language. Hence, in this paper we consider only string upper bounds in a unary alphabet. The summary on the types of the basic string intervals are given in Fig.~\ref{Fig::basicint}. There $E(\upsilon)$ and $F(\upsilon)$ denote a single equation and a single disequality, respectively. An example illustrating the idea of the upper and lower bounds is given in Fig.~\ref{Fig:strinterval}. 

\begin{figure}[htb]\centering
\begin{tabular}{lllll}
\;\;Icon\quad\quad &\multicolumn{1}{c}{Name} && \multicolumn{1}{c}{Basic components} & \multicolumn{1}{c}{Semantics}\\\hline
\raisebox{-1.5ex}{\eqinterval{1}{}} & \begin{tabular}{l}string interval with \\ trivial upper bound\end{tabular} & \qquad & $E(\upsilon)=\begin{cases}\wequat{Z}{X_i \upsilon Y_i}\\\wequat{Z}{\upsilon Y_j}\\\wequat{Z}{X_k \upsilon} \\\wequat{Z}{\upsilon}\end{cases}$ & $\{\omega\mid \omega\in\bigcap E(\upsilon_i)\}$\\

\raisebox{-1.5ex}{\neqinterval{1}{}}&\begin{tabular}{l}string interval with \\ trivial lower bound\end{tabular}& & $F(\upsilon)=\nequat{Z}{X_i \upsilon Y_i}$ & $\{\omega\mid \omega\in\bigcap F(\upsilon_i)\}$\\
\raisebox{-1.5ex}{\syminterval{1}{}}
& \begin{tabular}{l}string interval with \\ non-trivial bounds\end{tabular}& & \begin{tabular}{l}equations $E(\upsilon)$ \\ and disequalities $F(\upsilon)$\end{tabular} & \begin{tabular}{l}$\{\omega\mid \omega\in\bigcap E(\upsilon_i)$\quad \;\\\multicolumn{1}{r}{$\logand \omega\in \bigcap F(\upsilon_j)\}$}\end{tabular}
\end{tabular}
\caption{Types of basic string intervals}
\label{Fig::basicint}
\end{figure}

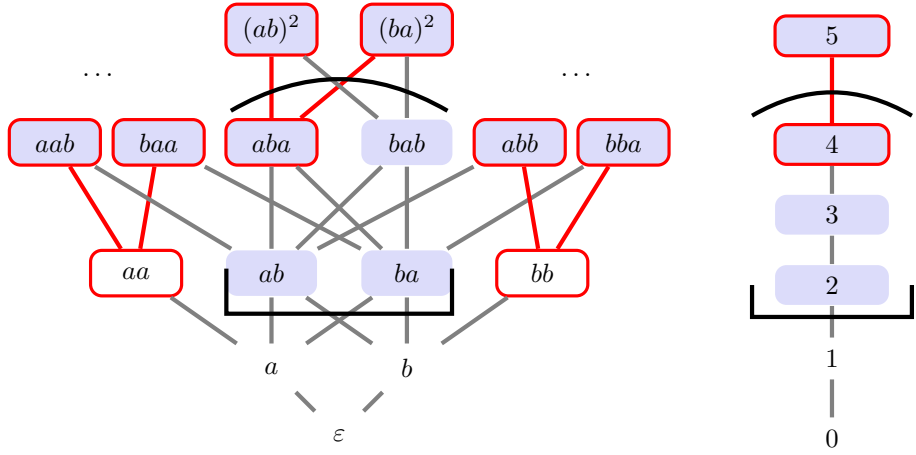
\begin{figure}[htb]

\begin{tabular}{cc}

\begin{tikzpicture}[scale=1.2, every rectangle node/.style={very thick,inner sep=1ex,rounded corners=1ex, minimum width=8ex,minimum height=4ex}]
\draw (0.5, -1) node[rectangle] (0) {$\varepsilon$};
 
\draw (-0.25, -0.25) node[rectangle] (a) {$a$};
\draw (1.25, -0.25) node[rectangle] (b) {$b$};

\draw (-0.25, 0.8) node[rectangle,fill=myBlue] (ab) {$ab$};
\draw (1.25, 0.8) node[rectangle,fill=myBlue] (ba) {$ba$};
\draw (-1.75, 0.8) node[rectangle, draw=red] (aa) {$aa$};
\draw (2.75, 0.8) node[rectangle,  draw=red] (bb) {$bb$};

\draw (-0.25, 2.25) node[rectangle,fill=myBlue, draw=red] (aba) {$aba$};
\draw (1.25, 2.25) node[rectangle,fill=myBlue] (bab) {$bab$};
\draw (-1.5, 2.25) node[rectangle,fill=myBlue, draw=red] (baa) {$baa$};
\draw (-2.65, 2.25) node[rectangle,fill=myBlue, draw=red] (aab) {$aab$};
\draw (2.5, 2.25) node[rectangle,fill=myBlue, draw=red] (abb) {$abb$};
\draw (3.65, 2.25) node[rectangle,fill=myBlue, draw=red] (bba) {$bba$};

\draw (-2.15, 3) node[ellipse, draw=white] (ld) {$\dots$};
\draw (3.15, 3) node[ellipse, draw=white] (rd) {$\dots$};

\draw (-0.25, 3.5) node[rectangle,fill=myBlue, draw=red] (abab) {$(ab)^2$};
\draw (1.25, 3.5) node[rectangle,fill=myBlue, draw=red] (baba) {$(ba)^2$};

\draw[-,ultra thick,gray] (0) -> (a);
\draw[-,ultra thick,gray] (0) -> (b);

\draw[-,ultra thick,gray] (a) -> (aa);
\draw[-,ultra thick,gray] (a) -> (ab);
\draw[-,ultra thick,gray] (a) -> (ba);

\draw[-,ultra thick,gray] (b) -> (bb);
\draw[-,ultra thick,gray] (b) -> (ab);
\draw[-,ultra thick,gray] (b) -> (ba);

\draw[-,ultra thick,red] (aa) -> (aab);
\draw[-,ultra thick,red] (aa) -> (baa);

\draw[-,ultra thick,gray] (ab) -> (aba);
\draw[-,ultra thick,gray] (ab) -> (bab);
\draw[-,ultra thick,gray] (ab) -> (aab);
\draw[-,ultra thick,gray] (ab) -> (abb);

\draw[-,ultra thick,gray] (ba) -> (aba);
\draw[-,ultra thick,gray] (ba) -> (bab);
\draw[-,ultra thick,gray] (ba) -> (bba);
\draw[-,ultra thick,gray] (ba) -> (baa);

\draw[-,ultra thick,red] (bb) -> (abb);
\draw[-,ultra thick,red] (bb) -> (bba);

\draw[-,ultra thick,gray,red] (aba) -> (abab);
\draw[-,ultra thick,gray,red] (aba) -> (baba);
\draw[-,ultra thick,gray] (bab) -> (abab);
\draw[-,ultra thick,gray] (bab) -> (baba);

\draw [ultra thick] (-0.75,0.85) to [square right brace ] (1.75,0.85);
\draw [ultra thick,rotate=270] (-2.6,-0.7) to [round left paren ] (-2.6,1.7);
\end{tikzpicture}
&\qquad
\begin{tikzpicture}[scale=0.85]
\node[rectangle,minimum width=10ex,rounded corners=1ex,thick,fill=white,inner sep=1ex] (0) at (0,-2.5) {$0$};

\node[rectangle,minimum width=10ex,rounded corners=1ex,thick,fill=white,inner sep=1ex] (1) at (0,-1.25) {$1$};
\node[rectangle,minimum width=10ex,rounded corners=1ex,thick,fill=myBlue,inner sep=1ex] (2) at (0,-0.1) {$2$};
\node[rectangle,minimum width=10ex,rounded corners=1ex,thick,fill=myBlue,inner sep=1ex] (3) at (0,1) {$3$};
\node[rectangle,minimum width=10ex,rounded corners=1ex,draw=red,very thick,fill=myBlue,inner sep=1ex] (4) at (0,2.1) {$4$};
\node[rectangle,minimum width=10ex,rounded corners=1ex,draw=red,very thick,fill=myBlue,inner sep=1ex] (5) at (0,3.8) {$5$};

\draw[-,ultra thick,gray] (0) -> (1);
\draw[-,ultra thick,gray] (1) -> (2);
\draw[-,ultra thick,gray][-] (2) -> (3);
\draw[-,ultra thick,gray][-] (3) -> (4);
\draw[-,ultra thick,red][-] (4) -> (5);
\draw [ultra thick] (-1.25,-0.1) to [square right brace ] (1.25,-0.1);
\draw [ultra thick,rotate=270] (-2.55,-1.25) to [round left paren ] (-2.55,1.25);
\end{tikzpicture}

\end{tabular}
\caption{\small String interval wrt the subword ordering represented by the bounds $\omega\in\Sol_Z\{Z = X_1 a Y_1, Z = X_2 b Y_2\}$ and $\omega\in\Sol_Z\{Z\neq X_1 aa Y_1, Z\neq X_2 bb Y_2, Z\neq X_3 aba Y_3\}$ in $\Sigma=\{a,b\}$, and its length interval $[3,4)$. Nodes satifying the lower bound are filled with blue; nodes violating the upper bound condition are circled in red.}\label{Fig:strinterval}
\end{figure}

We treat an equation-based string interval bound as a reduced product in $\Point\treduced(\Prefix\tcartesian\Suffix)\treduced\FactorCode$, where $\Point$ is a lattice capturing constant equations, $\Prefix$ and $\Suffix$ are lattices capturing the prefix and the suffix properties, respectively, and $\FactorCode$ is a lattice capturing the factor code. The reduction performs at least  the following steps.
\begin{itemize}
\item if any element of $\Point$, $\Suffix$, $\Prefix$, or $\FactorCode$ is $\bot$, then reduce to $\bot$ (i.e. to the product of $\bot$ components).
\item if $\nu_{\Point}\neq \top$, where $\nu_{\Point}\in\Point$, then check that the constant in the equation $Z=\omega$ given by $\nu_{\Point}$ satisfies conditions on the prefix, suffix and infixes given in the elements of $(\Prefix\tcartesian\Suffix)\treduced\FactorCode$. If yes, set the corresponding strings in elements of $\Prefix$, $\Suffix$, and $\FactorCode$ to $\omega$. Otherwise, reduce to $\bot$.
\item if the element of $\FactorCode$ does not contain factors existing in elements of $\Prefix$ or $\Suffix$ as subfactors, add them to the element of $\FactorCode$.
\end{itemize}

Henceforth we assume that any abstract string-interval lower bound $\infopat{\nu}\in\Point\treduced(\Prefix\tcartesian\Suffix)\treduced\FactorCode$ is reduced in the sense described above. Still, for the sake of brevity, in examples below we list uniformly equations determining $\Point$, $\Prefix$, $\Suffix$, and $\FactorCode$ elements. Since any equation $Z = X_i \omega_i Y_i$ is uniquely determined by $\omega_i$, we can represent elements of $\FactorCode$ by sets of strings $\omega_i$ when they are considered within $\FactorCode$ domain, while, when used in a product with $\Prefix$ and $\Suffix$ domains, they are unfolded into sets of corresponding word equations. 

Given $\nu_1,\nu_2\in \Prefix$, we define their join and meet operations as follows. 

$$\begin{array}{c}\nu_1 \supr_{\Prefix} \nu_2 = \begin{cases}\;\nu_i,\,\text{if } \nu_{2-i}=\bot,\\ 
\begin{array}{l}\{Z=\omega_0 X_0\},\text{ if } \\\quad\nu_1 = \{Z=\omega_0\delta_1\omega_1 X_0\},\,\nu_2 = \{Z=\omega_0\delta_2\omega_2 X_0\},\delta_1\neq \delta_2,\end{array}\\ 
\;\nu_i,\,\text{if }\nu_i = \{Z=\omega_0 X_0\},\,\nu_{2-i} = \{Z=\omega_0\omega_1 X_0\}.\end{cases} 
\\
\nu_1 \infn_{\Prefix} \nu_2 = \begin{cases}\bot,\,\text{if } \nu_1=\bot\logor \nu_2 = \bot,\\ \bot,\,\text{if } \nu_1 = \{Z=\omega_0\delta_1\omega_1 X_0\},\nu_2 = \{Z=\omega_0\delta_2\omega_2 X_0\},\delta_1\neq \delta_2,\\ 
\nu_{2-i},\,\text{if } \nu_i = \{Z=\omega_0 X_0\},\,\nu_{2-i} = \{Z=\omega_0\omega_1 X_0\}.\end{cases}\end{array}$$

In the similar way, the lattice operations on $\Suffix$ are defined~\cite{CostantiniFerrara,Mastroieni}. 

Given factor codes $I_1$ and $I_2$ represented by sets of strings, the operation $I_1\supr_{\FactorCode} I_2$ constructs the set of all their common substrings of maximal length. The operation $I_1\infn_{\FactorCode} I_2$ constructs $I_1\cup I_2$ and then filters out any string that is a proper substring of another element in the resulting set, in order to construct a valid factor code. The associativity and absorption for the given $\supr_{\FactorCode}$, $\infn_{\FactorCode}$ hold due to the corresponding properties of set operations, hence $\supr_{\FactorCode}$ and $\infn_{\FactorCode}$ are intersection and union of all the subwords of the factor codes of their arguments. The lattice $\FactorCode$ satisfies also the finite ascending chain property: indeed, $\supr_{\FactorCode}$ decreases set of unavoidable words determined by factor codes of its arguments.

If a string set contains $\empt$, the lower bound of the string interval must be trivial, since $\empt$ belongs to the only ideal generated by itself. Taking into account the fact that $\empt$ can be a result of out-of-bounds operation (such as $\osubstr{\omega}{|\omega|}$), considering such lower bounds looks like an over-generalization. On the other hand, $\empt$ satisfies any non-trivial disequality of the form given in~Fig.~\ref{Fig::basicint}, so it cannot be separated from other strings by any given interval upper bound. Tracking the fact that the given parameter can take an empty word as a value is useful, since $\empt$ possesses unique properties as a unit in the string monoid: it occurs in any position in any word. Therefore, the empty word can be seen as a ``singular point'' of a string set, and is desirable to be tracked separately. 

Now we are ready to define the abstract string property formally.

\begin{definition}

A \emph{string-property} abstract domain $\StringProperty$ is a combined product of the form $\Bool\tcartesian\bigl(\Point\treduced(\Prefix\tcartesian\Suffix)\treduced \FactorCode\treduced \AntiDictionary\bigr)$ where, given a value $\absval\in\StringProperty$:

\begin{itemize}
\item $\Bool$ is a trivial lattice tracking if $\empt\in \concr{\absval}$;
\item $\Point$ is the lattice tracking if $\concr{\absval}\setminus \{\empt\}$ is a singleton, and its value;
\item $\Prefix$ is the lattice tracking common prefix of values in $\concr{\absval}\setminus \{\empt\}$;
\item $\Suffix$ is the lattice tracking common suffix of values in $\concr{\absval}\setminus \{\empt\}$; 
\item $\FactorCode$ is the lattice tracking common factors of values in $\concr{\absval}\setminus \{\empt\}$;
\item $\AntiDictionary$ is the lattice tracking forbidden factors of values in $\concr{\absval}$.
\end{itemize}

We assume that if the underlying alphabet is not unary, $\AntiDictionary$ is trivial, hence, the corresponding component of the product is omitted by default.

Given values $\nu,\nu'\in\StringProperty$ in a non-unary alphabet, where $\nu=\biggeral\nu_{\Bool},\langle\nu_{\Point},\nu_{\Prefix},\nu_{\Suffix},\nu_{\FactorCode}\rangle\biggerar$, $\nu'=\biggeral\nu'_{\Bool},\langle\nu'_{\Point},\nu'_{\Prefix},\nu'_{\Suffix},\nu'_{\FactorCode}\rangle\biggerar$ and $\latop\in\bigl\{\supr, \infn\bigr\}$, $\nu\latop_{\StringProperty}\nu'$ is defined as $\biggeral\nu_{\Bool}\latop_{\Bool}\nu'_{\Bool}, \rho\bigl(\nu_{\Point}\latop_{\Point} \nu'_{\Point}, \nu_{\Prefix}\latop_{\Prefix} \nu'_{\Prefix},\nu_{\Suffix}\latop_{\Suffix} \nu'_{\Suffix},\nu_{\FactorCode}\latop_{\FactorCode} \nu'_{\FactorCode}\bigr)\biggerar$.

A property in a unary alphabet is represented by a value in $\Bool\tcartesian \bigl(\mathcal{LB}\treduced\mathcal{UB}\bigr)$, where $\Bool$ is the trivial lattice tracking $\empt$ and $\mathcal{LB}\treduced\mathcal{UB}$ is the abstract numeric domain of integer intervals with the positive closed lower bound\footnote{This domain is already discussed in Example~\ref{Example::IntIntervalReduced}.}.  
\end{definition}

\begin{figure}[h]\centering
$
\begin{array}{ccc}
\raisebox{7.5ex}{\begin{tikzpicture}[scale=0.85]
\node[circle,draw=black!50!gray,thick,fill=myBlue!20!white,inner sep=1ex] (1) at (0,0) {$\bot$};
\node[circle,draw=black!50!gray,thick,fill=myBlue,inner sep=1ex] (2) at (0,1.8) {$\top$};

\draw[-,ultra thick,gray] (1) -> (2);
\node at (0.6,0.9) {$\times$};

\node(shapedlat) at (2,0.9) {\eqinterval{1.1}{}};

\node(shapeulat) at (2,1.8) {\neqinterval{1.1}{}};

\node(shapeblat) at (2,0){\syminterval{1.1}{}};

\draw [very thick,decorate, 
	decoration = {brace,
		raise=0ex,
		amplitude=2ex}] (1.35,-0.5 ) --  (1.35,2.25) node[pos=0.5,right=-2ex,yshift=6ex,black!50!gray]{};
\end{tikzpicture}}
&\quad&
\begin{tikzpicture}[scale=1.12]
\node[circle,draw=black!50!gray,thick,fill=myBlue!20!white,inner sep=0.5ex,minimum width=7.5ex] (1) at (1,0) {$\bot$};
\node[circle,draw=black!50!gray,thick,fill=myBlue!20!white,inner sep=0.5ex,minimum width=7.5ex] (2) at (0,1) {$\{a\}$};
\node[circle,draw=black!50!gray,thick,fill=myBlue!20!white,inner sep=0.2ex,minimum width=7.5ex] (3) at (2,1) {$\mathsmaller{\mathsmaller{\Sigma^* b \Sigma^*}}$};
\node[circle,draw=black!50!gray,thick,fill=myBlue!20!white,inner sep=0.75ex,minimum width=7.5ex] (4) at (1,2) {$\mathsmaller{\Sigma^*\setminus\{\empt\}}$};

\node (eqB) at (2,3.5) [draw=black!30!gray,line width=0.2ex,rectangle callout,shape border rotate=0, minimum height=2ex,inner sep = 0.4ex,fill=white, callout relative pointer={(-0,-2.3)},callout pointer width=3ex] {\small\begin{tabular}{c}
					\textcolor{black!30!gray}{$\wequat{Z}{XbY}$}\end{tabular}};

\node (eqA) at (-0.25,-0.35) [draw=black!30!gray,line width=0.2ex,rectangle callout,shape border rotate=0, minimum height=2ex,inner sep = 0.4ex,fill=white, callout relative pointer={(0.25,0.85)},callout pointer width=3ex] {\small\begin{tabular}{c}
					\textcolor{black!30!gray}{$\wequat{Z}{a}$}\end{tabular}};

\node[circle,draw=black!50!gray,thick,fill=myBlue,inner sep=0.5ex,minimum width=7.5ex] (11) at (-2,1) {${\{\varepsilon\}}$};
\node[circle,draw=black!50!gray,thick,fill=myBlue,inner sep=0.1ex,minimum width=7.5ex] (12) at (-3,2) {$\mathsmaller{{\{\empt, a\}}}$};
\node[circle,draw=black!50!gray,thick,fill=myBlue,inner sep=-0.3ex,minimum width=7.5ex] (13) at (-1,2) {$\begin{array}{l}\mathsmaller{\mathsmaller{\Sigma^*b\Sigma^*}}\\\mathsmaller{\,\cup\,\{\varepsilon\}}\end{array}$};
\node[circle,draw=black!50!gray,thick,fill=myBlue,inner sep=0.5ex,minimum width=7.5ex] (14) at (-2,3) {$\mathlarger{\Sigma^*}$};

\node (eqBE) at (-0.3,3.5) [draw=black!30!gray,line width=0.2ex,rectangle callout,shape border rotate=0, minimum height=2ex,inner sep = 0.4ex,fill=white, callout relative pointer={(-0.7,-1.1)},callout pointer width=3ex] {\small\begin{tabular}{c}
					\textcolor{black!30!gray}{$\wequat{ZXbY}{XbYZ}$}\end{tabular}};

\node (eqAE) at (-2.5,-0.35) [draw=black!30!gray,line width=0.2ex,rectangle callout,shape border rotate=0, minimum height=2ex,inner sep = 0.4ex,fill=white, callout relative pointer={(-0.5,2.15)},callout pointer width=3ex] {\small\begin{tabular}{c}
					\textcolor{black!30!gray}{$\wequat{ZX}{a}$}\end{tabular}};

\draw[-,ultra thick,gray] (1) -> (2);
\draw[-,ultra thick,gray][-] (1) -> (3);
\draw[-,ultra thick,gray][-] (2) -> (4);
\draw[-,ultra thick,gray][-] (3) -> (4);

\draw[-,ultra thick,gray] (11) -> (12);
\draw[-,ultra thick,gray][-] (11) -> (13);
\draw[-,ultra thick,gray][-] (12) -> (14);
\draw[-,ultra thick,gray][-] (13) -> (14);

\begin{pgfonlayer}{background}
\draw[-,ultra thick,gray] (1) -> (11);
\draw[-,ultra thick,gray] (2) -> (12);
\draw[-,ultra thick,gray] (3) -> (13);
\draw[-,ultra thick,gray] (4) -> (14);
\end{pgfonlayer}
\end{tikzpicture}

\end{array}
$
\caption{A string property as a Cartesian product}
\label{Fig::strProp}
\end{figure}
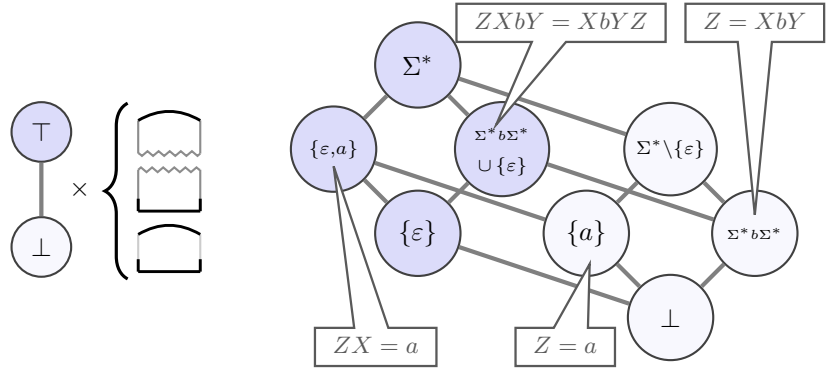

Fig.~\ref{Fig::strProp} shows an example of a lattice describing a non-unary string property. Note that the concretisation sets of the ``upper sublattice'' (shown in saturated blue) describing possibly empty strings can also be defined in terms of word equations. In order to describe possible ``equation primitives'', we studied instances of all basic 3-variables' equations given in the paper \cite{MakaninZoo1}~by G.~Makanin, having elementary parametric solutions. The results of the study are given in the Appendix, Subsection~\ref{subsect::dense}. Actually, all the primitively specialized equations by Makanin define languages described by means of equation systems of the forms presented in Fig.~\ref{Fig::basicint}.

\begin{example}
Consider $\nu_1 = \langle \top, \wequat{Z}{XabY}\rangle$, $\nu_2 = \langle \bot, \wequat{Z}{XbaY}\rangle$, then $\nu_1\supr\nu_2 = \langle \top, \{\wequat{Z}{XaY},\wequat{Z}{XbY}\}\rangle$.
\end{example}
  
\section{Abstract String Objects}\label{Sect:object}

Let $\Sigma$ be a set of Unicode characters and let us consider a string morphism $\lenmorph$ s.t. $\forall \delta\in \Sigma \bigl(\lenmorph(\delta)=\nullsymb\bigr)$, where $\nullsymb$ denotes the character with the code $0$. Such a morphism maps any string to unary representation of its length\footnote{We assume that invisible characters, including backspace \texttt{'\textbackslash{}b'}, are all counted in the length, according to \JS{} semantics.}. String operations, when considered in the unary alphabet $\{\nullsymb\}$, become usual arithmetic operations, such as addition, subtraction in positive domain, and 
comparison. Hence, we can understand string length as a string property preserved by the morphism $\lenmorph$. This approach can be generalized to length-non-increasing morphisms other than $\lenmorph$. In this scope, any string can seen as a complex object possessing a couple of properties.

\begin{itemize}
\item The string value is its property preserved by the identity morphism.
\item The length of the string is its property preserved by morphism $\lenmorph$ that glues all letters in $\Sigma$ into a single equivalence class.
\item Other string properties can be tracked by length-non-increasing morphisms defined via a partition of $\Sigma$ to equivalence classes, whose elements are mapped to single letters or $\empt$.
\end{itemize}

This concept generalizes the concept of a \JS{} string object, possessing value and length properties, in an abstract manner. Not any morphism is helpful for tracking string invariants. For example, given $\sigma_{\tiny \begin{array}{l}a\mapsto b\\ b\mapsto a\end{array}}(\delta) = \begin{cases}a, \delta = b,\quad b, \delta = a\\ \delta,\text{otherwise}\end{cases}$\hspace*{-2ex}, it captures exactly the same properties as the value, being bijective on $\Sigma$. Similarly, tracking properties preserved by the  morphism mapping all elements of $\Sigma$ to $b$ is meaningless since these properties are already captured by the length. Hence, we are interested in the morphisms $\sigma$ determined by partitions of $\Sigma$, say $\Sigma_1,\dots, \Sigma_k, \Sigma_{\empt}$ with $|\Sigma_i|>1$ s.t. $\sigma$ is idempotent, non-increasing wrt alphabetic order, and preserves the partition, i.e. $\forall\delta\in\Sigma_i\bigl(\sigma(\delta)\preceq \delta \logand \sigma(\delta)\in\Sigma_i\bigr)\logand \forall \delta\in\Sigma_{\empt}\bigl(\sigma(\delta)=\empt\bigr)$. Here $\preceq$ stands for the code ordering on Unicode characters, hence, $\sigma(\delta_i)$ in the formula above is the character in $\Sigma_i$ with the minimal code. 
We tacitly assume that any partial morphism defined on $\biggerl\bigcup_{i=1}^k \Sigma_i\cup \Sigma_{\empt}\biggerr\subset \Sigma$ is extended to a global morphism on $\Sigma$ via the identity map, which represents the partition of $\Sigma\setminus\biggerl\bigcup_{i=1}^k \Sigma_i\cup \Sigma_{\empt}\biggerr$ to singletons. We call morphisms of the form above \emph{standard} and assume by default that all the morphisms considered further satisfy the standard form. Note that, given any length-non-increasing morphism $\theta$ s.t. $\forall \delta\bigl(\theta(\delta)\preceq \delta\bigr)$, its fixpoint $\theta^*$ is a standard morphism. 

Morphisms are extended to equations and their systems as follows, assuming that any morphism $\sigma$ is identity on $\VarSet$.

\begin{itemize}
\item If $\sigma$ is non-erasing on letters in $\mathcal{V}$, then $\sigma\bigl(\wequat{Z}{\mathcal{V}}\bigr) = \biggerl\wequat{Z}{\sigma(\mathcal{V})}\biggerr$. 
\item Given an equation $\wequat{Z}{\omega_0 \upsilon_1 \omega_1 \dots \upsilon_k \omega_k}$, $\omega_i\in\bigl(\Sigma\setminus \Sigma_{\empt}\bigr)^+$, $\upsilon_i\in \Sigma_{\empt}^+$, decompose it to the system $\begin{cases}\wequat{Z}{\sigma(\omega_0) Y_0},\, \wequat{Z}{X_0 \sigma(\omega_k)}\\ \bigcup_{i=1}^{k-1}\biggerl\wequat{Z}{X_i \sigma(\omega_i) Y_i}\biggerr.\end{cases}$\hspace*{-1.5ex}Suffix and prefix equations are treated in the same way.
\end{itemize}

Given a string lower bound $\absval=\langle \absval_{\Bool}, \infopat{\absval}\rangle$, where $\infopat{\absval}\hspace*{-1.25ex}\in \Point\treduced \bigl(\Prefix\tcartesian\Suffix\bigr)\treduced\FactorCode$, 

$\sigma(\absval)=\begin{cases}\langle\top,\bot\rangle,\text{if }\infopat{\absval}=\bot\text{ or }\sigma(\absval_{\Point})=\empt,\\\langle \top, \top\rangle,\text{if }\Sigma_{\empt}(\sigma)\neq \varnothing,\text{and }\sigma(\absval_{\Prefix}),\sigma(\absval_{\Suffix}),\sigma(\absval_{\FactorCode})\text{ are equal to }\top,\\ \biggeral\absval_{\Bool}, \absval_{\Point}\treduced\bigl(\sigma(\absval_{\Prefix})\tcartesian\sigma(\absval_{\Suffix})\bigr)\treduced \sigma(\absval_{\FactorCode})\biggerar,\text{ otherwise}.\end{cases}$

Considering all the above, we can give a formal definition of the abstract string object domain based on string properties. Given a morphism $\sigma$, we denote by $\StringProperty(\sigma)$ the string property preserved by the morphism $\sigma$, i.e. abstracting any concrete string $\omega$ to $\abstr{\sigma(\omega)}$.

\begin{definition}
An abstract string object domain $\StringObjectDom$ is $$\StringProperty(id)\treduced \StringProperty(\lenmorph)\treduced \raisebox{-0.75ex}{$\Reduced{1.2}$}_{i=1}^n\StringProperty(\sigma_i)$$ where $\StringProperty(id)$ is the value domain, $\StringProperty(\lenmorph)$ is the length domain, and $\raisebox{-0.75ex}{$\Reduced{1.2}$}_{i=1}^n\StringProperty(\sigma_i)$ is an optional reduced product of other properties, defined by the standard morphisms $\sigma_i$. 

Hence, any abstract string object $\absval$ can be represented in the form $$\objfullreprN{\absval}{1}{n}.$$   

\end{definition}

By default, we assume that all the string properties defined by $\sigma_i$ generating strings in non-unary alphabets contain only lower bounds of string intervals. We recall that if a string property is defined by a morphism onto the unary alphabet (e.g. the length property), it is presented by an element of the product $\Bool\tcartesian \bigl(\mathcal{LB}\treduced\mathcal{UB}\bigr)$.

Generally, custom properties are taken from the program to be analysed. Replacement and inclusion operators taking character sets as their arguments are the main sources of the alphabet partitions supported by the morphisms. We even can create the properties dynamically during abstract interpretation; in this case, we should introduce join and meet operations for \textit{sets of properties}, as well as the appropriate reduction strategy. This feature is discussed in more detail in Sect.~\ref{Sect:distinct}. We postpone the proof that $\StringObjectDom$ is a lattice until the reduction function $\rho_{\StringObjectDom}$ is defined.

\section{String Object Reduction}\label{Sect:reduction}

We aim the reduction function $\rho$ to construct an equivalent representation of a given object, which is assumed to capture the concretisation set of the object in succinct way. 

String object representations can be reduced in the following ways:
\begin{itemize}
\item reduction of a single property of the object;
\item crossover reduction of the object properties.
\end{itemize}

We show that for linear word equations, the reduced form of a single property is unique and depends on the output alphabet of the property morphism. 
Moreover, while a crossover reduction of the properties product is NP-complete even if only the length and value properties are considered, there are practically useful sets of string objects and string properties for which the crossover reduction can be performed efficiently.

\subsection{Standalone Reduction of String Properties}

\begin{definition}
Let us consider an equation system $\infopat{\absval}$ determined by a product in $(\Prefix\tcartesian\Suffix)\treduced\FactorCode$. We say that the word $\upsilon$ is an unavoidable word in $\concr{\infopat{\absval}}$, or unavoidable with respect to the system given by $\infopat{\absval}$, if $\upsilon$ occurs in all words from $\concr{\infopat{\absval}}$ as a subword.

We say that the value $\infopat{\absval}$ is reduced if, when some word $\omega$ is unavoidable in the set $\concr{\infopat{\absval}}$, then $\omega$ is a subword of some element of $\infopat{\absval}$. 

\end{definition}

The equation reduction procedure varies depending on the alphabet $\Sigma$ of the equation systems solutions.

Let $\infopat{\absval}\in (\Prefix\tcartesian \Suffix)\treduced\FactorCode$ be an equation system consisting of the word equations $\wequat{Z}{\upsilon_1Y_0},\wequat{Z}{X_i\omega_i Y_i},\wequat{Z}{X_0 \upsilon_2}$, such that no $\omega_i$ is a substring of $\upsilon_1$, $\omega_j$, or $\upsilon_2$. Let $S=\Sol_Z\bigl(\infopat{\absval}\bigr)$. We say that an unavoidable in $S$ word $\upsilon$ violates the reduced-form condition, if $\upsilon$ is not a subword of any of the words $\nu_1$, $\nu_2$, $\omega_1$,\dots, $\omega_n$.

\begin{restatable}{lemma}{lemsinglereduction}
\label{Lemma:standaloneReduction}
\begin{itemize}
\item If $|\Sigma|>2$, then in $S$ there are no unavoidable words violating the reduced-form condition.
\item If $|\Sigma|=\{a,b\}$, then any unavoidable in $S$ word violating the reduced-form condition takes only one of the following forms: $a^k b$, $a b^k$, $b^k a$, or $b a^k$, where $k\geq 1$, and can be found in $\bigO(n\cdot\log{n})$ time, where $n$ is the number of equations in $\infopat{\absval}$.
\end{itemize}
\end{restatable}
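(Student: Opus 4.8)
The plan is to characterise unavoidability constructively: a word $\upsilon$ violating the reduced-form condition fails to be unavoidable exactly when one can exhibit a single $\omega\in S$ that does not contain $\upsilon$ as a subword. Since membership in $S$ only requires that a solution start with $\upsilon_1$, end with $\upsilon_2$, and contain each $\omega_i$, every word of the shape $\omega_s=\upsilon_1\, s\, \omega_1\, s\, \omega_2\cdots s\, \omega_n\, s\, \upsilon_2$, obtained by inserting a fixed separator word $s$ at the junctions, lies in $S$. I first isolate a \emph{separator criterion}: if $|s|\ge|\upsilon|$, $\upsilon$ is not a subword of $s$, no nonempty proper suffix of $\upsilon$ is a prefix of $s$, and no nonempty proper prefix of $\upsilon$ is a suffix of $s$, then $\omega_s$ avoids $\upsilon$. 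Its proof is a short case analysis on where a hypothetical occurrence of $\upsilon$ in $\omega_s$ begins and ends: an occurrence inside one block is excluded by the violation hypothesis ($\upsilon$ is not a subword of any of $\upsilon_1,\upsilon_2,\omega_i$), an occurrence inside a copy of $s$ is excluded by the second condition, and an occurrence straddling a junction forces either a proper prefix of $\upsilon$ to coincide with a suffix of $s$ or a proper suffix of $\upsilon$ with a prefix of $s$, both excluded, while the length bound rules out covering an entire separator. Thus everything reduces to producing, for a given $\upsilon$, a separator meeting the criterion.

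For $|\Sigma|>2$, let $p,q$ be the first and last letters of $\upsilon$. Since $|\Sigma|\ge 3$ there is a letter $c\notin\{p,q\}$, and I claim $s=c^{|\upsilon|}$ satisfies the criterion. Every prefix and suffix of $s$ is a power of $c$; a proper suffix of $\upsilon$ ends in $q\ne c$ and a proper prefix starts with $p\ne c$, so neither can be a power of $c$, giving the two junction conditions, while $\upsilon$ begins with $p\ne c$ and hence is not a subword of $c^{|\upsilon|}$. Therefore no violating word is unavoidable when $|\Sigma|>2$.

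For $\Sigma=\{a,b\}$ I argue by the run structure of $\upsilon$. If $p=q$ (in particular if $\upsilon$ is a single run), the opposite-letter separator $s=\bar p^{\,|\upsilon|}$ works as above, since every prefix of $\upsilon$ begins with $p$ and every suffix ends with $p$, whereas $s$ is a power of $\bar p$. If $p\ne q$ then $\upsilon$ has an even number of runs, and two cases remain. When $\upsilon$ has at least four runs it is not a subword of $a^{N}b^{N}$, which has only two runs; taking $s=a^{N}b^{N}$ with $N=|\upsilon|$ in the orientation matching $p,q$ meets the junction conditions, because every suffix of $\upsilon$ ends in $q$ (so it is not a pure $a$-run, the only short prefixes of $s$) and every prefix starts with $p$. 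When $\upsilon$ has exactly two runs, say $\upsilon=a^{k_1}b^{k_2}$ with both $k_1,k_2\ge 2$, the alternating separator $s=(ab)^{m}$ meets the criterion: it contains no run of length $\ge 2$, so $\upsilon$ is not a subword of $s$, and its alternating prefixes and suffixes cannot equal a proper suffix or prefix of $\upsilon$ once one exponent exceeds $1$. The only words left, namely two-run words having a run of length exactly one, are precisely $a^kb$, $ab^k$, $b^ka$, $ba^k$, which establishes the claimed list.

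Finally, for the complexity bound, once the candidate shapes are fixed one determines, for each of the four, the largest exponent $k$ for which the word is genuinely unavoidable. This amounts to reading off the longest monochromatic run that is forced (by a single-letter factor $\omega_i=a^{k_i}$, or by a run inside $\upsilon_1$ or $\upsilon_2$) together with whether the opposite letter must border it (governed by the first letter of $\upsilon_1$, the last letter of $\upsilon_2$, and which letters occur among the blocks). Collecting these maxima over the $n$ equations, after sorting the constant strings by their run-length data, costs $\bigO(n\log n)$. The main obstacle, I expect, is not the case $|\Sigma|>2$ but the binary analysis: designing the three separator families so that they avoid $\upsilon$ \emph{independently of the concrete blocks} (the delicate points being the four-run words, handled by $a^Nb^N$, and the two-run words with both exponents at least $2$, handled by $(ab)^m$), and then pinning down the exact forced exponent $k$ in the algorithmic step, where the interaction between forced runs and forced border letters must be accounted for carefully.
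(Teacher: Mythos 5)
Your proposal is correct, and it rests on the same core device as the paper's proof --- for every candidate $\upsilon$ outside the exceptional list, build a witness in $S$ by interspersing a suitably chosen delimiter between $\upsilon_1,\omega_1,\dots,\omega_n,\upsilon_2$ --- but your organization of the binary case is genuinely different and, I think, cleaner. The paper first eliminates words $\delta\Phi\delta$ with equal endpoints via a long block $b^p$ (or $c^p$ when $|\Sigma|>2$), and then for $a\Phi b$ in $\{a,b\}$ runs an iterative refinement: insert $a^p$ everywhere, locate the blocks where the allegedly unavoidable word must straddle a delimiter, regroup those blocks, switch their delimiters to $a^pb$, extract the equation $\Phi' b a^{k_0} b = a^{k_0} b \Phi' b$, and then adjust the delimiters once more ($a^pb^2$, resp. $ba^{k_1+1}b^{k_2+1}$) to dispose of the two resulting solution families. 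You instead isolate a single reusable separator criterion ($|s|\ge|\upsilon|$; $\upsilon$ not a factor of $s$; no nonempty proper suffix of $\upsilon$ is a prefix of $s$; no nonempty proper prefix of $\upsilon$ is a suffix of $s$) and classify $\upsilon$ by run structure, giving one explicit separator family per class: $c^{|\upsilon|}$ when $|\Sigma|>2$, the opposite-letter block for equal endpoints, $p^Nq^N$ (in matching orientation) for words with at least four runs, and $(pq)^m$ for two-run words with both exponents at least $2$. I checked each family against your criterion and the case split is exhaustive, so exactly the four forms $a^kb$, $ab^k$, $b^ka$, $ba^k$ survive; this buys a shorter and more easily verifiable argument than the paper's regrouping scheme. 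One wording nit: in the two-run case the closing phrase ``once one exponent exceeds $1$'' should read ``once both exponents exceed $1$'' --- which is what your case hypothesis actually assumes and what the verification needs, since e.g.\ for $\upsilon=a^{k_1}b$ the word $ab$ is simultaneously a proper suffix of $\upsilon$ and a prefix of $(ab)^m$.

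The only place where you are thinner than the paper is the $\bigO(n\cdot\log n)$ claim. The paper discharges it with a concrete procedure stated after the proof: detection rules such as ``if $\wequat{Z}{\omega a^k Y}\in\infopat{\absval}$ and some other constant contains $b$, then $a^kb$ is unavoidable'' and ``if two internal constants end in $ba^{k_1}$ and $ba^{k_2}$ and neither is a suffix of the other, then $a^{\min(k_1,k_2)}b$ is unavoidable,'' implemented by sorting the constants by their boundary runs and scanning once. Your sketch (sort by run-length data, collect maxima, mind the border letters) points in exactly this direction and has the right cost, but to make the complexity bullet of the lemma complete you would need to spell these detection rules out, as you yourself note when flagging the forced-exponent bookkeeping as the delicate step.
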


The proof is given in Appendix. The main idea is to create appropriate $\omega$-\textit{delimiters} between words occurring in $\infopat{\absval}$ in order to construct a counterexample: a word satisfying $\infopat{\absval}$ but not including $\omega$. Given $\omega\in\Sigma^+$, an $\omega$-delimiter is a word $\tau$ s.t., if $\omega$ occurs in a string $\omega_1 \tau \omega_2$, then $\omega$ does not overlap either with $\omega_1$ or with $\omega_2$. For example, for $\omega = a^k$, the string $\tau = b$ can serve as the $\omega$-delimiter; for $a^2 b^2$, we can choose the delimiter $abab$. Both these examples demonstrate perfect delimiters, because $\omega$ and $\tau$ do not overlap as well. Given $\omega=aab$, it has no perfect $\omega$-delimiter in $\Sigma=\{a,b\}$. That is why this word can violate the reduced-form condition.

\begin{example}
Consider the alphabet $\{a,b\}$.
\begin{itemize}
\item Given the set of all strings including substrings $\{abaa, bbaa\}$, the word $aab$ is unavoidable in them, hence, value $\{\wequat{Z}{X_1 abaa Y_1},\wequat{Z}{X_2 bbaa Y_2}\}$ reduces to $\{\wequat{Z}{X_1 abaa Y_1},\wequat{Z}{X_2 bbaa Y_2},\wequat{Z}{X_3 aab Y_3}\}$ in the given alphabet.
\item Given the set of equations $\{\wequat{Z}{X_1 a^3 b^2 Y_1},\wequat{Z}{X_2 a^2 b^3 Y_2}\}$, the word $aab$ is not unavoidable in them, given the word $b^3 a^3$ satisfying the both equations but not including $aab$.
\end{itemize}
\end{example}

\subsection{Upper Bound Problem in String Object Reduction}

The reduction of a single string property is a simple task, but their cross-reduction may explode complexity of the string objects processing. The main reason is existence of string-intervals' upper bounds. In this section mostly concentrate on crossover reduction of the main two properties of string objects: the value wrt the length, and the length wrt the value. 

Given a string object $\absval$, we say its value is perfectly reduced iff it contains a factor code with all unavoidable subwords of the set $\concr{\absval}$, and captures its maximal common prefix and suffix. The length of $\absval$ is (perfectly) reduced, iff its lower and upper bounds are strict in $\concr{\absval}$.

\begin{example}
Let $\objval{\absval}=\begin{cases}\wequat{Z}{X_1 abbab Y_1}\\\wequat{Z}{X_2 abab Y_2}\end{cases}$, $\objlen{\absval}=[6;8)$. Neither the value nor the length of $\absval$ is reduced. Indeed, no word of the length 6 includes both $abbab$ and $abab$. The appropriate words of the length 7 are $abbabab$ and $ababbab$. Hence, $\concr{\absval}=\{abbabab,ababbab\}$. These words also have common suffix $bab$ and prefix $ab$. Hence, the value $\absval$ is reduced to 
$\SOb{\begin{cases}\wequat{Z}{ab Y_0},\; \wequat{Z}{X_0bab}\\\wequat{Z}{X_1 abbab Y_1},\;\wequat{Z}{X_2 abab Y_2}\end{cases}}{7}{8}$. 
\end{example}

Generally, the perfect reduction of the given abstract object is NP-hard. Indeed, let the object value be $\begin{cases}\wequat{Z}{X_1\omega_1 Y_1}\\\dots\\\wequat{Z}{X_k \omega_k Y_k}\end{cases}$, and the length be represented by $\bigl[1, \sum_{i=1}^k \omega_i+1\bigr)$. Hence, the perfect reduction is to determine the shortest common superstring of $\omega_1,\dots,\omega_k$, while the task is known to be NP-complete~\cite{Kulikov}. Therefore, we determine some practically useful cases when the perfect reduction is simple, and use a partial reduction strategy in other cases. In order to guarantee satisfiability of the lattice axioms, we show that the partial reduction results are stable wrt them.

\begin{restatable}{lemma}{lemlenreduction}
\label{Lemma:ReducedLength}
Let $\objval{\absval}$ be $\begin{cases}\wequat{Z}{\upsilon_0 Y},\;\wequat{Z}{X\upsilon_1},\\\wequat{Z}{X_1\omega_1 Y_1},\dots,\wequat{Z}{X_k\omega_k Y_k}\end{cases}$. If the infinum of $\objlen{\absval}$ is at least $\biggerl\displaystyle\sum_{i=1}^k |\omega_i|\biggerr +|\upsilon_0|+|\upsilon_1|+k-1+\min(|\upsilon_0|,1) + \min(|\upsilon_1|,1)$, then both the value and the length of $\absval$ are already reduced wrt each other.
\end{restatable}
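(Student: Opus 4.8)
The plan is to exhibit, for the hypothesised length bound, an explicit family of ``maximally spread'' solutions that witnesses that neither property can be tightened by the other. First I would build the canonical witness
$$w^\star = \upsilon_0\,[d_0]\,\omega_1\,d_1\,\omega_2\,d_2\cdots d_{k-1}\,\omega_k\,[d_k]\,\upsilon_1,$$
where each $d_i$ is a single delimiter letter, the bracketed $d_0,d_k$ are present exactly when $\upsilon_0$, resp.\ $\upsilon_1$, is non-empty (this is what contributes the terms $\min(|\upsilon_0|,1)$ and $\min(|\upsilon_1|,1)$), and the $k-1$ interior delimiters separate consecutive factors. By construction $w^\star$ satisfies every equation of $\objval{\absval}$ --- it has prefix $\upsilon_0$, suffix $\upsilon_1$, and contains each $\omega_i$ as a factor --- and its length is exactly $L:=\sum_{i}|\omega_i|+|\upsilon_0|+|\upsilon_1|+k-1+\min(|\upsilon_0|,1)+\min(|\upsilon_1|,1)$, the stated threshold. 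Hence the shortest solution of $\objval{\absval}$ has length at most $L$.

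For the length I would argue as follows. Writing $\ell$ and $u$ for the infimum and supremum of $\objlen{\absval}$, the hypothesis $\ell\geq L$ together with non-emptiness of the interval gives $u-1\ge L$, so every target length $m$ with $\ell\le m< u$ can be realised: inserting $m-L\ge 0$ extra delimiter letters into the separating gaps (or at the ends) of $w^\star$ yields a word of length exactly $m$ that still satisfies $\objval{\absval}$, hence lies in $\concr{\absval}$. In particular $\ell$ and $u-1$ are both attained, so both bounds of $\objlen{\absval}$ are strict; since the factor equations never bound length from above, no cross-reduction lowers $u$ either, and the length is reduced against the value.

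For the value I would show that the factor code, the prefix and the suffix are already maximal, i.e.\ the length constraint forces no extra unavoidable subword and no longer common prefix or suffix. Each $\omega_i$, and likewise $\upsilon_0,\upsilon_1$, is unavoidable directly by the defining equations. Conversely, if some $\tau$ that is not a subword of any $\upsilon_0,\omega_i,\upsilon_1$ were unavoidable in $\concr{\absval}$, I would contradict this by producing a length-$\ell$ member of $\concr{\absval}$ avoiding $\tau$: because the factors in $w^\star$ are fully separated, any occurrence of $\tau$ must straddle a delimiter, and this is precisely the $\omega$-delimiter situation of Lemma~\ref{Lemma:standaloneReduction}, so a judicious delimiter choice (a perfect one when $|\Sigma|>2$, a delimiter of the restricted type otherwise) breaks all straddling occurrences while leaving the $\omega_i$ intact. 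Maximality of the prefix is obtained the same way: replacing $d_0$ by two distinct letters gives two length-$\ell$ solutions in $\concr{\absval}$ that diverge immediately after $\upsilon_0$, so no longer common prefix can be forced, and the suffix case is symmetric.

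The main obstacle is the delimiter analysis over a binary alphabet, where perfect delimiters need not exist and one must invoke Lemma~\ref{Lemma:standaloneReduction} to show that the only words that could spuriously become unavoidable are of the forms $a^kb,\,ab^k,\,b^ka,\,ba^k$, and then to verify that the available padding budget $\ell\ge L$ always leaves room for a delimiter assignment avoiding them. A secondary subtlety is the bookkeeping of the threshold $L$ itself: the terms $\min(|\upsilon_0|,1)$ and $\min(|\upsilon_1|,1)$ must match exactly the number of delimiters the construction needs, and the degenerate cases $\upsilon_0=\empt$, $\upsilon_1=\empt$, $k\le 1$, or a unary alphabet must be checked separately so that both the length counting and the prefix/suffix divergence argument remain valid.
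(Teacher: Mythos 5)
Your main construction is essentially the paper's own proof: the paper also builds witnesses by interspersing a delimiter letter $\delta$ that occurs in none of $\upsilon_0,\omega_1,\dots,\omega_k,\upsilon_1$ between the constants, pads them to realize every admissible length (strictness of the length bounds), and uses fresh-letter counterexamples for the value side. The only organizational difference is that the paper states the value part as uniqueness of the representation --- any object $\absval_*$ with $\concr{\absval_*}=\concr{\absval}$ must carry the same prefix, suffix and factor equations, proved by recursion on $k$ --- whereas you argue reducedness directly (no spurious unavoidable word; maximal prefix/suffix via two diverging delimiters); the witnesses are the same, and your version is, if anything, more closely aligned with the paper's definition of ``perfectly reduced''. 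Note that the paper's proof is explicitly carried out, as its prose says, ``assuming that the alphabet $\Sigma$ is large enough'' to contain such a $\delta$; up to that standing assumption your argument is complete and correct.

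The genuine gap is your binary-alphabet branch. You claim that for every $\tau$ that is not a subword of the constants one can produce a length-$\ell$ member of $\concr{\absval}$ avoiding $\tau$, with Lemma~\ref{Lemma:standaloneReduction} supplying the delimiters. Over $\Sigma=\{a,b\}$ this is false: by the paper's own example, $aab$ is unavoidable in the set of all words containing $abaa$ and $bbaa$, hence in every length-restricted subset of it, so no delimiter assignment whatsoever can break it, yet $aab$ is not a subword of either constant. The statement that could survive in the binary case is weaker --- assume the value already standalone-reduced in the sense of Lemma~\ref{Lemma:standaloneReduction}, and show the length constraint adds no \emph{new} unavoidable words --- but that is a different claim, and it faces a real quantitative obstruction: the avoiding words of Lemma~\ref{Lemma:standaloneReduction} use delimiter runs of length about $\sum_i|\omega_i|+|\upsilon_0|+|\upsilon_1|+1$, and such witnesses need not fit into the window $[\ell,u)$ when $u$ is close to $\ell$. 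Relatedly, the ``degenerate cases'' you defer are not mere bookkeeping: for $k=1$, $\upsilon_0=\upsilon_1=\empt$ and $\objlen{\absval}=[\,|\omega_1|,\,|\omega_1|+1)$ the hypothesis of the lemma holds, yet $\concr{\absval}=\{\omega_1\}$, whose maximal common prefix is $\omega_1$ itself, so the divergence argument cannot succeed at length $\ell$ (the paper's uniqueness argument silently needs such too-long witnesses as well). So your proof stands exactly where the paper's does --- fresh delimiter available and window wide enough for the divergence witnesses --- and the extra territory you try to cover beyond that does not go through as described.
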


The general outline of the proof (whose complete version is given in Appendix) again uses the delimiter concept. For the value property, assuming that the alphabet $\Sigma$ is large enough, a universal delimiter $\delta$ is chosen that does not occur explicitly in $\objval{\absval}$. Interspersing this delimiter between constants of $\objval{\absval}$ results in a counterexample that cannot include any string not already specified in $\objval{\absval}$. The lower bound on the length given in Lemma~\ref{Lemma:ReducedLength} is strict, which can be shown by the following example.

\begin{example}
The equation systems $\{\wequat{Z}{a Y},\wequat{Z}{X_1 ba Y_1}\}$ and $\{\wequat{Z}{X_1 ab Y_1},\wequat{Z}{X a}\}$, given the length $3$, are concretised to the same set $\{aba\}$. Hence, their reduced representation for the length $3$ is $\wequat{Z}{aba}$.
\end{example}

\subsection{Cross-Reduction of String Object Lower Bounds}

While the task of determining a strict lower bound of $\objlen{\absval}$ is hard, some constraints on it are implied from $\objval{\absval}$ almost trivially. Indeed, given any constant string $\omega$ occurring in $\objval{\absval}$, the abstract object length cannot be less than $|\omega|$. Hence, the lower bound length constraint imposed on $\absval$ is not less than $\lenmorph\bigl(\objval{\absval}\bigr)$. This observation demonstrates a general idea of string properties ordering.

We denote a set of $\omega$ preimages wrt a morphism $\sigma$ with $\sigma^{-1}(\omega)$. If $\sigma^{-1}(\omega)$ is a singleton, then the notation is overloaded to denote its element.

\begin{definition}\label{defin:morphord}
Given standard morphisms $\sigma_1$ and $\sigma_2$, we say that $\sigma_1\preceq \sigma_2$ iff $\sigma_2\circ \sigma_1 = \sigma_2$. Similarly, $\StringProperty(\sigma_1)\preceq\StringProperty(\sigma_2)$ iff $\sigma_1\preceq\sigma_2$.

We say $\sigma_2$ preserves $\omega\in\Sigma^+$ wrt $\sigma_1$ iff:
\begin{itemize}
\item $\sigma_1(\omega)=\sigma_2(\omega)=\omega$;
\item $\sigma^{-1}_2(\omega)=\sigma^{-1}_2(\omega)$, or $|\omega|=1$ and the sets $\bigl\{\omega'\mid \omega'\in\sigma^{-1}_1(\omega)\logand |\omega'|=1\bigr\}$ and $\bigl\{\omega'\mid \omega'\in\sigma^{-1}_2(\omega)\logand |\omega'|=1\bigr\}$ coincide.
\end{itemize}
\end{definition}

According to Def.~\ref{defin:morphord}, the top property is determined by the trivial morphism mapping everything to $\empt$, and has a unique element $\langle\top,\bot\rangle$, and the minimal property is the value. A morphism partial ordering example is shown in Fig.~\ref{Fig::morphord}. The grey ``propagating'' arrows are explained later.

When $\sigma$ preserves $\omega$ wrt the identity morphism, we say that $\sigma$ preserves $\omega$. In this case, $\sigma^{-1}(\omega)=\{\omega\}$. Any erasing morphism can preserve at most one-letter words. 

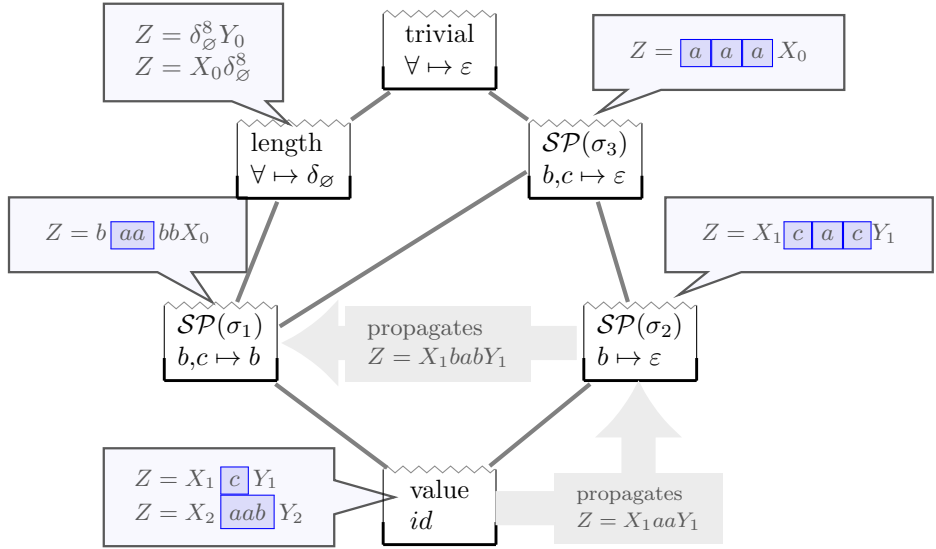
\begin{figure}[!htb]
\centering\begin{tikzpicture}[scale=0.97]

\node (valA) at (-3.25,-2.1) [draw=black!30!gray,text=black!30!gray,line width=0.2ex,rectangle callout,shape border rotate=0, minimum height=3ex,inner sep = 1.5ex,fill=myBlue!20!white, callout relative pointer={(0.8,0)},callout pointer width=3ex] {\small $\begin{array}{l}Z = X_1\,\fcolorbox{blue}{myBlue}{$c$}\,Y_1 \\
Z = X_2\,\fcolorbox{blue}{myBlue}{$aab$}\,Y_2\end{array}$};

\draw (-1,-2.75) pic{eqnode ={val,$\begin{array}{l}\text{value}\\id\end{array}$,,2,1}};


\node (valA1) at (-4.5,1.5) [draw=black!30!gray,text=black!30!gray,line width=0.2ex,rectangle callout,shape border rotate=0, minimum height=3ex,inner sep = 2ex,fill=myBlue!20!white, callout relative pointer={(0.5,-0.4)},callout pointer width=3ex] {\small $\begin{array}{l}Z=b\,\fcolorbox{blue}{myBlue}{$aa$}\,bbX_0\end{array}$};

\draw (-4,-0.5) pic{eqnode ={h1,$\begin{array}{l}\StringProperty(\sigma_1)\\  b{,}c\mapsto b\end{array}$,,2,1}};

\draw (1.75,-0.5) pic{eqnode ={h2,$\begin{array}{l}\StringProperty(\sigma_2)\\  b\mapsto \empt\end{array}$,,2,1}};


\node (valA2) at (4.7,1.5) [draw=black!30!gray,text=black!30!gray,line width=0.2ex,rectangle callout,shape border rotate=0, minimum height=3ex,inner sep = 2ex,fill=myBlue!20!white, callout relative pointer={(-0.6,-0.3)},callout pointer width=3ex] {\small $\begin{array}{l}
Z=X_1 \fcolorbox{blue}{myBlue}{$c$}\fcolorbox{blue}{myBlue}{$a$}\fcolorbox{blue}{myBlue}{$c$}Y_1\end{array}$};

\draw (1,2) pic{eqnode ={h3,$\begin{array}{l}\StringProperty(\sigma_3)\\  b{,}c\mapsto \empt\end{array}$,,2,1}};


\node (valA3) at (3.6,4) [draw=black!30!gray,text=black!30!gray,line width=0.2ex,rectangle callout,shape border rotate=0, minimum height=3ex,inner sep = 2ex,fill=myBlue!20!white, callout relative pointer={(-0.6,-0.3)},callout pointer width=3ex] {\small $\begin{array}{l}
Z=\fcolorbox{blue}{myBlue}{$a$}\fcolorbox{blue}{myBlue}{$a$}\fcolorbox{blue}{myBlue}{$a$}\,X_0\end{array}$};

\draw (-3,2) pic{eqnode ={len,$\begin{array}{l}\text{length}\\  \forall\mapsto \nullsymb\end{array}$,,2,1}};


\node (valA4) at (-3.6,4) [draw=black!30!gray,text=black!30!gray,line width=0.2ex,rectangle callout,shape border rotate=0, minimum height=3ex,inner sep = 1.5ex,fill=myBlue!20!white, callout relative pointer={(0.4,-0.3)},callout pointer width=3ex] {$\begin{array}{l}Z = \nullsymb^8 Y_0\\ Z= X_0\nullsymb^8\end{array}$};

\draw (-1,3.5) pic{eqnode ={triv,$\begin{array}{l}\text{trivial}\\\forall\mapsto\empt\end{array}$,,2,1}};


\draw[-,ultra thick,gray] (val) -> (h1);
\draw[-,ultra thick,gray][-] (val) -> (h2);
\draw[-,ultra thick,gray][-] ([xshift=5.35ex,yshift=1.5ex]h1.center) -> ([xshift=-5.35ex,yshift=-1ex]h3.center);
\draw[-,ultra thick,gray][-] (h1) -> (len);
\draw[-,ultra thick,gray][-] (h2) -> (h3);
\draw[-,ultra thick,gray][-] (h3) -> ([xshift=4.5ex,yshift=-3.7ex]triv.center);
\draw[-,ultra thick,gray][-] (len) -> ([xshift=-4.5ex,yshift=-3.7ex]triv.center);

 \draw[line width=3ex, gray!15,rounded corners=0.2ex, -{Latex[length=5ex, width=8ex]}] (0.54,-2.25) -- (2.5,-2.25) -- (2.5,-0.5);
 
\node (err) at (2.5,-2.25) [text=black!30!gray, fill=gray!15] {\footnotesize $\begin{array}{l}
\text{propagates}\\
Z=X_1 aa Y_1\end{array}$};

 \draw[line width=3ex, gray!15,rounded corners=0.2ex, -{Latex[length=5ex, width=8ex]}] ([xshift=-5.75ex]h2.center) -> ([xshift=5.5ex]h1.center);

\node (propagate) at (-0.25,0) [text=black!30!gray,fill=gray!15] {\small $\begin{array}{l}\text{propagates}\\
Z=X_1 bab Y_1\end{array}$};
\end{tikzpicture}
\caption{\small Morphisms ordering, words preserved by the morphisms (given in blue boxes), and equation propagation via reduction. Only the lower bounds are considered, and the length lower bound is given as a set of equations for uniformity.}\label{Fig::morphord}
\end{figure}

\begin{restatable}{lemma}{lempropagation}\label{Lemma:Propagation}
Given string lower bounds $\infopat{\absval}_1\in\StringProperty(\sigma_1)$ and $\infopat{\absval}_2\in\StringProperty(\sigma_2)$ s.t. $\sigma_2 \succ \sigma_1$ and they are non-erasing, all equations that can be propagated from $\infopat{\absval}_2$ to $\infopat{\absval}_1$ include words preserved by $\sigma_2$ wrt $\sigma_1$.
\end{restatable}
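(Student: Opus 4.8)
The plan is to make precise what it means to propagate an equation and then to characterise exactly which constants survive. Recall from Definition~\ref{defin:morphord} that $\sigma_2\succ\sigma_1$ means $\sigma_2\circ\sigma_1=\sigma_2$ with $\sigma_1\neq\sigma_2$; since both morphisms are non-erasing and standard, each maps a letter to the minimal-code representative of its class, is length-preserving, and is idempotent, so its image coincides with its set of fixed points. Moreover $\sigma_1\preceq\sigma_2$ forces the $\sigma_1$-partition to refine the $\sigma_2$-partition, whence every $\sigma_2$-fixed point is also a $\sigma_1$-fixed point. First I would fix the notion of propagation: an equation $\wequat{Z}{X\omega Y}$ of $\infopat{\absval}_2$ asserts that $\sigma_2(\omega_c)\in\Sol_Z(\wequat{Z}{X\omega Y})$ for every concrete $\omega_c$ in the concretisation; writing $\mu=\sigma_1(\omega_c)$ and using $\sigma_2=\sigma_2\circ\sigma_1$, this is the statement that $\sigma_2(\mu)$ contains $\omega$ as a factor. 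A constant $\tau$ is then \emph{propagated} to $\infopat{\absval}_1$ exactly when the factor constraint $\wequat{Z}{X\tau Y}$ holds for every $\sigma_1$-fixed point $\mu$ subject to this condition, i.e. when $\tau$ is forced in every such $\mu$.

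The key structural step is to localise the forced content. Because $\sigma_2$ is length-preserving, $\sigma_2(\mu)$ contains $\omega$ iff $\mu$ has a factor $\mu'$ of length $|\omega|$ with $\sigma_2(\mu')=\omega$; since $\mu$ is a $\sigma_1$-fixed point, $\mu'$ ranges over $\sigma_2^{-1}(\omega)$ intersected with the $\sigma_1$-fixed points, computed position by position. I would show that $\omega$ itself always lies in this set (each letter of $\omega$ is a $\sigma_2$-fixed point, hence a $\sigma_1$-fixed point), and that a position $j$ admits a $\sigma_1$-fixed-point letter other than $\omega_j$ precisely when $\omega_j$ is \emph{not} preserved by $\sigma_2$ wrt $\sigma_1$, i.e. when the $\sigma_2$-class of $\omega_j$ strictly contains its $\sigma_1$-class. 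Outside the guaranteed occurrence of $\mu'$ the string $\mu$ is entirely free, so the only content common to all admissible $\mu$ lies inside $\mu'$ and, within $\mu'$, only at positions whose representative is uniquely determined, that is, at preserved letters.

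From here the claim follows. Suppose $\wequat{Z}{X\tau Y}$ is propagated but $\tau$ contains a position that is not preserved (or, in the length-one case, fails the single-letter clause of Definition~\ref{defin:morphord}). Using the alternative $\sigma_1$-representative available at a non-preserved position, and interspersing $\tau$-delimiters between the forced factor and the free remainder exactly as in the proof of Lemma~\ref{Lemma:standaloneReduction}, I would build an admissible $\sigma_1$-fixed point $\mu$ whose $\sigma_2$-image contains $\omega$ yet which avoids $\tau$, contradicting propagation. Hence every letter of $\tau$ is preserved, and since preservation at the letter level multiplies to $\sigma_1^{-1}(\tau)=\sigma_2^{-1}(\tau)$ for $|\tau|>1$, the constant $\tau$ is preserved by $\sigma_2$ wrt $\sigma_1$. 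The main obstacle is this counterexample construction: one must guarantee that substituting the alternative representative at the non-preserved position, together with the delimiter padding, destroys \emph{every} occurrence of $\tau$ without inadvertently creating a new one elsewhere; as in Lemma~\ref{Lemma:standaloneReduction} this relies on the length-preserving, non-erasing nature of the morphisms to keep factor occurrences aligned and on a careful choice of delimiter, with the binary-alphabet subtlety handled in the same way as before.
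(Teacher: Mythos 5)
Your route is in essence the paper's own: assume a non-preserved word could be propagated, then refute its unavoidability with a delimiter-padded counterexample in which occurrences are destroyed by switching, at letters that are not preserved, to an alternative $\sigma_1$-representative of the same $\sigma_2$-class (the paper calls this the sliding counterexample; your position-wise description of $\sigma_2^{-1}(\omega)$ among the $\sigma_1$-fixed points and your lifting of letter-level to word-level preservation are both correct). The first genuine gap is your notion of propagation: you quantify only over $\sigma_1$-fixed words $\mu$ whose $\sigma_2$-image satisfies $\infopat{\absval}_2$. The lemma concerns cross-reduction of a single string object, so an equation can be propagated to $\infopat{\absval}_1$ as soon as it is sound on all words satisfying \emph{both} lower bounds; your counterexample must therefore also satisfy the prefix, infix and suffix equations of $\infopat{\absval}_1$. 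That is exactly why the paper's skeleton $\upsilon_0\delta^{|\omega|}\omega_1\delta^{|\omega|}\dots\omega_k\delta^{|\omega|}\tau\delta^{|\omega|}\upsilon_1$ interleaves the constants of $\infopat{\absval}_1$ with delimiter blocks. The repair fits your framework --- fill the ``free remainder'' with the constants of $\infopat{\absval}_1$, separated by delimiters, and use that a genuinely new propagated constant cannot be a factor of a constant of $\infopat{\absval}_1$ --- but as written you prove a strictly weaker statement.

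The second gap is the step you yourself call the main obstacle: it is the heart of the proof, it is left undone, and your singular phrasing (``the alternative representative at a non-preserved position'') would fail. For $\sigma_1=id$, $\sigma_2$ merging $a,c\mapsto a$, constant $\omega=axax$ in $\infopat{\absval}_2$ and candidate $\tau=ax$, replacing either one of the two non-preserved positions leaves the other occurrence of $\tau$ intact. The resolution available inside your own setup is simultaneous replacement: since $\tau$ must occur in $\omega$ itself (take $\mu'=\omega$, which you rightly note is admissible), every letter of $\tau$ is $\sigma_2$-fixed; now replace \emph{all} non-preserved positions of $\omega$ by alternatives, which are never $\sigma_2$-fixed. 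A surviving or newly created occurrence of $\tau$ would have to place the non-preserved, $\sigma_2$-fixed letter of $\tau$ on a position that now carries either a preserved letter (hence a different letter, preservation being a property of the letter itself) or an alternative (hence a letter that is not $\sigma_2$-fixed); both are impossible. Finally, your appeal to the binary-alphabet treatment of Lemma~\ref{Lemma:standaloneReduction} is misplaced: there, binary alphabets genuinely produce additional unavoidable words, and if the same phenomenon occurred here the present lemma would simply be false. What actually protects the statement is that a binary $\sigma_1$-image forces a unary $\sigma_2$-image when $\sigma_2\succ\sigma_1$ for standard morphisms, so $\infopat{\absval}_2$ then carries only length information and propagates no factor equations; in every non-vacuous case the $\sigma_1$-image alphabet has at least three letters and a perfect delimiter for $\tau$ exists.
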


The proof uses an idea of sliding counterexample construction: if $\omega$ does not occur in $\infopat{\absval}_1$ and is not preserved in $\infopat{\absval}_2$, then, given any $\upsilon=\delta_0\dots \delta_k$ in $\infopat{\absval}$ there exists a position $i<|\omega|$ s.t. $\sigma_1\bigl(\sigma^{-1}_2(\delta_i)\bigr)$ mismatches with $i$-th letter of $\omega$. Hence, we slide a possible starting position of $\omega$ in an element of $\sigma^{-1}(\upsilon)$ until all length of $\upsilon$ is exhausted, thus constructing a counterexample to the initial assumption that $\omega$ is unavoidable in $\sigma^{-1}(\upsilon)$. Using the sliding counterexample scheme together with the $\omega$-delimiter construction for binary alphabet (considered in Lemma~\ref{Lemma:standaloneReduction}) helps to derive one more useful statement.

\begin{restatable}{lemma}{lemmasharing}\label{Lemma::sharing}
If morphisms $\sigma_1$ and $\sigma_2$ are incomparable, an equation $\wequat{Z}{X\omega Y}$ does not occur in $\infopat{\absval}_1\in\StringProperty(\sigma_1)$, and $\sigma_1(\omega)$ is unavoidable in $\sigma_1\bigl(\sigma^{-1}_2(\infopat{\absval}_2)\bigr)$, $\infopat{\absval}_2\in\StringProperty(\sigma_2)$, then either $\omega$ is preserved by $\sigma_2$ wrt $\sigma_1$, or
 there exists $\delta$ s.t. $\sigma_2(\delta)=\empt$, $\sigma_1(\delta)=\delta$, and $\omega = \delta^{k_1}\omega'\delta^{k_2}$, where $\omega'$ is preserved by $\sigma_2$ wrt $\sigma_1$. 
\end{restatable}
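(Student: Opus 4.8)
The plan is to establish the dichotomy by a counterexample argument: assuming $\sigma_1(\omega)$ is unavoidable in $\sigma_1\bigl(\sigma_2^{-1}(\infopat{\absval}_2)\bigr)$ while $\wequat{Z}{X\omega Y}$ is absent from $\infopat{\absval}_1$, I will show that whenever $\omega$ fails to be of either stated form, one can exhibit a concrete member of $\sigma_1\bigl(\sigma_2^{-1}(\infopat{\absval}_2)\bigr)$ containing no occurrence of $\sigma_1(\omega)$. First I would describe the freedom available in building such a member: a witness for the $\sigma_2$-property consists of the forced constants of $\infopat{\absval}_2$ together with arbitrary filler; this witness is then lifted letter-by-letter inside each $\sigma_2$-class, arbitrarily many $\sigma_2$-erased letters may be inserted at any position, and finally $\sigma_1$ is applied. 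The decisive observation is that a $\sigma_2$-erased letter $\delta$ with $\sigma_1(\delta)=\delta$ survives into the $\sigma_1$-image and can be sprinkled freely; such a $\delta$ is exactly the material from which delimiters in the sense of Lemma~\ref{Lemma:standaloneReduction} are built.

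Next I would prove interior preservation. Suppose some letter $c$ occurring strictly inside $\omega$ is not preserved by $\sigma_2$ wrt $\sigma_1$ --- either its $\sigma_2$-class is not contained in a single $\sigma_1$-class (the incomparability of $\sigma_1,\sigma_2$ supplies such classes), or $c$ is itself $\sigma_2$-erased. I combine two constructions. Using a perfect delimiter (either a spare letter when the $\sigma_1$-image alphabet has at least three symbols, or an erased-fixed $\delta$ when available) I separate all forced constants of $\infopat{\absval}_2$, so that no occurrence of $\sigma_1(\omega)$ can be created at the junction of two constants. Within the lift of each single constant I then run the sliding-counterexample scheme of Lemma~\ref{Lemma:Propagation}: at every candidate start position the letter aligned with $c$ can be chosen (by picking an appropriate $\sigma_2$-preimage, or by declining to insert an erased $c$) so that $\sigma_1$ of it mismatches the corresponding letter of $\sigma_1(\omega)$, killing the occurrence there. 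Sliding over the whole witness yields a member avoiding $\sigma_1(\omega)$, a contradiction. Hence every interior letter of $\omega$ is fixed by both morphisms and has coinciding preimages, which is precisely the preservation condition of Definition~\ref{defin:morphord}; the hypothesis $\wequat{Z}{X\omega Y}\notin\infopat{\absval}_1$ ensures this unavoidability genuinely propagates from $\infopat{\absval}_2$ rather than already residing in $\infopat{\absval}_1$.

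The only way the previous step can fail is at the two ends of $\omega$, and this is exactly the obstruction catalogued in Lemma~\ref{Lemma:standaloneReduction}: the boundary patterns $a^k b$, $ab^k$, $b^k a$, $ba^k$ admit no perfect delimiter over a binary image alphabet, so a letter flanking the preserved core cannot be eliminated by the delimiter construction. I would show that such an unremovable flanking letter must be a single $\sigma_2$-erased, $\sigma_1$-fixed letter $\delta$: being $\sigma_2$-erased it is invisible to $\infopat{\absval}_2$ and therefore does not disturb the $\sigma_2$-unavoidability of the core, while being $\sigma_1$-fixed it persists through $\sigma_1$ and is forced against the core precisely because no $\delta$-delimiter separating it from the core exists. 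Stripping the maximal $\delta$-runs from both ends leaves a word $\omega'$ all of whose letters are interior in the sense above, hence preserved, giving $\omega=\delta^{k_1}\omega'\delta^{k_2}$; taking $k_1=k_2=0$ recovers the first disjunct, so the two cases together yield the claim.

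The hard part will be the boundary analysis of the third step: isolating the junction-induced obstruction, proving that the surviving flanking letter is unique and is simultaneously erased by $\sigma_2$ and fixed by $\sigma_1$, and ruling out the competing scenarios in which two distinct letters, or a non-erased letter, persist at an end. This requires fusing the $\omega$-delimiter construction of Lemma~\ref{Lemma:standaloneReduction} with the synchronised slide of Lemma~\ref{Lemma:Propagation} --- using the erased-fixed letters as delimiter material while keeping the slide consistent across every forced constant --- and confirming that whenever the $\sigma_1$-image alphabet offers a perfect delimiter the pure first disjunct is forced, leaving the padded form possible only in the binary-type regime.
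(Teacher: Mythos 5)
Your overall strategy --- exhibiting a member of $\sigma_1\bigl(\sigma_2^{-1}(\infopat{\absval}_2)\bigr)$ that avoids $\sigma_1(\omega)$ by fusing the $\omega$-delimiter construction of Lemma~\ref{Lemma:standaloneReduction} with the sliding-counterexample scheme of Lemma~\ref{Lemma:Propagation} --- is exactly the route the paper indicates (note that the paper never writes out a detailed proof of this particular lemma in its appendix; it only names this combination of techniques just before the statement). However, your plan contains a concrete error at the decisive step, and it is not confined to the part you admit to deferring. You claim that ``whenever the $\sigma_1$-image alphabet offers a perfect delimiter the pure first disjunct is forced, leaving the padded form possible only in the binary-type regime.'' This is false, and the paper's own gap-subwords example (Fig.~\ref{Fig::morphord}) refutes it. Take $\sigma_1: b,c\mapsto b$, $\sigma_2: b\mapsto\empt$, and let $\infopat{\absval}_2$ contain $\wequat{Z}{X\,cac\,Y}$. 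Every word in $\sigma_2^{-1}\bigl(\infopat{\absval}_2\bigr)$ contains a factor $c\,b^{i}a\,b^{j}c$, whose $\sigma_1$-image is $b\,b^{i}a\,b^{j}b$; hence $\omega=bab$, which is exactly the padded form $\delta\,\omega'\,\delta$ with $\delta=b$, $\omega'=a$, is unavoidable in $\sigma_1\bigl(\sigma_2^{-1}(\infopat{\absval}_2)\bigr)$, even though the image alphabet of $\sigma_1$ still contains further letters (e.g. $a$ and $d$), so that $d^{p}$ is a perfect delimiter for $bab$. The padded disjunct is therefore not a symptom of the binary-alphabet obstruction of Lemma~\ref{Lemma:standaloneReduction}: it is an \emph{intra-constant} phenomenon, arising whenever $\sigma_1$ identifies a $\sigma_2$-erased, $\sigma_1$-fixed letter $\delta$ with the $\sigma_1$-images of the constant letters flanking the preserved core. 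Delimiters are powerless against it, because they only control what happens at junctions between forced constants and free filler, while these occurrences are forced inside the preimage of a single constant of $\infopat{\absval}_2$.

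The same confusion undermines your ``interior preservation'' step. Inside the lift of one forced constant the slide does not always have a free choice: a letter of $\sigma_1(\omega)$ may align with the image of a constant letter whose $\sigma_2$-preimage is unique, or with a position where inserting or omitting an erased letter produces the same $\sigma_1$-image either way; then no mismatch can be created --- correctly so, since in that situation $\sigma_1(\omega)$ really is unavoidable. Characterising exactly when the slide gets stuck, and proving that being stuck forces $\omega=\delta^{k_1}\omega'\delta^{k_2}$ with $\sigma_2(\delta)=\empt$, $\sigma_1(\delta)=\delta$ and $\omega'$ preserved by $\sigma_2$ wrt $\sigma_1$ (the first disjunct being the case $k_1=k_2=0$), is the entire mathematical content of the lemma. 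That is precisely the analysis you postpone as ``the hard part,'' and the criterion you propose for resolving it --- existence of a perfect delimiter in the $\sigma_1$-image alphabet --- gives the wrong answer on the paper's own example. So the proposal has a genuine gap: the mechanism separating the two disjuncts is misidentified, and the deferred boundary analysis would fail if carried out along the lines you describe.
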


We see that, given two lower bounds of a same string object $\absval$, $\infopat{\absval}_1\in\StringProperty(\theta_1)$ and $\infopat{\absval}_2\in\StringProperty(\theta_2)$, the ways to propagate unavoidable factors from $\infopat{\absval}_1$ to $\infopat{\absval}_2$ are rather limited.

\textbf{Factor propagation upwards}: if $\theta_1 \prec \theta_2$, then we can propagate the whole $\theta_2(\infopat{\absval}_1)$ to $\infopat{\absval}_2$. In Fig.~\ref{Fig::morphord}, this situation is shown given $\theta_1=id$, $\theta_2=\sigma_2$. The $\sigma_2$-image of the value is $\bigl\{\wequat{Z}{{X_1} c{Y_1}},\wequat{Z}{{X_2} aa{Y_2}}\bigr\}$. While $c$ already occurs in the factor code of $\infopat{\absval}_2$, the factor $aa$ is to be propagated.

\textbf{Preserved factor propagation}: if $\theta_1\not\prec\theta_2$, then we can propagate the factors preserved by $\theta_1$ from $\infopat{\absval_1}$ to $\infopat{\absval_2}$. If $\theta_1$ and $\theta_2$ are non-erasing, this is the only possible option. In Fig.~\ref{Fig::morphord}, we could propagate the factor $aa$ from $\StringProperty(\sigma_1)$ to $\StringProperty(\sigma_2)$ this way.

\textbf{Gap subwords propagation}: if $\theta_1$ is erasing and its factor code includes a word $\omega = \delta^i \omega' \delta^j$ s.t. $\forall \delta'\in\Sigma_{\empt}(\theta_1)\bigl(\theta_2(\delta')=\theta_2(\delta)\bigr)$, and $\omega$ is preserved wrt $\theta_2$, then $\theta_2(\omega)$ can be included in the factor code of $\infopat{\absval}_2$. This situation is shown via the grey arrow in Fig.~\ref{Fig::morphord} given $\theta_1=\sigma_2$, $\theta_2=\sigma_1$. Indeed, in the factor $cac$, $a$ is preserved by $\sigma_2$ wrt $\sigma_1$, and $\sigma_1(c)=\sigma_1\bigl(\Sigma_{\empt}(\sigma_2)\bigr)$. Hence, the unavoidable factor $cac$ wrt $\sigma_2$ corresponds to the unavoidable pattern $bb^* a b^*b$ wrt $\sigma_1$.

In factor codes cross-propagation, reduction to $\bot$ never occurs. 

Prefix, suffix, and constant equations occurring in the lower bounds can be propagated more eagerly. Namely, we can see them as sequences of constraints on positions in prefixes, suffixes, and strings occurring in $\concr{\absval}$, and resolve the constraints for all these positions. If the constraints are contradictory, the whole string object is reduced to $\bot$. 

\begin{example}
Given $\theta_1 = \bigl\{\{a,b\}\mapsto a, c\mapsto\empt\bigr\}$, $\theta_2 = \bigl\{\{a,c\}\mapsto a,\{b,d\}\mapsto b\bigr\}$, if an element of $\StringProperty(\theta_1)$ includes $\wequat{Z}{ad Y_0}$, and the element of $\StringProperty(\theta_2)$ includes $\wequat{Z}{ba Y_0}$, the constraints on the first two letters $\delta_1$, $\delta_2$ of the elements of $\concr{\absval}$ imposed by these equations are: $\begin{cases}\delta_1 \in\{a,b,c\}\logand \delta_1\in\{b,d\}\\ \delta_2\in\begin{cases}\{c,d\},\delta_1\neq c\\\{a,b,c\},\delta_1=c\end{cases}\logand\delta_2\in\{a,c\}.\end{cases}$ 

The prefix equation $\wequat{Z}{bcY_0}$ derived by resolving the constraints can be propagated to the value of $\absval$ satisfying these two lower bounds.
\end{example}

Hence, we have the outline of the lower bounds cross-reduction algorithm.
\begin{itemize}
\item Gather constraints on prefixes, suffixes, and constant values, and resolve them, propagating the resulting equations to string properties defined with least possible morphisms.
\item Perform downward propagation of all preserved subwords.
\item By breadth-first graph traversal, perform upward propagation, and generate all gap subwords, moving from the least property (e.g. from the value) upwards.
\end{itemize}

While this cross-reduction strategy is time-consuming, in Sect.~\ref{Sect:operations} we discuss why it can be partially skipped in computations in $\StringObjectDom$. The strategy has another limitation: it ignores a special structure of the lower bounds in unary alphabets. We fix this limitation by adding the last reduction step described in the rest of this section.

Let $\theta$ be a morphism on $\Sigma^*$ with a unary image alphabet. We say $\sigma$ is compatible with $\theta$ iff $\Sigma_{\empt}(\theta)\subseteq\Sigma_{\empt}(\sigma)$. We aim at constructing the lower bound $n$ on the length of words in an element of $\StringProperty(\theta)$. The state-of-art unary bounds reduction strategy performs the following two steps. 

\textbf{Finding nonoverlaps in bounds:} Given an element of a $\theta$-compatible $\StringProperty(\sigma_i)$ including both $\wequat{Z}{\upsilon_0 Y_0}$ and $\wequat{Z}{X_0\upsilon_1}$, $n\geq|\upsilon_0|+|\upsilon_1|-|\mathrm{overlap}(\upsilon_0,\upsilon_1)|$. In this step, the properties are traversed in the descending order wrt morphisms. If the overlap of suffix and prefix is preserved by a property $\StringProperty(\sigma_i)$, then all the properties given by $\sigma_j\prec\sigma_i$ are not considered any more.

\textbf{Alphabetic argument:} Given an element of a compatible with $\theta$ property whose factors' alphabet is of the size $k$, $n\geq k$. In this step, the properties are considered in the ascending order wrt the morphisms.

The unary bounds reduction can result in collapsing the whole string object to $\bot$, in case when the resulting lower bound in an element of $\StringProperty(\theta)$ exceeds the upper bound.

The overall reduction of a string object $\rho_{\StringObjectDom}$ is a composition of the three schemes described previously in this section.
\begin{itemize}
\item The standalone reduction of all properties in binary alphabets is applied first.
\item Then follows the universal algorithm of lower bounds cross-reduction.
\item Finally, the unary bounds reduction is applied, and the updated unary intervals are reduced.
\end{itemize}

The following lemma verifies that elements of $\StringObjectDom$ defined over the fixed set of string properties form a valid lattice wrt $\rho_{\StringObjectDom}$.

\begin{lemma}\label{Lemma:lattice}

Given abstract $\absval_1,\absval_2\in\StringObjectDom$ defined on the same set of string properties, and the reduction function $\rho_{\StringObjectDom}$:
\begin{itemize}
\item $\rho_{\StringObjectDom}\bigl(\absval_1\suprcart \absval_2\bigr) = \rho_{\StringObjectDom}\bigl(\absval_1\bigr)\suprcart\rho_{\StringObjectDom}\bigr(\absval_2\bigr)$;
\item $\rho_{\StringObjectDom}\bigl(\absval_1\infncart \absval_2\bigl)\preceq \rho_{\StringObjectDom}\bigl(\absval_1\bigr)$;
\item If $\absval_i$ are reduced, then 

\qquad $\rho_{\StringObjectDom}\biggerl\rho_{\StringObjectDom}\bigl(\absval_1\infncart \absval_2\bigr)\infncart \absval_3\biggerr = \rho_{\StringObjectDom}\biggerl\absval_1\infncart\rho_{\StringObjectDom}\bigl(\absval_2\infncart\absval_3\bigr)\biggerr$.
\end{itemize}
\end{lemma}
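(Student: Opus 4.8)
The three displayed identities are precisely the absorption and associativity conditions isolated for reduced products in Sect.~\ref{Sect:prelim}, so establishing them shows that $\StringObjectDom$ is a lattice, and the plan is to verify each of them directly. First I would dispose of the components that are already settled: every $\Bool$ factor is a two-element lattice, and each unary property --- in particular the length --- is an element of $\Bool\tcartesian(\mathcal{LB}\treduced\mathcal{UB})$, whose reduced product is exactly the one treated in Example~\ref{Example::IntIntervalReduced}. Hence it remains to control the non-unary lower-bound part $\Point\treduced(\Prefix\tcartesian\Suffix)\treduced\FactorCode$ of each property together with the crossover propagation between properties. Throughout I would exploit three facts about the fixed three-stage algorithm: $\rho_{\StringObjectDom}$ is idempotent ($\rho_{\StringObjectDom}\circ\rho_{\StringObjectDom}=\rho_{\StringObjectDom}$), it preserves concretisations, and each of its stages is $\preceq$-decreasing.

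For the first identity I would argue that the Cartesian join of two reduced objects is already reduced, so that no outer reduction is needed on the right-hand side. The key observation is that $\suprcart$ over-approximates set union, whence any word unavoidable in $\concr{\absval_1\suprcart\absval_2}$ is unavoidable in both $\concr{\absval_1}$ and $\concr{\absval_2}$; for reduced $\absval_i$ such a word is therefore a subword of a factor already recorded in each $\FactorCode$ component, and by maximality it survives in $\supr_{\FactorCode}$. The same generalisation principle governs common prefixes and suffixes and the unary bounds, where the joined lower bound $\min$ cannot drop below the bound forced by the common factors. Thus $\rho_{\StringObjectDom}(\absval_1)\suprcart\rho_{\StringObjectDom}(\absval_2)$ is reduced, and I would finish the identity by checking that reducing the operands before joining does not change the reduced join, i.e.\ that $\supr_{\FactorCode}$, $\supr_{\Prefix}$, $\supr_{\Suffix}$ and the unary join extract the same common information from an object as from its concretisation-equal reduction. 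This pre-reduction invariance is the one delicate point here.

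The second identity I would then deduce from the first. Setting $u=\rho_{\StringObjectDom}(\absval_1\infncart\absval_2)$ and $v=\rho_{\StringObjectDom}(\absval_1)$, both reduced, the reduced order $u\preceq v$ means $\rho_{\StringObjectDom}(u\suprcart v)=v$; by the first identity together with idempotence this collapses to the pointwise statement $u\suprcart v=v$, that is, $u$ lies below $v$ componentwise. Since $\concr{u}=\concr{\absval_1}\cap\concr{\absval_2}\subseteq\concr{\absval_1}=\concr{v}$ and the reduction saturates each property with the factors, prefixes, suffixes and bounds forced by its own concretisation (Lemma~\ref{Lemma:standaloneReduction}), the $\Point$, $\Prefix$, $\Suffix$ and $\FactorCode$ components of $u$ are at least as specific as those of $v$, while the length comparison is the monotonicity of the interval domain of Example~\ref{Example::IntIntervalReduced}.

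The third identity, associativity of the reduced meet on reduced operands, is where I expect the real work. Because $\infncart$ is pointwise and reduction preserves concretisations, both sides concretise to $\concr{\absval_1}\cap\concr{\absval_2}\cap\concr{\absval_3}$ and both are reduced; the task is to show that the deterministic cross-reduction reaches the same normal form no matter how the operands are paired. The meet merely accumulates constraints --- union of factor codes, intersection of the position constraints on prefixes and suffixes, and union of the unary bounds --- after which the cross-reduction propagates unavoidable factors upward, downward and across gaps, possibly collapsing to $\bot$ at the unary step. The hard part will be a confluence argument: I would use Lemmas~\ref{Lemma:Propagation} and~\ref{Lemma::sharing} to show that the set of words that may move between any two properties (those preserved by the relevant morphisms, together with the gap factors $\delta^{k_1}\omega'\delta^{k_2}$) is determined by the combined constraints of all three operands and not by the grouping, so that the propagation rules are monotone and commute and admit a common fixpoint independent of the association; Lemma~\ref{Lemma:ReducedLength} identifies the regime in which the length--value cross-reduction is already trivial and hence order-free. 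The collapse to $\bot$ is symmetric in $\absval_1,\absval_2,\absval_3$, being triggered either by contradictory position constraints or by a lower bound exceeding an upper bound. Verifying that the final unary step --- the nonoverlap and alphabetic lower bounds --- is computed from this order-independent saturated factor information, and therefore yields the same bounds on both sides, is the most technical obstacle and the part I would treat most carefully.
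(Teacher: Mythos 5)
Your plan hits a genuine obstruction exactly at the point you flag as ``the one delicate point'': pre-reduction invariance is \emph{false}, so your step for the first identity would fail. Concretely, let $\Sigma=\{a,b\}$ and let both objects carry only value and length properties. Give $\absval_1$ the factor code $\{abaa,\,bbaa\}$ (this is basically reduced but not $\rho_{\StringObjectDom}$-reduced: $aab$ is unavoidable, as in the example following Lemma~\ref{Lemma:standaloneReduction}), and give $\absval_2$ the factor code $\{aab\}$. The Cartesian join $\absval_1\suprcart\absval_2$ has factor code $\{aa,\,ab\}$ (the maximal common substrings), and $aab$ is \emph{avoidable} in words containing both $aa$ and $ab$ (e.g.\ $abbaa$), so no subsequent reduction can recover it; meanwhile $\rho_{\StringObjectDom}(\absval_1)\suprcart\rho_{\StringObjectDom}(\absval_2)$ has factor code $\{aab\}$. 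The two sides even have different concretisations, $\concr{\rho_{\StringObjectDom}(\absval_1\suprcart\absval_2)}\supsetneq\concr{\rho_{\StringObjectDom}(\absval_1)\suprcart\rho_{\StringObjectDom}(\absval_2)}$, so the first identity read literally for arbitrary operands is not provable at all. It is correct only under the reading that elements of $\StringObjectDom$ are the $\rho$-reduced representatives of the quotient carrier of the reduced product; under that reading your step B is vacuous and your step A (the join of two reduced objects is already reduced) is the entire content of the first bullet --- and that part of your argument (unavoidable words in the joined concretisation are unavoidable in both operands, hence recorded, hence survive as common substrings) is sound. Your derivation of the second identity survives this correction, since you apply the first identity only to $u$ and $v$, which are reduced by construction; but you should weaken the claim that reduction ``saturates each property with the \dots bounds forced by its own concretisation'': saturation with respect to the whole object's concretisation is exactly the perfect reduction the paper proves NP-hard, and what you actually need (and what Lemma~\ref{Lemma:standaloneReduction} gives) is saturation with respect to unavoidable words of each property's own solution set.

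For comparison, the paper's own proof is only a three-sentence sketch: the first two identities plus idempotency yield absorption, commutativity holds by definition, and associativity of the reduced meet is attributed to associativity of set union (for factor codes and the alphabet-cardinality length update) together with the fact that prefix and suffix lengths can only grow under the meet. So for the first two bullets the paper offers essentially nothing, and your arguments there (modulo the correction above) are new content; for the third bullet your confluence-of-propagation plan and the paper's appeal to set-union associativity aim at the same order-independence of the accumulated constraints, with the paper's route more elementary and yours more explicit about why the cross-reduction fixpoint, the symmetric collapse to $\bot$, and the final unary step do not depend on how the three operands are grouped.
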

The first two equalities together with idempotency of $\rho_{\StringObjectDom}$ provide the absorption rules.
The commutativity of all lattice operations holds by their definition. 
The associativity of the reduced meet operation is guaranteed by the associativity of set union (when the length is updated through tracking alphabet cardinality), and by the fact that prefixes' and suffixes' lengths can only increase via the meet operation.

\subsection{Processing String Objects with Distinct Properties}
\label{Sect:distinct}

While we rely on encoding ordering of the letters in $\Sigma$, we assume that $\empt$ is less than any letter.

\begin{definition}
Let $\sigma_1$ and $\sigma_2$ be standard morphisms. Then, given $\delta\in \Sigma_{\delta_1}(\sigma_1)$, $\delta\in\Sigma_{\delta_2}(\sigma_2)$,  $\min(\sigma_1,\sigma_2)$ maps $\delta$ to $\min(\delta_1,\delta_2)$.

A supremum of morphisms $\sigma_1\supr\sigma_2$ is defined as $\min(\sigma_1,\sigma_2)^*$, where $\sigma^*$ is a fixed point of $\sigma$. 
\end{definition}

The morphism join operations satisfies the commutativity and associativity property, hence, morphisms and string properties form semilattices with finite ascending chains. We assume that the top morphism mapping everything to $\empt$ is included in all these semilattices, but do not specify it explicitly, since the corresponding property value is always trivial.

Let arbitrary $\absval_1,\absval_2\in\StringObjectDom$ be given. Given the morphism semilattices $\Lang(\StringProperty)_1$ and $\Lang(\StringProperty)_2$ in which $\absval_1$ and $\absval_2$ are defined, respectively, their combined morphism semilattice is defined as  $\{\eta\mid \eta=\sigma_i\supr\theta_j,\sigma_i\in\Lang(\StringProperty)_1\logand \theta_j\in\Lang(\StringProperty)_2\}$.

After the combined morphism semilattice is specified, $\absval_1\supr\absval_2$ and $\absval_1\infn\absval_2$ are computed as described in Sect.~\ref{Sect:reduction}. After the computation, all the properties whose values are exactly the images of properties defined with lesser morphisms are removed as redundant.

Examples of lattice operations on abstract values defined via distinct morphisms are given in Fig.~\ref{Fig:combined}.

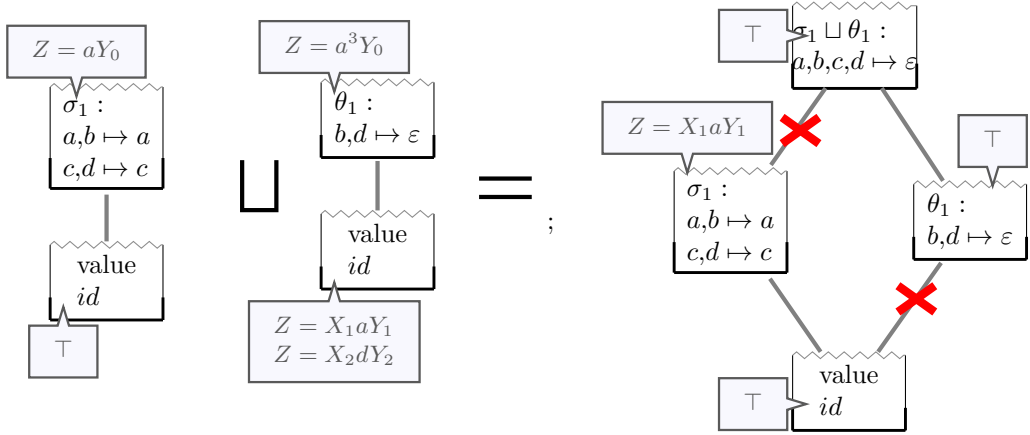
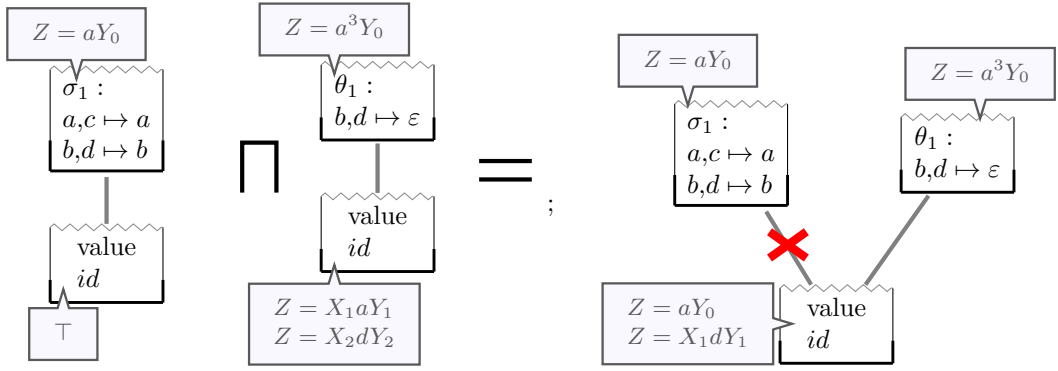
\begin{figure*}[!htb]
\centering
\begin{subfigure}[t]{\textwidth}
\begin{tabular}{ccc}
\begin{tabular}{c}
\begin{tikzpicture}[scale=0.7]

\draw (-1,-1.75) pic{eqnode ={val,$\begin{array}{l}\text{value}\\id\end{array}$,,2,1}};

\node (valval) at (-0.75,-2.35) [draw=black!30!gray,text=black!30!gray,line width=0.2ex,rectangle callout,shape border rotate=0, minimum height=3ex,inner sep = 1ex,fill=myBlue!20!white, callout relative pointer={(0,0.25)},callout pointer width=1ex] {\small $\begin{array}{l}\top\end{array}$};

\draw (-1,0.75) pic{eqnode ={s1,$\begin{array}{l}\sigma_1:\\a{,}b\mapsto a\\ c{,}d\mapsto c\end{array}$,,2,1.35}};

\node (vals1) at (-0.5,3.35) [draw=black!30!gray,text=black!30!gray,line width=0.2ex,rectangle callout,shape border rotate=0, minimum height=3ex,inner sep = 1ex,fill=myBlue!20!white, callout relative pointer={(0,-0.25)},callout pointer width=1ex] {\small $\begin{array}{l}Z=aY_0\end{array}$};

\draw[-,ultra thick,gray][-] (val) -> (s1);

\end{tikzpicture} \\ 
\begin{tikzpicture}
\node at (0,0) {};
\end{tikzpicture} \end{tabular}
&\;\;\begin{tabular}{c}
\begin{tikzpicture}[scale=0.7]

\draw[ultra thick, black] (-2.5,0.7) -- (-2.5,-0.25) -- (-1.85,-0.25) -- (-1.85,0.7);

\draw[ultra thick, black] (2,0.35) -- (3,0.35);
\draw[ultra thick, black] (2,0) -- (3,0);

\draw (-1,-1.75) pic{eqnode ={val,$\begin{array}{l}\text{value}\\id\end{array}$,,2,1}};

\node (valval) at (-0.75,-2.75) [draw=black!30!gray,text=black!30!gray,line width=0.2ex,rectangle callout,shape border rotate=0, minimum height=3ex,inner sep = 1ex,fill=myBlue!20!white, callout relative pointer={(0,0.25)},callout pointer width=1ex] {\small $\begin{array}{l}Z=X_1 a Y_1\\ Z=X_2 d Y_2\end{array}$};

\draw (-1,0.75) pic{eqnode ={s1,$\begin{array}{l}\theta_1:\\b{,}d\mapsto\empt\end{array}$,,2,1}};

\node (vals1) at (-0.75,2.85) [draw=black!30!gray,text=black!30!gray,line width=0.2ex,rectangle callout,shape border rotate=0, minimum height=3ex,inner sep = 1ex,fill=myBlue!20!white, callout relative pointer={(0,-0.25)},callout pointer width=1ex] {\small $\begin{array}{l}Z=a^3 Y_0\end{array}$};

\draw[-,ultra thick,gray][-] (val) -> (s1);

\end{tikzpicture}\\
\begin{tikzpicture}
\node at (0,0) {};
\end{tikzpicture}\end{tabular};
&\;\begin{tabular}{c}
\begin{tikzpicture}[scale=0.7]

\draw (-1,-3) pic{eqnode ={val,$\begin{array}{l}\text{value}\\id\end{array}$,,2,1}};

\node (valval) at (-1.75,-2.5) [draw=black!30!gray,text=black!30!gray,line width=0.2ex,rectangle callout,shape border rotate=0, minimum height=3ex,inner sep = 1ex,fill=myBlue!20!white, callout relative pointer={(0.25,0)},callout pointer width=1ex] {\small $\begin{array}{l}\top\end{array}$};

\draw (-3.25,0) pic{eqnode ={s1,$\begin{array}{l}\sigma_1:\\a{,}b\mapsto a \\ c{,}d\mapsto c\end{array}$,,2,1.35}};

\draw (1.3,0.25) pic{eqnode ={t1,$\begin{array}{l}\theta_1:\\b{,}d\mapsto \empt\end{array}$,,2,1}};

\draw (-1,3.5) pic{eqnode ={st1,$\begin{array}{l}\sigma_1\supr\theta_1:\\a{,}b{,}c{,}d\mapsto\empt\end{array}$,,2.2,1}};

\node (valst1) at (-1.75,4.5) [draw=black!30!gray,text=black!30!gray,line width=0.2ex,rectangle callout,shape border rotate=0, minimum height=3ex,inner sep = 1ex,fill=myBlue!20!white, callout relative pointer={(0.25,0)},callout pointer width=1ex] {\small $\begin{array}{l}\top\end{array}$};

\node (vals1) at (-3,2.75) [draw=black!30!gray,text=black!30!gray,line width=0.2ex,rectangle callout,shape border rotate=0, minimum height=3ex,inner sep = 1ex,fill=myBlue!20!white, callout relative pointer={(0,-0.25)},callout pointer width=1ex] {\small $\begin{array}{l}Z=X_1aY_1\end{array}$};

\node (valt1) at (2.8,2.5) [draw=black!30!gray,text=black!30!gray,line width=0.2ex,rectangle callout,shape border rotate=0, minimum height=3ex,inner sep = 1ex,fill=myBlue!20!white, callout relative pointer={(0,-0.25)},callout pointer width=1ex] {\small $\begin{array}{l}\top\end{array}$};

\draw[-,ultra thick,gray][-] (val) -> (s1);
\draw[-,ultra thick,gray][-] (s1) -> (st1);
\draw[-,ultra thick,gray][-] (t1) -> (st1);
\draw[-,ultra thick,gray][-] (val) -> (t1);

\draw[-,line width=0.75ex,red][-] (1,-0.25) -- (1.7,-0.75);
\draw[-,line width=0.75ex,red][-] (1,-0.75) -- (1.7,-0.25);

\draw[-,line width=0.75ex,red][-] (-0.5,2.45) -- (-1.2,2.95);
\draw[-,line width=0.75ex,red][-] (-0.5,2.95) -- (-1.2,2.45);

\end{tikzpicture}
\end{tabular}
\end{tabular}
\caption{Joining string objects. Properties $\StringProperty(\theta_1)$ and $\StringProperty(\sigma_1\supr\theta_1)$ are deleted as trivial.}
\end{subfigure}

\medskip
\begin{subfigure}[t]{\textwidth}
\begin{tabular}{ccc}
\begin{tabular}{c}
\begin{tikzpicture}[scale=0.7]

\draw (-1,-1.75) pic{eqnode ={val,$\begin{array}{l}\text{value}\\id\end{array}$,,2,1}};

\node (valval) at (-0.75,-2.35) [draw=black!30!gray,text=black!30!gray,line width=0.2ex,rectangle callout,shape border rotate=0, minimum height=3ex,inner sep = 1ex,fill=myBlue!20!white, callout relative pointer={(0,0.25)},callout pointer width=1ex] {\small $\begin{array}{l}\top\end{array}$};

\draw (-1,0.75) pic{eqnode ={s1,$\begin{array}{l}\sigma_1:\\a{,}c\mapsto a\\ b{,}d\mapsto b\end{array}$,,2,1.35}};

\node (vals1) at (-0.5,3.35) [draw=black!30!gray,text=black!30!gray,line width=0.2ex,rectangle callout,shape border rotate=0, minimum height=3ex,inner sep = 1ex,fill=myBlue!20!white, callout relative pointer={(0,-0.25)},callout pointer width=1ex] {\small $\begin{array}{l}Z=aY_0\end{array}$};

\draw[-,ultra thick,gray][-] (val) -> (s1);

\end{tikzpicture} \\ 
\begin{tikzpicture}
\node at (0,0) {};
\end{tikzpicture} \end{tabular}
&\;\;\begin{tabular}{c}
\begin{tikzpicture}[scale=0.7]

\draw[ultra thick, black] (-2.5,-0.25) -- (-2.5,0.7) -- (-1.85,0.7) -- (-1.85,-0.25);

\draw[ultra thick, black] (2,0.35) -- (3,0.35);
\draw[ultra thick, black] (2,0) -- (3,0);

\draw (-1,-1.75) pic{eqnode ={val,$\begin{array}{l}\text{value}\\id\end{array}$,,2,1}};

\node (valval) at (-0.75,-2.75) [draw=black!30!gray,text=black!30!gray,line width=0.2ex,rectangle callout,shape border rotate=0, minimum height=3ex,inner sep = 1ex,fill=myBlue!20!white, callout relative pointer={(0,0.25)},callout pointer width=1ex] {\small $\begin{array}{l}Z=X_1 a Y_1\\ Z=X_2 d Y_2\end{array}$};

\draw (-1,0.75) pic{eqnode ={s1,$\begin{array}{l}\theta_1:\\b{,}d\mapsto\empt\end{array}$,,2,1}};

\node (vals1) at (-0.75,2.85) [draw=black!30!gray,text=black!30!gray,line width=0.2ex,rectangle callout,shape border rotate=0, minimum height=3ex,inner sep = 1ex,fill=myBlue!20!white, callout relative pointer={(0,-0.25)},callout pointer width=1ex] {\small $\begin{array}{l}Z=a^3 Y_0\end{array}$};

\draw[-,ultra thick,gray][-] (val) -> (s1);

\end{tikzpicture}\\
\begin{tikzpicture}
\node at (0,0) {};
\end{tikzpicture}\end{tabular};
&\;\begin{tabular}{c}
\begin{tikzpicture}[scale=0.7]

\draw (-1,-3) pic{eqnode ={val,$\begin{array}{l}\text{value}\\id\end{array}$,,2,1}};

\node (valval) at (-2.75,-2.25) [draw=black!30!gray,text=black!30!gray,line width=0.2ex,rectangle callout,shape border rotate=0, minimum height=3ex,inner sep = 1ex,fill=myBlue!20!white, callout relative pointer={(0.25,0)},callout pointer width=1ex] {\small $\begin{array}{l}Z=a Y_0 \\ Z=X_1 d Y_1\end{array}$};

\draw (-3,0) pic{eqnode ={s1,$\begin{array}{l}\sigma_1:\\a{,}c\mapsto a \\ b{,}d\mapsto b\end{array}$,,2,1.35}};

\draw (1.3,0.25) pic{eqnode ={t1,$\begin{array}{l}\theta_1:\\b{,}d\mapsto \empt\end{array}$,,2,1}};

\node (vals1) at (-2.75,2.75) [draw=black!30!gray,text=black!30!gray,line width=0.2ex,rectangle callout,shape border rotate=0, minimum height=3ex,inner sep = 1ex,fill=myBlue!20!white, callout relative pointer={(0,-0.25)},callout pointer width=1ex] {\small $\begin{array}{l}Z=aY_0\end{array}$};

\node (valt1) at (2.8,2.5) [draw=black!30!gray,text=black!30!gray,line width=0.2ex,rectangle callout,shape border rotate=0, minimum height=3ex,inner sep = 1ex,fill=myBlue!20!white, callout relative pointer={(0,-0.25)},callout pointer width=1ex] {\small $\begin{array}{l}Z=a^3 Y_0\end{array}$};

\draw[-,ultra thick,gray][-] (val) -> (s1);
\draw[-,ultra thick,gray][-] (val) -> (t1);

\draw[-,line width=0.75ex,red][-] (-0.5,-0.5) -- (-1.2,-1);
\draw[-,line width=0.75ex,red][-] (-0.5,-1) -- (-1.2,-0.5);

\end{tikzpicture}
\end{tabular}
\end{tabular}
\caption{Meeting string objects. The equation $Z=a Y_0$ is added to the value via propagation. The property $\StringProperty(\sigma_1)$ is deleted, being derived from the updated value.}
\end{subfigure}
\caption{Example of object operations with distinct morphisms semilattices}
\label{Fig:combined}
\end{figure*}

\section{Abstracting String Operations}\label{Sect:operations}

We are focused on analysing computations making use of typical string manipulating functions, such as concatenation, checking inclusion, string replacements, and so on. Most of these functions can be expressed in terms of others, so we choose the following function basis:
\begin{itemize}
\item string concatenation $\oconcat{s_1}{s_2}$,
\item finding an index of the first occurrence of string $s_2$ in string $s_1$, $\oindex{s_1}{s_2}$,
\item taking a suffix of a string $s_1$ from a given position $\osubstr{s_1}{n}$,
\item and the replacement method $\oreplace{s_1}{s_2}{s_3}$, which takes out the first occurrence of $s_2$ in $s_1$ and inserts $s_3$ instead.
\end{itemize}

\begin{figure}[!htb]
There $M$ is an interpretation, mapping identifiers to concrete values. For the sake of brevity, $\oxbr{s_1}_M = \omega_1$, $\oxbr{s_2}_M = \omega_2$, $\oxbr{s_3}_M = \omega_3$, $\oxbr{n}_M = N$.

$$\footnotesize
\begin{array}{rcl}
\oxbr{\oconcat{s_1}{s_2}}_{M}&=&\omega_1\omega_2 \\ 
\oxbr{\osubstr{s_1}{n}}_{M}&=&\begin{cases} \omega_1,\;N\leq 0\\
\upsilon_2,\;\omega_1 = \upsilon_1 \upsilon_2,\;|\upsilon_1|=N,\;0\leq N\leq |\omega_1|-1\\
\empt,\;N\geq|\omega_1|\end{cases} \\ 
\oxbr{\oindex{s_1}{s_2}}_M&=& \begin{cases}|\upsilon_1|,\;\omega_1 = \upsilon_1\omega_2\upsilon_2\\-1,\;\text{otherwise}\end{cases}\\
\oxbr{\oreplace{s_1}{s_2}{s_3}}_M&=& \begin{cases}omega_1,\;omega_1 \notin\Sol_Z(\wequat{Z}{X \omega_2 Y})\\
\omega_3\omega_1,\;\omega_2=\empt\\ \upsilon_1\omega_3\upsilon_2,\;\omega_1 = \upsilon_1\omega_2\upsilon_2,\omega_2 = \omega'_2\delta, \upsilon_1\omega'_2\notin\Sol(\wequat{Z}{X\omega_2 Y})\end{cases}\\
\oxbr{\ochar{s_1}{n}}_M&=& \begin{cases}\gamma,\;\omega_1 = \upsilon_1\gamma\upsilon_2, |\upsilon_1|=N,\; 0\leq N\leq |\omega_1|-1\\\empt,\;\text{otherwise}\end{cases}
\end{array}
\vspace*{-1ex}
$$
\caption{Concrete semantics of basic operations in string domain}
\label{Fig::concsem}
\end{figure}

The concrete semantics of the operations is presented in Fig.~\ref{Fig::concsem}. We also add operation $\mathtt{charAt}$ there, since it is used in our examples, although $\mathtt{charAt}$ can be considered as a composition of the basic operations. Other operations are omitted for the sake of brevity.

The string functions are mainly monotone with respect to morphisms. Namely, given any concrete $\omega_1$, $\omega_2$, and $\omega_3$, and $n\in\mathbb{N}$:
\begin{itemize}
\item $\sigma(\oconcat{\omega_1}{\omega_2})=\oconcat{\sigma(\omega_1)}{\sigma(\omega_2)}$;
\item if $\Sigma_{\empt}(\sigma)=\varnothing$, then $\sigma(\osubstr{\omega_1}{n}) = \osubstr{\sigma(\omega_1)}{n}$. In the case of erasing morphisms, $\sigma(\osubstr{\omega_1}{n})$ ends with $\osubstr{\sigma(\omega_1)}{n}$;
\item If $\oindex{\sigma(\omega_1)}{\sigma(\omega_2)}<0$, that is, the morphic image of $\omega_2$ does not occur in the morphic image of $\omega_1$ as a substring, then $\oindex{\omega_1}{\omega_2}$ is also negative. If $\oindex{\omega_1}{\omega_2}\geq 0$, then $\biggerl\oindex{\sigma(\omega_1)}{\sigma(\omega_2)}\leq \oindex{\omega_1}{\omega_2}\biggerr$. 
\end{itemize}
The replacement method is non-monotone. Still, it preserves a morphic image of a suffix unaffected with the replacement.

The mentioned feature of the string operations can be used to make string properties cross-reduction lightweight. Indeed, being forced to perform the complete reduction procedure after any operation on $\StringObjectDom$ domain looks like a significant complexity overhead. Given arguments that are already reduced, most of the operations traversing the whole set of lower bounds require neither  an additional propagation upwards, nor a downward propagation of preserved factors. 

\begin{example}\label{Example::concat}
Given two non-empty elements $\langle\bot, \infopat{\absval_1}\rangle$, $\langle\bot, \infopat{\absval_2}\rangle$ of a string property lattice $\Bool\tcartesian(\Point\treduced(\Prefix\tcartesian\Suffix)\treduced\FactorCode)$, their concatenation is defined as follows\footnote{The non-empty lower bounds are chosen there for simplicity: when the lower bounds can be concretised to $\empt$, the concatenation algorithm becomes more involved.}. 
\begin{itemize}
\item  if both elements of $\Point$ domain are non-top, concatenate them and return the corresponding constant equation; 
\item otherwise, take the prefix element from $\infopat{\absval_1}$ making the prefix from $\infopat{\absval_2}$ a factor, the suffix element from $\infopat{\absval_2}$, making the $\infopat{\absval_1}$ suffix a factor, and merge the factor codes.
\end{itemize}
All these steps can be done uniformly for all lower bounds, and all the preserved subwords in them are retained in the result of the operation without any additional reduction.
\end{example}

Example~\ref{Example::concat} shows that the concatenation is computed naturally on abstract string objects. 

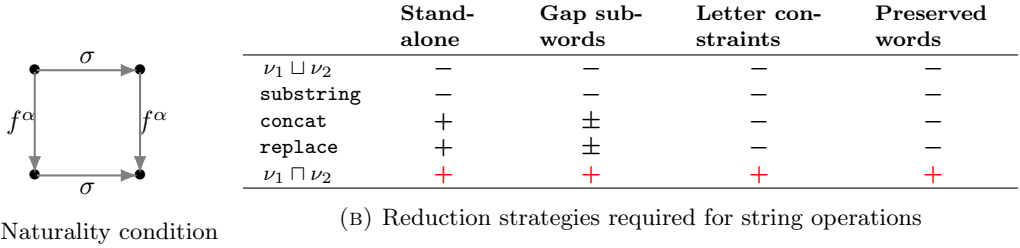
\begin{figure}[htb]
\begin{subfigure}[c]{0.27\textwidth}
\centering
\bigskip
\bigskip
\medskip
\begin{tikzpicture}[scale=0.7]
\node[inner sep=-0.75ex] (val) at (0,2) {$\bullet$};
\node[inner sep=-0.75ex] (h1) at (0,0) {$\bullet$};
\node[inner sep=-0.75ex] (s1) at (2,2) {$\bullet$};
\node[inner sep=-0.75ex] (sh) at (2,0) {$\bullet$};
 
\node (1) at (-0.25,1) {$f^{\alpha}$};
\node (2) at (1,-0.25) {$\sigma$};
\node (3) at (2.25,1) {$f^{\alpha}$};
\node (2) at (1,2.25) {$\sigma$};
 
\draw[-{Latex},thick,gray] (val) -> (h1);
\draw[-{Latex},thick,gray] (val) -> (s1);
\draw[-{Latex},thick,gray] (s1) -> (sh);
\draw[-{Latex},thick,gray] (h1) -> (sh);
 
\end{tikzpicture}
\caption{Naturality condition}
\end{subfigure}\hfill
\begin{subfigure}[c]{0.73\textwidth}
\centering\footnotesize
\begin{tabular}{lcccc}\bfseries
 &\bfseries\begin{tabular}{l}\footnotesize{Stand-}\,\;\\{alone}\end{tabular} & \bfseries\begin{tabular}{l}{Gap sub-}\\words\,\end{tabular} & \bfseries\begin{tabular}{l}{Letter con-}\\{straints}\,\end{tabular} & \bfseries\begin{tabular}{l}{Preserved}\,\\words\end{tabular}
\\\hline
$\,\,\nu_1 \supr \nu_2$ & $\fancym$ & $\fancym$ & $\fancym$ & $\fancym$\\
$\,\mathtt{substring}$ & $\fancym$ & $\fancym$ & $\fancym$ & $\fancym$\\
$\,\mathtt{concat}$& $\fancyp$ & $\fancypm$ & $\fancym$ & $\fancym$\\
$\,\mathtt{replace}$ & $\fancyp$ & $\fancypm$ & $\fancym$ & $\fancym$\\
$\,\,\nu_1 \infn \nu_2$ & $\textcolor{red}{\fancyp}$ & $\textcolor{red}{\fancyp}$ & $\textcolor{red}{\fancyp}$ & $\textcolor{red}{\fancyp}$
\\\hline\end{tabular}
\caption{Reduction strategies required for string operations}
\end{subfigure}
\caption{Naturality of abstract string operations}
\label{Diag::adequacy}
\end{figure}

\begin{definition}\label{def::adequacy}
Given an abstract string operation $f^{\alpha}$, we call it natural if, given any standard $\sigma_1$ and $\sigma_2$, $\sigma_2 \circ f^{\alpha} = f^{\alpha}\circ\sigma_2$ on any string lower bound defined by factors preserved by $\sigma_2$ wrt $\sigma_1$. 
\end{definition}

If a string operation is natural, i.e. the diagram given in Fig.~\ref{Diag::adequacy}, left part, commutes on the properties preserved by $\sigma_2$, then we are guaranteed that the upward propagation and the preserved factor propagation in the cross-reduction are already done when computing $f^{\alpha}$ on reduced arguments. Note that the following rule also holds by default in the case of the natural string operations: \textit{given the longest string in the factor code of a property, the string object length cannot be less than its length}. 

In the table in Fig.~\ref{Diag::adequacy}, right, we show whether basic string operations require a certain sort of explicit reduction. The sign $\fancypm$ marking the gap subwords propagation in concatenation and replacement operations points out that the subwords can be propagated from certain positions only, i.e. from the bounds where the concatenation occurs. Most of the abstract string operations are therefore lightweight in terms of reduction; the only exception is the meet operation, which is highly unnatural. In the cost of its computational complexity, the meet operation can highly improve preciseness of the abstract analysis, when it is used to create \textit{assume}, i.e. context-based, constraints on string objects~\cite{pagai}. 

\begin{figure}[!htb]
\begin{subfigure}[t]{\textwidth}
\centering
\begin{tabular}{rllr}
... & $\mathtt{/\hspace*{-0.8ex}* x = }\,\top{ */}$ && $x\mapsto\top$\\
1 & \multicolumn{2}{l}{$\mathtt{y = x\,?\;}\quote\hspace*{-0.5ex}\mathtt{<\hspace{-0.65ex}tag\hspace{-0.65ex}>}\hspace*{0.1ex}\quote\mathtt{ +\,x \,:\;}\quote\quote\mathtt{;}$} & $\objval{y}\mapsto\{\wequat{Z}{\hspace*{-0.5ex}\mathtt{<\hspace{-0.65ex}tag\hspace{-0.65ex}>}Y}\},$\\
 &&& $\objlen{y}\mapsto[0]\cup[5;+\infty)$\\
2 & \textbf{let }$\mathtt{z = \quote\hspace*{-0.3ex}?\hspace*{0.3ex}\quote}\mathtt{;}$ &\multicolumn{2}{r}{\qquad $\objval{z}\mapsto \mathbb{?},\objlen{z}\mapsto 1$}\\
3 & \textbf{if }$\mathtt{(y)}$ &\multicolumn{2}{r}{\qquad\qquad\qquad\qquad\qquad(Yields guard condition $\objlen{y}>0$)}\\
4 & \quad $\mathtt{z = y.charAt(4)}\mathtt{;}$ & \multicolumn{2}{r}{$\objlen{y}\mapsto [5;+\infty),\objlen{z}\mapsto 1$}\\
5 & \multicolumn{2}{l}{\textbf{if }$\mathtt{(!z)}$}& (never holds)\\
6 & \multicolumn{2}{l}{\quad\textbf{return }Error;}& (is unreachable)\\
7 & \multicolumn{2}{l}{\textbf{else}{\quad\textbf{return }\texttt{z};}}& ($\objval{z}=\top, \objlen{z}=1$)\\\end{tabular}
\caption{Non-emptiness length condition}
\end{subfigure}

\begin{subfigure}[t]{\textwidth}
\medskip
\centering
\begin{tabular}{rllr}
... & $\mathtt{/\hspace*{-0.6ex}* x = }\,\quote a\quote, \mathtt{y = }\,\quote b\quote{ */}$ && $x\mapsto\mathbf{a},y\mapsto\mathbf{b}$\\
1 & \multicolumn{2}{l}{\textbf{let} $\mathtt{z = x+y;}$} & $z\mapsto\mathbf{ab}$\\
2 & \textbf{if }$(\top)$ &\\
\multicolumn{4}{l}{$\,\mathtt{/\hspace*{-0.2ex}* }$ $\mathbf{ab}\in\concr{z}$, having no free $\mathbf{a}$ occurrence $\mathtt{*/}$}\\
3 & \quad $\mathtt{z = x + \top + z;}$ &\multicolumn{2}{r}{\;$\objval{z}\mapsto\{Z=\mathbf{a}Y, Z=X\mathbf{ab}\},\objlen{z}\mapsto[2;+\infty)$}\\
\multicolumn{4}{l}{$\,\mathtt{/\hspace*{-0.2ex}* }$ any element of $\concr{z}$ includes an $\mathbf{a}$ occurrence not inside $\mathbf{ab}$ $\mathtt{*/}$}\\
4 & $\mathtt{z = x + z;}$ &\multicolumn{2}{r}{\qquad\quad\quad\;$\objval{z}\mapsto\{Z=\mathbf{aa}Y, Z=X\mathbf{ab}\},\objlen{z}\mapsto[3;+\infty)$}\\
5 & \multicolumn{2}{l}{\textbf{while }$\mathtt{(\oindex{z}{x+y}\geq 0)}$}& (holds infinitely)\\
6 & \multicolumn{2}{l}{\quad$\mathtt{z=\oreplace{z}{x+y}{x+\quote\_\quote + y};}$}
& $\objval{z}\mapsto\{Z=\mathbf{a\_}Y{,}Z=X\mathbf{ab}\}, $\\
&&\multicolumn{2}{r}{$\objlen{z}\mapsto[+\infty]$}\\
\end{tabular}
\caption{Proving that the silly sanitizer loops infinitely}
\end{subfigure}

\begin{subfigure}[t]{\textwidth}
\medskip
\centering
\begin{tabular}{rllr}
... & $\mathtt{/\hspace*{-0.6ex}* x = }\,\quote\hspace*{-0.3ex}fstTag\quote, \mathtt{y = }\,\quote\hspace*{-0.6ex}secondTag\quote{ */}$ &\multicolumn{2}{r}{$x\mapsto\mathbf{fstTag},y\mapsto\mathbf{secondTag}$}\\
1 & \multicolumn{2}{l}{\textbf{let} $\mathtt{z = \quote\hspace*{-0.75ex}<\quote+x+\quote\hspace*{-0.75ex}>\quote+\top+\quote\hspace*{-0.75ex}</\quote + x +\quote\hspace*{-0.75ex}>\quote;}$} & $\StringProperty_{\sigma}(z)\mapsto\{\wequat{Z}{\mathbf{<>}Y},\wequat{Z}{X\mathbf{<>}}\}$\\
&\multicolumn{3}{r}{$\objval{z}\mapsto\{\wequat{Z}{\mathbf{<fstTag>}Y},\wequat{Z}{X\mathbf{</fstTag>}}\},\objlen{z}\mapsto[17;+\infty)$}\\
2 & \textbf{while }$(\top)$ & \\
3 & \multicolumn{2}{l}{\quad $\mathtt{z = \quote\hspace*{-0.75ex}< \quote + y + \quote\hspace*{-0.75ex}>\quote + z;}$} &$\StringProperty_{\sigma}(z)\mapsto\{\wequat{Z}{\mathbf{<>}Y},\wequat{Z}{X\mathbf{<>}}\}$\\
&\multicolumn{3}{r}{$\objval{z}\mapsto\{\wequat{Z}{\mathbf{<}Y},\wequat{Z}{X\mathbf{</fstTag>}}\},\objlen{z}\mapsto [17;+\infty)$}\\\
4 & \multicolumn{2}{l}{$\mathtt{w = \quote\hspace*{-0.75ex}<\quote + (\top\;? \;x\;:\;y) + \quote\hspace*{-0.75ex}>\quote;}$} & $\StringProperty_{\sigma}(w)\mapsto \mathbf{<>},\objlen{w}\mapsto[8;12)$\\
&\multicolumn{3}{r}{$\objval{w}\mapsto\{\wequat{Z}{\mathbf{<}Y},\wequat{Z}{X\,\mathbf{Tag>}}\}$}\\
5 & $\mathtt{z = \osubstr{z}{1}}$& \multicolumn{2}{r}{$\StringProperty_{\sigma}(z)\mapsto\{\wequat{Z}{\mathbf{>}Y},\wequat{Z}{X\mathbf{>}}\}$}\\
&\multicolumn{3}{r}{$\objval{z}\mapsto\{\wequat{Z}{X\mathbf{</fstTag>}}\},\objlen{z}\mapsto [16;+\infty)$}\\\
6 & {\textbf{if }$(\mathtt{\oindex{z}{w} == 0)}$ }& \multicolumn{2}{r}{(never holds: $\StringProperty_{\sigma}(w)$ mismatches with\;}\\
&\multicolumn{3}{r}{ a prefix in $\StringProperty_{\sigma}(z)$)}\\
7 & \multicolumn{2}{l}{\quad\textbf{return }Error;}& (is unreachable)\\
\end{tabular}
\caption{Utilizing string property $\sigma(\delta)=\begin{cases}\delta, \delta\in\{<,>\}\\ \empt,\text{ otherwise}\end{cases}$}
\end{subfigure}
\caption{Some program invariants discovered by the abstract interpretation over the string object lattice. Constant string are given in bold inside the abstract values, constant string properties $\nu\mapsto\{\wequat{Z}{\omega}\}$ are shortcut to $\nu\mapsto\omega$.}
\label{Fig::Prog}\end{figure}

Now let us consider three small \JS{} programs given in Fig.~\ref{Fig::Prog}, and the abstract values computed along their traces. 

\medskip
\paragraph{\textit{Example }(\textsc{A}): split length interval meets with guard conditions.} In the line 1, a ternary conditional expression checks whether $\mathtt{x}$ is empty. If it is not empty, $\mathtt{x}$ value is supplied with a proper tag. Then the line 4 looks at 5-th character of the string, and if it is empty, an error is returned. Since string objects process string lengths like other properties, they take into account the empty string possibility separately, and the split interval $[0]\cup[5;+\infty)$ is constructed for capturing $\mathtt{x}$ values' possible length.   

Another interesting detail in this abstract trace is the guard condition propagated to the line 4 from the line 3. In the line 3, $\mathtt{y}$ is tested for non-emptiness. Hence, in the line 4, the length of $\mathtt{y}$ cannot be equal to $0$ any more.

\medskip
\paragraph{\textit{Example }(\textsc{B}): silly sanitizer looping forever.} The loop in lines 5-6, while seemingly aims at splitting all the words $ab$ by $\_$, always goes into infinite loop, which is proved by the abstract interpretation. Indeed, the prefix and the suffix of the abstract object value given in the line 6 are non-overlapping, hence the string $ab$ is guaranteed to be preserved in all the words in $\concr{z}$. The abstract interpreter sees that the guard condition on loop termination, namely $\oindex{z}{x+y}<0$, never holds, and returns $z\mapsto\bot$ as the result of the analysis.

\medskip
\paragraph{\textit{Example }(\textsc{C}): letters preserved by a string property can boost analysis preciseness.} While in the line 1, the element of $\StringProperty_{\sigma}$ contains no constraint on $\concr{z}$ not imposed by $\objval{z}$, in the line 2 it contains a unique constraint not derived from $\objval{z}$ any more: that $\sigma(z)$ still starts with $<>$. It may seem that $\StringProperty_{\sigma}(w)$ in the line duplicates an image of $w$'s value, but actually it states a bit more --- namely, that for any $\omega\in\concr{w}$, $|\omega|_< = |\omega|_> = 1$. Both letters $<$, $>$ are preserved by $\sigma$. Hence, the substring method call in the line 5 takes out only the first letter from the prefix equation\footnote{The method call also deletes the first letter from the suffix equation in the property, because the string $<>$ is in $\concr{\StringProperty_{\sigma}(z)}$.} in $\StringProperty_{\sigma}(z)$. Now the condition in the line 6 never holds due to the monotonicity of $\mathtt{indexOf}$ method via morphic images.

\section{Related Works, Discussion and Conclusion}\label{Sect:last}

The construction given in this paper seems to be a first attempt to combine  string properties expressed by means of word equations and morphisms in a reduced product being an abstract string domain. To our knowledge, cross-reduction procedures for string properties are not yet widely adopted in abstract interpretation. Existing frameworks (e.g. LiSA~\cite{TwinAut}), while consider multiple domains, primarily utilize Cartesian products, or the simplest reduced product normalizing a product with a bottom element to $\bot$. Similarly, the approach described in~\cite{YuBook}, while capable of simultaneously tracking string length and regular language membership, also relies on a Cartesian product.

String domains for dense languages are well-established~\cite{CostantiniFerrara,Mastroieni}: these include prefix-suffix domains, and domains counting specific letters (that can be expressed in the string objects by properties mapping all the letters except one into $\empt$). Finite automata are also extensively used. However, a known challenge is that regular languages can form infinite ascending chains, and require widening~\cite{wideningAut}. Within the framework in the book~\cite{YuBook}, processing words with prefixes in dense languages thus results in over-generalization of abstract values. The Tarsis automata framework addresses this issue through introduction of $\top$-marked transitions in automata~\cite{TwinAut}. Still, the lengths of the strings are over-generalized or even lost in this case. 

Papers~\cite{MidtggardLatticeValRegular,LatticeAutomata} propose an elegant approach to the abstract interpretation: the authors build their frameworks of lattice regexes and lattice automata over arbitrary atomistic lattices. Actually, the Tarsis lattice based on automata with $\top$-valued transitions can be considered as an advanced practical application of their idea.

The combination of abstract domains over a common concrete domain to improve precision is formalized by the reduced product in the seminal work~\cite{CousotReduced}. Specific applications of this idea include using a reduced product of length and buffer size domains to verify memory safety in C string manipulations~\cite{ModularCString}. Our work is partly inspired by the general framework for language-specific reduced products and adopts the associated notation from the paper~\cite{MultilingualAbstract}. Further relevant concepts include the delayed product for dynamically trading precision for efficiency~\cite{DelayedProduct}. The use of linear transformations to capture program properties in numeric domains~\cite{RotatingBoxes,ComposingBoxes} is a direct analogue to our method introducing custom properties via string morphisms. 

Outside the abstract interpretation scope, constraint solvers extensively use combined analyses of string values and lengths~\cite{Noodler,Ostritch,Reynolds,z3,BlackOstritch,lastOstritch}. For the regular constraints, corresponding lasso automata provide lengths estimations, and word equations are mapped into linear integer arithmetic language, in order not only to track individual lengths of the analysed string parameters, but also to capture relations between them. Some solvers use known upper bounds on word equations solutions lengths to restrict the search space as well~\cite{woorpje}. A framework using string constraints tracking their morphic images in commutative monoids is presented in the paper~\cite{RummerParikh}. The similar part of our framework, the one processing unary string properties, is still somewhat ad-hoc and underdeveloped, as compared to the cross-reduction of non-commutative morphic images.

Despite the impressive progress made in the string solving, the approach presented in this paper can give some insights on improving string analysis even in advanced cases. For example, the simple reduction Lemma~\ref{Lemma:ReducedLength} captures string properties that cannot be proved in the state-of-art solvers cvc5~\cite{Reynolds} and z3~\cite{z3}. The cross-reduction algorithms described in Sect.~\ref{Sect:reduction} can help pruning some search branches, if the morphisms determining the string properties are appropriately chosen. Since the algorithms also apply to the cases when the string objects possess distinct sets of the properties, the morphisms can even be tuned dynamically.

There are many open problems and work-in-progress on the way developing the suggested approach. First, while the superstring problem is known to be NP-complete, there exist efficient algorithms for estimating practically reasonable lower bounds on the superstring length~\cite{superlength}. Constructing join operation for anti-dictionaries (disequalities sets) is also a problem to be considered in the future work. Hence, reduction strategies involving anti-dictionaries are to be studied further. 

Finding most fitting morphisms in order to capture string properties, e.g., for translating strings by $\mathtt{toNum}$ method to numeric values in distinct notations in a complete manner~\cite{MastroieniCompleteness}, as well as strategies extracting custom string properties from programs automatically, are also fruitful future work directions. Finally, it is interesting to extend the set of string properties' equations outside the boundaries of regular languages. Inverse morphic images of word equations solutions are still unable to express some regular languages (see Lemma~\ref{lemma:expressibility} in Appendix), yet tracking the images via length-decreasing morphims (i.e. allowing letter counting) yields an undecidable theory~\cite{Durnev}. A language class of equations solutions on the images of length-preserving morphisms seems a fair trade: these languages can express equations on char-classes, but are very likely decidable.

\section{Acknowledgements}

The authors thank Egor Kichin and his research group for experimental validation of the presented approach on real projects, and Andrey Nemytykh for inspiration for developing the word-equations-based techniques of program analysis.

\bibliographystyle{splncs04}
\bibliography{lattice}

\clearpage
\section*{Appendix}

\subsection{Dense Solutions of Word Equations}\label{subsect::dense}
The following theorem gives a general characterisation of the dense languages given by word equations. In this subsection, words in mixed alphabet $\Sigma\cup\VarSet$ are called patterns. Given a pattern $\Phi(X_1,\dots, X_n)$ in $(\Sigma\cup\VarSet)^+$, a pattern language is $\Sol_{Z}(\wequat{Z}{\Phi(X_1,\dots, X_n)})$.

\begin{theorem}
(paper~\cite{Plandowski}, Theorem 16) Any word equation solution wrt a single variable either includes a pattern language or is thin.
\end{theorem}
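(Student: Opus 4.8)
The plan is to prove the substantive direction as a contrapositive: assuming the $Z$-solution set $L=\Sol_Z(\wequat{\mathcal{U}}{\mathcal{V}})$ is \emph{not} thin, i.e. dense, I will exhibit a pattern $\Phi$ with $\Sol_Z(\wequat{Z}{\Phi})\subseteq L$. The target is natural because any pattern language with a genuinely free variable is itself dense: for $\Phi=\alpha X\beta$ and any $\omega$, the word $\alpha\omega\beta$ lies in the pattern language and contains $\omega$ as a factor. First I would dispose of degenerate cases. If $Z$ does not occur in $\mathcal{U}\mathcal{V}$, then $L$ is either $\varnothing$ (thin) or all of $\Sigma^*$, the latter being the pattern language of $\Phi=X_1$. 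If the equation is unsolvable, then $L=\varnothing$ is thin. So from now on assume $Z$ occurs and the equation is solvable.

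Next I would use density to extract arbitrarily long, factor-rich solutions. For each $n$, apply density to a word $\omega_n$ containing every word of length $n$ as a factor (for instance a de Bruijn word): this yields a solution $\sigma_n$ with $\omega_n$ a factor of $\sigma_n(Z)$, so that $|\sigma_n(Z)|\to\infty$ and the values $\sigma_n(Z)$ realise every short factor. The aim is to convert this abundance of solutions into a single parametrised family sitting inside $L$.

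The engine is the structure theory of solution sets (Makanin's solution graph, or equivalently Plandowski's recompression graph): the derivations of solutions form a finite object, so an infinite solution set forces a cycle that may be traversed arbitrarily often, and each traversal ``pumps'' a portion of $\sigma(Z)$. I would then split on the nature of the pumped portion. If every such cycle only inserts powers $u^i$ of a fixed word $u$ drawn from the finite pool of blocks fixed by the equation (\emph{periodic pumping}), then a Fine--Wilf/periodicity argument confines the factors occurring in any $\sigma(Z)$ to a finite set together with factors of $u^{*}$; such a set omits, e.g., every word built from a letter outside the finitely many base blocks, contradicting density. Hence density forces a cycle of the other kind: one that can place an \emph{arbitrary} word in a fixed context $\alpha\,\square\,\beta$ inside $\sigma(Z)$. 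Reading off the general form of $\sigma(Z)$ with that slot kept as a fresh variable $X$ produces a pattern $\Phi=\cdots\alpha X\beta\cdots$ with $\Sol_Z(\wequat{Z}{\Phi})\subseteq L$, as required.

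The main obstacle is exactly this final dichotomy inside the structure theorem: rigorously separating \emph{periodic} pumping (which must be shown to keep $L$ thin) from \emph{free} pumping (which yields the pattern), and verifying that the extracted slot is genuinely unconstrained --- that no hidden equational coupling elsewhere in $\mathcal{U}\mathcal{V}$ forces the pumped word into a periodic or otherwise bounded shape. This is the combinatorics-on-words core of the argument, where the periodicity lemma, combined with a pigeonhole over the finitely many cycle shapes of the solution graph, has to do the real work.
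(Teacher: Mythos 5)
This statement is not proved in the paper at all: it is imported verbatim from the cited reference (Plandowski, Theorem~16) and used as a black box, so there is no in-paper proof to compare your attempt against. Your proposal therefore has to stand on its own, and as written it is a plan rather than a proof: the entire load-bearing step --- the dichotomy you extract from the ``structure theory'' of solution sets --- is asserted, not established, and you acknowledge as much in your final paragraph.

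Worse, that dichotomy is not merely unproven; its second horn is false as stated. You claim density forces a cycle that can place an \emph{arbitrary} word in a fixed context $\alpha\,\square\,\beta$, yielding a pattern $\Phi=\cdots\alpha X\beta\cdots$ in which the slot $X$ occurs once. A pattern whose distinguished variable occurs exactly once yields, after fixing all other variables, fixed words $\alpha,\beta$ with $\alpha w\beta$ a solution for \emph{every} $w$; dense solution sets need not admit any such context. The paper's own Example on the equation $\wequat{YaYX}{XYaY}$ is a counterexample: by the commutation (Lyndon--Sch\"utzenberger) theorem its $X$-solution set is $\{\empt\}\cup\{t^i\mid t \text{ primitive of odd length whose middle letter is } a\}$, which is dense (it contains $\omega a\omega$ for every $\omega$) and contains the pattern languages $(YaY)^k$ --- note the \emph{repeated} occurrences of $Y$ --- but contains no pattern language whose variable occurs once. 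Indeed, for any fixed $\alpha,\beta$ and all large $m$, the word $\alpha b^m\beta$ is not a solution: its primitive root would need odd length and middle letter $a$, hence each copy of the root contains an $a$; since all $a$'s lie in $\alpha$ and $\beta$, there are at most two copies, and in either case the relevant middle letter falls inside the block $b^m$ and equals $b$. So the pumping forced by density is in general neither periodic nor free but \emph{synchronised} across several positions (the same unknown word recurring, as in $YaY$), and your argument has no mechanism to produce such coupled patterns --- which is exactly the combinatorial core of Plandowski's theorem. The peripheral parts of your write-up (contrapositive set-up, degenerate cases, density of pattern languages containing a variable) are correct but do not touch this difficulty.
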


Hence, the dense solutions to word equations can be somehow expressed in terms of patterns. However, finding an appropriate pattern ``basis'' (i.e. a finite set of patterns) for constructing exhaustive description of the solutions set is non-trivial. Let us show that sometimes such a solution set includes a union of infinite set of mutually distinct pattern languages.

A pattern $\Pat_1$ is an instance of a pattern $\Pat_2$, if $\forall \omega\bigl(\omega\in\Lang(\Pat_1)\logimpl\omega\in\Lang(\Pat_2)\bigr)$.

\begin{example}\label{example:patseries}
The solution set of the equation $\Eq: YaYX\weql XYaY$ w.r.t the variable $X$ contains an infinite set of mutually distinct pattern languages.

Indeed, all the patterns $(YaY)^k$ describe $X$-solutions of the equation $\Eq$. But, given $k_1$, $k_2$ s.t. $k_1$ is prime and $k_2\neq k_1$, $k_2\neq 1$, the pattern $(YaY)^{k_1}$ is not an instance of $(YaY)^{k_2}$. Really, the substitution $\sigma: Y\mapsto b$ applied to $(YaY)^{k_1}$ results in a word with $2\cdot k_1$ occurrences of $b$ that are to be divided equally between $2\cdot k_2$ occurrences of $Y$ in the pattern $(YaY)^{k_2}$. But that is impossible. 

Now let us assume that $(YaY)^{k_1}$, where $k_1$ is prime, is an instance of $YaY$. Then $(bab)^{k_1}$ is in $\Lang(YaY)$. Hence, the value substituted to $Y$ starts both with $ba$ (wrt the first instance of $Y$) and $bb$ (wrt the second instance of $Y$), which leads to a contradiction. 
\end{example}

Given an equation $E:$ $\wequat{\Phi(X_1,\dots, X_n)}{\Psi(X_1,\dots, X_n)}$ and substitution $\sigma:X_i\mapsto\omega_i$, we say that equation $\wequat{\Phi(X_1,\dots, X_n)}{\Psi(X_1,\dots, X_n)}\sigma$ results from $E$ by means of primitive specialization iff $\omega_i$ is strongly primitive, id est, cannot be represented as $\upsilon_1 \upsilon_2\upsilon_1$, where $|\upsilon_1|>0$. For example, the equation $\wequat{XZY}{YZX}$ can be primitively specialized  to the equation $\wequat{XZab}{abZX}$ by means of substitution $Y\mapsto ab$.


\begin{lemma}\label{Lemma::infproj}
Each dense non-trivial solution projection of 3-vars equations 1--33 (excluding the equation 7 and equations 29--33 depending on 4 or more variables) given in the paper~\cite{MakaninZoo1}, specialized by a strongly primitive $\omega$, either is a language described with the pattern $\omega X$ or $X\omega$, or includes an infinite union of pattern languages being not instances of each other.

\begin{proof}

All the solution projections of 3-vars equations from the 1--33-list in paper~\cite{MakaninZoo1} specialized by strongly primitive words can be described by series of 1-var patterns $\bigl(\Phi_1(\omega,X)\bigr)^n \Phi_2(\omega,X) $, where $\Phi_1$ and $\Phi_2$ are known patterns being words in the regular language $(X\,|\,\omega)^+$, and $\Phi_1$, $\Phi_2$ both contain at least one occurrence of the variable $X$ and $\Phi_1$ contains at least one occurrence of the word $\omega$.

The list of basic equations considered is given in Table~\ref{Table::3eq}, together with descriptions of projections of their specialized version. 

Now, similarly to the reasoning in Example~\ref{example:patseries}, we consider the set of patterns $\Pat_{n_1}=\bigl(\Phi_1(\omega,X)\bigr)^{n_1}\Phi_2(\omega,X)$,\dots, $\Pat_{n_m}=(\Phi_1(\omega,X))^{n_m}\Phi_2(\omega,X)$, \dots, where $n_i$ are prime numbers, and the substitution $\sigma: X\mapsto b$, where $b$ does not occur in $\omega$. First, we can note that if $k\neq 0$ and $k\neq n_i$, then the pattern $\Pat_{n_i}$ can never be an instance of $\bigl(\Phi_1(\omega,X)\bigr)^k \Phi_2(\omega,X) $, since the number of letters $b$ in the $(\Phi_1(\omega,b))^{n_i}$ part of $\Pat_{n_i}\sigma$ can not be equally distributed among $|\Phi_1(\omega,X)|_X\cdot k$ occurrences of $X$ variables.

Hence, the series $\Pat_{n_i}$ define a union of infinite languages being not instances of each other, unless all of them except the finite set are not instances of the pattern $\Phi_2(\omega,X)$ representing non-periodic part of all of the given patterns. Let us consider all possible forms of this non-periodic part.

\begin{itemize}
\item If $\Phi_2(\omega,X)=X$, or $\Phi_2(\omega,X)= X\omega$, then all the patterns $\bigl(\Phi_1(\omega,X)\bigr)^{m}\Phi_2(\omega,X)$ are instances of $\Phi_2(\omega,X)$. Hence, the solution projection language defined by the equation is either trivial or defined by the pattern $X\omega$. 
\item If $\Phi_2(\omega,X)=\omega X$ and $\Phi_1(\omega,X)$ starts with $\omega$, then again all the patterns of the form $\biggerl\Phi_1(\omega,X)\biggerr^{m}\Phi_2(\omega,X)$ are instances of $\Phi_2(\omega,X)$. Hence, the solution projection language defined by the equation is either trivial or defined by the pattern $\omega X$. 

If $\Phi_2(\omega,X)$ starts with $\omega$, while $\Phi_1(\omega,X)$ starts with $X$, then neither of the patterns $\biggerl\Phi_1(\omega,X)\biggerr^{n_i}\Phi_2(\omega,X)$ ($m>0$) is an instance of $\Phi(\omega,X)$, because $\Pat_{n_i}\sigma$ starts with $b$, and $\omega$ does not contain $b$.
\item (Equation 6) Given $\Phi_2 = X^2$, $\Phi_1 = X^2 \omega$, if $(\Phi_1^{n_i}\Phi_2)\sigma$ is an instance of $\Phi_2$, then $b$ starts $\omega$, which is contradictory.
\item (Equation 10) Given $\Phi_2 = X\omega^2 X$, $\Phi_1 = X\omega^2 X \omega$, if $(\Phi_1^{n_i}\Phi_2)\sigma$ is an instance of $\Phi_2$, then $\omega b = b \omega$ which is again contradictory.
\item (Equation 11) Given $\Phi_2 = (\omega X)^2$, $\Phi_1 = (\omega X)^2 \omega$, if $(\Phi_1^{n_i}\Phi_2)\sigma$ is an instance of $\Phi_2$, then again $b\omega = \omega b$, which is not possible.
\item (Equation 12, $X_1$-projection, Equation 25, $X_1$-projection) The series $(X\omega X)^n$ is already considered in Example~\ref{example:patseries}.
\item (Equations 8, 14, 24, and Equations 12, 25, $X_3$-projections) Given $\Phi_2 = X\omega^2 X$, $\Phi_1 = X\omega^2$, for all odd $n$, $\bigl((\Phi_1)^n\Phi_2\bigr)\sigma$ cannot be an instance of $\Phi_2$, since the letters $b$ cannot be arranged between the two pattern variables equally.
\end{itemize}  
   
\end{proof}
\end{lemma}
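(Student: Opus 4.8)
The plan is to reduce the entire statement to one counting invariant followed by a finite case analysis on the shape of the non-periodic tail of the solution patterns. First I would establish the canonical form asserted at the start of the argument: for each listed 3-variable equation, specialization by a strongly primitive $\omega$ forces every dense nontrivial $Z$-projection to be describable by a one-parameter family of one-variable patterns $\bigl(\Phi_1(\omega,X)\bigr)^n\,\Phi_2(\omega,X)$, with $\Phi_1,\Phi_2\in(X\mid\omega)^+$, both containing an occurrence of $X$ and $\Phi_1$ containing an occurrence of $\omega$. This step is bookkeeping over Makanin's classification: I would read $\Phi_1$ and $\Phi_2$ off the projections tabulated in Table~\ref{Table::3eq} one equation at a time, invoking strong primitivity of $\omega$ precisely to exclude the decompositions $\omega=\upsilon_1\upsilon_2\upsilon_1$ that could otherwise collapse the period.

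Second I would isolate the counting engine, generalizing Example~\ref{example:patseries}. Fix a letter $b$ not occurring in $\omega$ and apply $\sigma:X\mapsto b$. The word $\Pat_n\sigma=\bigl(\Phi_1(\omega,b)\bigr)^n\,\Phi_2(\omega,b)$ then contains exactly $n\,|\Phi_1|_X+|\Phi_2|_X$ occurrences of $b$. Were $\Pat_{n_i}$ an instance of $\Pat_k$ with $k\neq n_i$, a single substitution $\tau:X\mapsto\beta$ would have to satisfy $\tau(\Pat_k)=\Pat_{n_i}\sigma$, forcing those $b$'s to divide equally among the $X$-occurrences of $\Pat_k$; taking $n_i$ prime and $k\notin\{0,n_i\}$ breaks the required divisibility, so $\Pat_{n_i}$ and $\Pat_k$ are mutually non-comparable. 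Hence the family $\{\Pat_{n_i}\}_{n_i\text{ prime}}$ already witnesses an infinite union of pairwise non-instance pattern languages, unless almost all of its members collapse into instances of the tail $\Phi_2$.

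Third I would decide between the two alternatives of the lemma by a finite case analysis on $\Phi_2$. A collapse can happen only when every long member is an instance of $\Phi_2$, which pins $\Phi_2$ to a boundary shape $X$, $X\omega$, or $\omega X$ with $\Phi_1$ beginning in $\omega$; in these cases the projection is trivial or exactly the pattern $X\omega$ or $\omega X$, giving the first alternative. For every remaining tail I would reuse the counting seam between the leading letter of $\Pat_{n_i}\sigma$ and $\omega$: since $\Pat_{n_i}\sigma$ begins with $b$ while $\omega$ omits $b$, no collapse is possible, and the equation-specific tails $\Phi_2\in\{X^2,\,X\omega^2X,\,(\omega X)^2,\,X\omega X\}$ arising for equations 6, 10, 11, 12/25, and 8/14/24 are each refuted by a short parity or commutation argument (for instance $\omega b=b\omega$ would force $\omega$ to be a power of $b$, contradicting $b\notin\omega$). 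The principal obstacle is not any individual argument but the completeness of the first step: the counting invariant is only as trustworthy as the claim that the tabulated projections genuinely exhaust all dense nontrivial cases and are faithfully rendered in canonical $(\Phi_1)^n\Phi_2$ form, so an overlooked decomposition of $\omega$ or a miscounted $X$-slot would silently invalidate the whole scheme.
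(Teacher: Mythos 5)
Your proposal is correct and follows essentially the same route as the paper's proof: the same canonical form $(\Phi_1)^n\Phi_2$ read off Makanin's classification, the same prime-indexed counting argument under $\sigma:X\mapsto b$ with $b$ absent from $\omega$, and the same finite case analysis on the tail $\Phi_2$ (boundary shapes $X$, $X\omega$, $\omega X$ versus the equation-specific tails for equations 6, 10, 11, 12/25, 8/14/24 refuted by commutation or parity). You also correctly identify the same load-bearing assumption the paper relies on, namely the completeness and faithful canonical rendering of the tabulated projections.
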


The list of 1--28 basis equations from the paper~\cite{MakaninZoo1} and used in Proposition~\ref{Lemma::infproj} depending on 2 or 3 variables is given below. Its projections after the variable specialization are given in terms of pattern languages if possible. $X$ means a trivial pattern language, ``thin'' stands for the thin projections, ``inf'' stands for the infinite union of pattern languages.

We assume that the languages are mentioned in the following order:
\begin{itemize}
\item highest priority --- non-trivial pattern languages and infinite unions of pattern languages. If both specializations wrt $X_i$ and $X_j$ yield such languages for $X_k$-projection, then we mention them using disjunction.
\item average priority --- trivial pattern languages. If $X_i$-specialization yields a trivial pattern language, and $X_j$-specialization yields a thin language, then we mention only the former.
\item low priority --- thin languages. An $X_k$-projection cell is marked as ``thin'' iff any specialization wrt any variable not equal to $X_k$ yields a thin $X_k$-projection language. 
\end{itemize}

\begin{longtable}{|rlccc|}
\hline & {\textbf{Equation}} & {$X_1$-\textbf{proj}} 
& {$X_2$-\textbf{proj}}
& {$X_3$-\textbf{proj}}\\ \hline 
\endfirsthead

\multicolumn{5}{c}%
{{\bfseries \tablename\ \thetable{} -- continued from previous page}} \\
\hline & {\textbf{Equation}} & {$X_1$-\textbf{proj}} 
& {$X_2$-\textbf{proj}}
& {$X_3$-\textbf{proj}}\\ \hline 
\endhead

\hline \multicolumn{5}{|r|}{{Continued on next page}} \\ \hline
\endfoot

\hline \hline
\endlastfoot

1.& $X_1 X_2\weql X_2 X_1$ & thin & thin & -\\
2.& $X_1^2 X_2^2\weql X_3^2$& thin & thin & thin \\
3.& $X_1 X_3\weql X_2 X_1$&thin&$[\omega X]$&$[X\omega]$\\
4.& $X_1 X_2 X_3\weql X_3 X_1 X_2$&thin & $[X]$& $[\omega X]$ or $[X\omega]$\\
5.& $X_1 X_2 X_3\weql X_3 X_2 X_1$& $[X\omega]$ and $[\omega X]$ & $[X]$& $[X \omega]$ and $[\omega X]$\\
6.& $X_1 X_2 X_3^2\weql X_3^2 X_2 X_1$&$[\omega^2 X]$ and $[X\omega^2]$ or inf & $[X]$ & $[\omega X]$ and $[X\omega]$\\
7.& $X_1 X_2 X_3\weql X_2 X_3 X_4$ &\multicolumn{2}{l}{(4-variable equation, omitted)}&\\
8.& $X_1 X_2 X_3 X_3\weql X_3 X_2 X_3 X_1$& $[\omega X]$ and $[X\omega]$ & inf & $[\omega X]$ and $[X\omega]$\\
9.& $X_1 X_2 X_2 X_3\weql X_2 X_3 X_1 X_2$&$[\omega X]$& thin & $[X\omega]$\\
10.& $X_1 X_2 X_1 X_3 X_2\weql X_3 X_2 X_1 X_2 X_1$& $[X\omega]$ & thin & inf\\
11.& $X_1 X_3 X_3 X_2^2\weql X_3 X_2^2 X_1 X_3$& inf &$[X\omega]$ & thin\\
12.& $X_1 X_2 X_3 X_2 \weql X_2 X_3 X_2 X_1$& inf & thin & inf \\
13.& $X_1 X_2 X_3^2 \weql X_2 X_3^2 X_1$& $[\omega X]$ or $[X\omega]$ & $[X]$ & $[X]$\\
14.& $X_1 X_3 X_2 X_3 \weql X_2 X_3 X_3 X_1$&$[\omega X]$ or $[X\omega]$ &inf & $[X]$\\
15.& $X_2 X_1 X_3 X_3 X_1^2 \weql X_3 X_1^2 X_2 X_1 X_3$& thin & thin & thin \\
16.& $X_1^\alpha \weql X_2^\beta$& thin & thin & -\\
17.& $X_1 X_2 X_3 \weql X_2^\alpha X_1$& thin& thin &thin\\
18.& $X_1 X_2^{\alpha+1} X_3 \weql X_2^\beta$&thin & thin & thin\\
19.& $X_1 X_3 X_1 \weql (X_2 X_3)^{\alpha+2}$&thin &thin & thin\\
20.& $X_3 X_1^2 \weql (X_2 X_3)^{\alpha+2}$& thin &thin & thin\\
21.& $X_1^{\alpha+2} \weql (X_2 X_3)^{\beta+2} X_2$&thin &thin &thin\\
22.& $X_1 X_3 \weql X_2^{\alpha} X_1$&thin& $[\omega X]$ & $[X\omega]$\\
23.& $X_1 X_3^{\alpha+1} \weql X_3^{\alpha+1} X_2$&$[\omega X]$& $[X\omega]$& thin\\
24.& $X_1^{\alpha+2} \weql X_2 X_3 X_2$&$[\omega X]$ and $[X\omega]$ &thin& inf\\
25.& $X_2 (X_3 X_2)^{\alpha+1} X_1 \weql X_1 X_2 (X_3 X_2)^{\alpha+1}$&$[\omega X]$ and $[X\omega]$ or inf & $[X]$ & inf\\
26.& $X_1 X_2 X_3^{\alpha+2} \weql X_2 {X_3}^{\alpha+2} X_1$&$[X\omega]$ &$[X]$& thin\\
27.& $X_1 X_3^{\alpha+2} X_2 X_3 \weql X_2 X_3 X_1 {X_2}^{\alpha+2}$&$[\omega X]$ & thin & $[X\omega]$\\
28.& $X_1 X_3^{\alpha+2} X_2 \weql X_2 {X_3}^{\alpha+2} X_1$& inf & inf & -

\label{Table::3eq}
\end{longtable}

The equation 28 has no strongly primitive $X_1$- and $X_2$-solutions, hence, its specialization wrt the given variables is not possible.

The equations 29--33 are omitted, hence they contain more than 3 variables.

\begin{lemma}\label{Lemma::NFA}
Given distinct strings $\omega_1$, ..., $\omega_n$ being not substrings of each other, a minimal non-deterministic automaton recognizing a language $L$ of strings containing all the substrings $\omega_1$,..., $\omega_n$, in a large enough alphabet, contains at least $2^n\times \min_{i\in\{1,n\}}(|\omega_i|)$ states.
\begin{proof}
Let $\#$ be a letter not contained in $\omega_1...\omega_n$.  Consider the equivalence classes determined by all possible subsets of $\{1,...,n\}$ in a following way. Given $M\in 2^{\{1,...,n\}}$, word $\omega_M$ is concatenation of the substrings $\bigl\{\omega_i\#\mid i\in M\bigr\}$ in the increasing order wrt index $i$. Then, for any $\omega_{M_1}$, $\omega_{M_2}$, $M_1\not\subset M_2$, the word $\omega_{\{1,...,n\}\setminus M_1}$ discerns the classes $\omega_{M_1}$ and $\omega_{M_2}$, moreover, $\omega_{M_2}\omega_{\{1,...,n\}\setminus M_1}\notin L$. Hence, the classes must correspond to distinct NFA states~\cite{Birget}.

The upper-triangular matrix verifying the lower bound of the number of NFA states is given in Figure~\ref{eqclasses}. Rows are marked by string prefixes, columns are marked by suffixes, and the cell on $i$-th row and $j$-th column contains $1$ iff the concatenation of the corresponding prefix and suffix belongs to the language $L$. 
\end{proof}
\end{lemma}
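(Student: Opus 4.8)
I would prove the bound with the \emph{extended fooling set} technique for nondeterministic automata~\cite{Birget}: if one exhibits a family of pairs $\{(x_t,y_t)\}_{t\in T}$ with $x_ty_t\in L$ for every $t$, and such that for all $s\neq t$ at least one of $x_sy_t$, $x_ty_s$ lies outside $L$, then every NFA recognising $L$ has at least $|T|$ states. The plan is to construct such a family indexed by $T=2^{\{1,\dots,n\}}\times\{0,\dots,m_0-1\}$, where $m_0=\min_i|\omega_i|$, arranging the construction so that the first coordinate is responsible for the $2^n$ factor and the second for the $m_0$ factor.

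\textbf{The $2^n$ factor.} First I would fix a fresh letter $\#$ occurring in none of the $\omega_i$ (this is one use of the ``large enough alphabet'' hypothesis) and, for $M\subseteq\{1,\dots,n\}$, let $\omega_M$ be the $\#$-delimited concatenation $\#\,\omega_{i_1}\#\cdots\#\,\omega_{i_k}\#$ of the patterns indexed by $M$ in increasing order. Because $\#$ separates the blocks and the $\omega_i$ are pairwise non-factors, any occurrence of a pattern $\omega_j$ inside $\omega_M$ must sit within one block, so $\omega_M$ contains exactly the patterns of $M$; hence $\omega_{M}\,\omega_{M'}\in L$ iff $M\cup M'=\{1,\dots,n\}$. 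Taking $x_M=\omega_M$ and $y_M=\omega_{\overline M}$ gives $x_My_M\in L$, while $x_{M_1}y_{M_2}=\omega_{M_1}\omega_{\overline{M_2}}\in L$ iff $M_2\subseteq M_1$. Two distinct subsets cannot satisfy $M_2\subseteq M_1$ and $M_1\subseteq M_2$ simultaneously, so at least one cross-product leaves $L$, which is exactly the fooling condition. This is precisely the upper-triangular $0/1$ matrix whose rows are prefixes $\omega_M$ and columns suffixes $\omega_{\overline M}$, and it already yields the bound $2^n$.

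\textbf{The $m_0$ factor.} Let $i_0$ realise the minimum length and split $\omega_{i_0}=\pi_p\rho_p$ with $|\pi_p|=p$. I would refine each pair by interleaving a partial copy of $\omega_{i_0}$ across the cut, in the style $x_{M,p}=\omega_M\,\pi_p$ and $y_{M,p}=\rho_p\,\omega_{\overline M}$, so that on the diagonal the seam $\pi_p\rho_p$ reconstitutes $\omega_{i_0}$ and $x_{M,p}y_{M,p}\in L$. The governing computation is that the seam $\pi_p\rho_{p'}$ spells $\omega_{i_0}$ \emph{only} when $p=p'$: for $p\neq p'$ its length is wrong, and when lengths accidentally coincide the large, non-overlapping alphabet forces $\pi_p\rho_{p'}\neq\omega_{i_0}$ and forbids any other pattern from appearing at the junction. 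Thus distinct positions are separated by the failure to complete $\omega_{i_0}$ and distinct subsets by the argument above, so that the two refinements multiply.

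\textbf{Main obstacle.} The delicate point is that the set of patterns already seen is monotone along a word: once $\omega_{i_0}$ has occurred it stays available, so the partial-match position of $\omega_{i_0}$ carries information only \emph{before} $\omega_{i_0}$ is completed, and naively this gives only the factor $m_0+1$ for subsets not already containing $i_0$. Making the position coordinate genuinely free across all $2^n$ subsets therefore requires distributing the $m_0$ partial-match states over the subset classes (using, for a class that already contains $i_0$, the partial match of another still-pending pattern), and then re-checking that the full family of $2^n m_0$ contexts remains pairwise fooled after concatenation. This is exactly where the disjoint-alphabet hypothesis is essential: it guarantees that at most one pattern is mid-match at any seam and that concatenation creates no spurious occurrence, so the position and subset coordinates do not interfere; it is also where one verifies that the supply of pairwise-separable contexts (of size $2^{n-1}\bigl(\textstyle\sum_i|\omega_i|-n+2\bigr)$) dominates $2^n m_0$, which holds precisely for $n\geq2$, the meaningful regime here since ``not substrings of each other'' already presupposes at least two patterns. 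I would therefore treat the separator-based subset family as the backbone, the shortest-pattern split as the multiplier, and the combined fooling verification under these hypotheses as the technical heart of the argument.
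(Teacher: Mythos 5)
Your first step --- the $\#$-delimited subset words $\omega_M$, the observation that $\omega_{M_1}\omega_{\overline{M_2}}\in L$ iff $M_2\subseteq M_1$, and the resulting upper-triangular fooling matrix in the sense of Birget --- is exactly the paper's proof; in fact it is \emph{all} of it. The paper's fooling family is indexed by subsets only, so what the paper actually establishes is the lower bound $2^n$; the factor $\min_i|\omega_i|$ claimed in the statement is never addressed there. Your proposal goes further, and your diagnosis of the obstacle is correct: since $i_0\in M\cup\overline{M}$ for every $M$, the blocks already supply an occurrence of $\omega_{i_0}$, so in your first product family $x_{M,p}=\omega_M\pi_p$, $y_{M,p}=\rho_p\omega_{\overline{M}}$ \emph{both} cross-terms remain in $L$ and the fooling condition fails; position information can only be carried by a pattern that is still pending. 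Your repair --- distributing the position coordinate over the subset classes via pending patterns, with the count $\sum_{S}\bigl(1+\sum_{j\notin S}(|\omega_j|-1)\bigr)=2^{n-1}\bigl(\sum_i|\omega_i|-n+2\bigr)\geq 2^n\min_i|\omega_i|$ for $n\geq 2$ --- has the right shape, and that arithmetic is correct.

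The genuine gap is that your repaired family is pairwise fooled only under the ``disjoint-alphabet hypothesis'' you invoke, and the lemma does not grant it: ``large enough alphabet'' supplies the fresh separator $\#$ but says nothing about which letters the given $\omega_i$ share. If two pending patterns share a proper prefix, the corresponding partial-match contexts coincide and cannot be separated: take $\omega_1=ab$, $\omega_2=ac$ and $S=\varnothing$; the pair for ``progress $1$ on $\omega_1$'' and the pair for ``progress $1$ on $\omega_2$'' have the \emph{identical} left word $\omega_{\varnothing}a$, and both cross-terms $\omega_{\varnothing}ab\#ac\#$ and $\omega_{\varnothing}ac\#ab\#$ lie in $L$. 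This is not a defect of your write-up but of the statement itself: for $\omega_1=ab$, $\omega_2=ac$ the automaton tracking the pair (set of patterns seen, whether the last letter is $a$) is a DFA with $7$ states over any alphabet, while $2^n\min_i|\omega_i|=8$, so the full bound is false without an additional assumption such as your disjointness. In summary: you reprove the paper's $2^n$ bound by the same method, you prove the full bound under an extra hypothesis on the patterns, and the remaining distance to the literal statement cannot be closed --- a problem the paper's own proof avoids only by silently proving nothing beyond $2^n$.
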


\begin{figure}[H]

\centering\small
$
\begin{array}{rcccccccc}
& \varepsilon & \omega_1\# & \omega_2\# & \dots & \omega_1\#\omega_2\# & \dots & \omega_{i_1}\#\omega_{i_2}\dots \#\omega_{i_k}\# & \dots\\
\omega_1\#\omega_2\#\dots \#\omega_n\# & 1 & 1 & 1 & \dots & 1 & \dots & 1 & \dots \\
\omega_2\#\omega_3\#\dots \#\omega_n\# & 0 & 1 & 0 & \dots & 1 & \dots & 1\in \{i_1, ...,i_k\} & \dots \\
\omega_1\#\omega_3\#\dots \#\omega_n\# & 0 & 0 & 1 & \dots & 1 & \dots & 2\in \{i_1, ...,i_k\} & \dots \\
\omega_1\#\dots \omega_{j-1}\#\omega_{j+1}\# & \multirow{2}{*}{0} & \multirow{2}{*}{0} & \multirow{2}{*}{0} & \multirow{2}{*}{\dots} & \multirow{2}{*}{0} & \multirow{2}{*}{\dots} & \multirow{2}{*}{$j\in \{i_1, ...,i_k\}$} & \multirow{2}{*}{\dots} \\
\dots \#\omega_n\# &  &  &  &  &  &  &  &  \\
\omega_3\#\dots\#\omega_n\# & 0 & 0 & 0 & \dots & 1 & \dots & \{1,2\}\subseteq \{i_1, ...,i_k\} & \dots \\
\dots & 0 & 0 & 0 & \dots & 0 & \dots & \dots & \dots \\
\multirow{2}{*}{$\omega_{j_1}\#\dots \#\omega_{j_l}\#$} & \multirow{2}{*}{0} & \multirow{2}{*}{0} & \multirow{2}{*}{0} & \multirow{2}{*}{\dots} & \multirow{2}{*}{0} & \multirow{2}{*}{\dots} & \{1,...,n\}\setminus\{j_1,\dots,j_l\} & \multirow{2}{*}{\dots} \\
 &  &  &  &  &  &  & \qquad\qquad\subseteq \{i_1, ...,i_k\} &  \\

\end{array}
$
\caption{The upper-triangular matrix verifying the lower bound on the NFA states space. The rows correspond to prefixes $u_i$, columns correspond to suffixes $v_j$, a boolean value in the cell $(i,j)$ shows whether the word $u_i v_j$ is in the given language.}

\label{eqclasses}
\end{figure}

\subsection{Proofs of Reduction Lemmas}

\subsubsection{Proof of Lemma~\ref{Lemma:standaloneReduction}}

We recall that $\infopat{\absval}=\begin{cases}\wequat{Z}{\upsilon_1Y_0}\\ \bigcap^n_{i=1}\wequat{Z}{X_i\omega_i Y_i}\\ \wequat{Z}{X_0 \upsilon_2}\end{cases}$ and is basically reduced, $S = \Sol_Z\bigl(\infopat{\absval}\bigr)$. 

\lemsinglereduction*
\begin{proof}
Let $\Sigma$ contain at least two letters, say $a$ and $b$. First, assume that the unavoidable in $S$ word violating the reduced-form condition above is of the form $\delta\Phi \delta$, where $\delta\in\Sigma$ is arbitrary, $\Phi\in\Sigma^*$. Without loss of generality, we assume $\delta=a$. Let $\tau=b^p$, where $p=\sum_{i=1}^n |\omega_i|+|\nu_1|+|\nu_2|+1$. Note that $\tau$ cannot be a substring of any unavoidable word. This fact allows us to use $\tau$ as a delimiter, since, for any $\omega_i\tau \omega_{j}$ including the unavoidable word, if the unavoidable word contains a letter positioned in $\omega_j$, it cannot contain any letter positioned in $\omega_i$ by the choice of $\tau$.

Construct the following word:

$$\nu_1\tau\omega_1\tau\dots\tau\omega_n\tau\nu_2$$

Since $a\Phi a$ is neither a substring of $\nu_1$, $\nu_2$, nor of any $\omega_i$, $a\Phi a$ must include $\tau$, but $\tau$ is not unavoidable in $S$. Contradiction.

Hence, any unavoidable in $S$ word not being a subword of a word from $\infopat{\absval}$ must start and end with distinct letters. Say, let such an unavoidable word be $a\Phi b$, where $\Phi\in\Sigma^*$.

If $|\Sigma|>2$, we choose the delimiter $\tau=c^p$, where $c\neq a$ and $c\neq b$, and use the reasoning above to show that $a\Phi b$ cannot be unavoidable.

If $|\Sigma|=2$, consider the delimiter $\tau=a^p$. For uniformity, let $\omega_0=\nu_1$, $\omega_{n+1}=\nu_2$ The following word $\Gamma_0$:

$$\omega_0\tau\omega_1\tau\dots\tau\omega_n \tau \omega_{n+1}$$

includes $a\Phi b$, because $a\Phi b$ is assumed to be unavoidable, hence $$\exists i_1,\dots, i_k, s_1,\dots s_k,\xi_1,\dots, \xi_k\forall 1 \leq j \leq k(a^{s_j}\omega_{i_j}=a\Phi b \xi_j).$$ For the set of such words $\omega_{i_j}$ we use another delimiter $\tau_1 = a^p b$, preserving the delimiter $a^p$ for the rest. Additionally, we rearrange the subwords of $\Gamma_0$ in such a way that all the words $\omega_{i_j}$ are grouped at its suffix. Given the ending subword $\omega_{n+1}$, if $n+1\notin \{i_1,\dots, i_k\}$, let $\tau_2=\tau$, otherwise let $\tau_2 = \tau_1$. So, we construct the following word $\Gamma$ and try to identify position of the unavoidable $a\Phi b$ in it.

$$\underbrace{\omega_0\tau\omega_{t_1}\tau\dots\tau\omega_{t_l}}_{\omega_{t_q}\notin \{\omega_{i_1},\dots,\omega{i_k}\}}\overbrace{\tau_1\omega_{i_1} \tau_1 \dots \tau_1 \omega_{i_k} \tau_2 \omega_{n+1}}^{\text{must include }a\Phi b}$$

The subword $a\Phi b$ cannot occur in the prefix containing $\omega_{t_q}$ subwords, by the choice of $\omega_{t_q}$. Hence, there are some words $\omega_{i_{r_1}}$, \dots, $\omega_{i_{r_m}}$ s.t. $a^{k_2} b \omega_{i_{r_j}}$ starts with $a\Phi b$. Consider any such $\omega_{i_{r_j}}$. By its choice, the following conditions hold:

$$\begin{cases}
a^{k_1} \omega_{i_{r_j}} = a\Phi b\xi_1\\
a^{k_2} b \omega_{i_{r_j}} = a\Phi b\xi_2
\end{cases}$$

Therefore, $k_2 = k_0 + k_1$. Let $\Phi = a^{k_0+k_1-1} b \Phi' b$ (hence, we do not consider the case when $\Phi\in a^+ b$). Then

$$\begin{cases}
\omega_{i_{r_j}} = a^{k_0} b \Phi' b\xi_1\\
\omega_{i_{r_j}} = \Phi' b\xi_2
\end{cases}$$

Hence $\xi_2 = a^{k_0} b \xi_1$, and the equation $\Phi' b a^{k_0} b = a^{k_0} b \Phi' b$ holds. Then, either $k_0=0$ and $\Phi'\in b^*$ (hence, $\Phi\in a^+ b^+$) or $\Phi'\in (a^{k_0} b)^* a^{k_0}$. 

In the latter case, we can replace in $\Gamma$ all the $\tau_1$ occurrences by $a^p b^2$ in order to avoid $a\Phi b$. 

If $\Phi=a^{k_1+1}b^{k_2+1}$, then we modify $\Gamma$ as follows. If $\omega_{i_{r_j}}$ ends with $a$, replace $\tau_1$ occurrence next to it by $\tau'_{i_{r_j}} = ba^{k_1+1}b^{k_2+1}$, otherwise replace it with $\tau'_{i_{r_j}} =a^{k_1+1}b^{k_2+1}$. Hence, $a^{k_1+2} b^{k_2+2}$ cannot occur in $\omega_{i_{r_j}}\tau'_{i_{r_j}}$.

Hence, the only possible cases for $a^{k_1} b^{k_2}$ to be unavoidable are the cases when $k_1=1$ or $k_2=1$, which concludes the proof.
\end{proof}

Now the unavoidable words that must occur in the set of words containing subwords from $\infopat{\absval}$ can be easily constructed. Let $\Sigma = \{a,b\}$. Internal strings from $\infopat{\absval}$ are the strings determining the equations $\wequat{Z}{X_i \omega_i Y_i}$.

\begin{itemize}
\item If $\wequat{Z}{\omega a^k Y}\in\infopat{\absval}$, and there is at least one another equation in $\infopat{\absval}$ containing $b$, then $a^k b$ is unavoidable w.r.t. $\infopat{\absval}$. 
\item Given two internal strings $\omega_1 b a^{k_1}$ and $ \omega_2 b a^{k_2}$ in $\infopat{\absval}$ equations, if $\omega_1$ is not a suffix of $\omega_2$ and vice versa, $a^{\min(k_1, k_2)} b$ is unavoidable.
\item If $\wequat{Z}{X a^k \omega}\in\infopat{\absval}$, and there is at least one another string in $\infopat{\absval}$ equations containing $b$, then $b a^k$ is unavoidable w.r.t. $\infopat{\absval}$. 
\item Given two internal strings $a^{k_1} b \omega_1$ and $a^{k_2}b \omega_2$ in ${\infopat{\absval}}$, if $\omega_1$ is not a prefix of $\omega_2$ and vice versa, $ba^{\min(k_1, k_2)}$ is unavoidable.
\item Symmetrically, the unavoidable words $ab^{\min(k_1, k_2)}$ and $b^{\min(k_1, k_2)} a$ can be found.
\end{itemize}

\begin{algorithm}[htb]
\caption{Algorithm for finding unavoidable words of the form $a^k b$ with respect to $\infopat{\absval}\in\FactorCode$.}

There $\mathcal{U}$ consists of pairs $\langle k, \omega_i\rangle$, where the first letter of $\omega_i$ is not $a$ and  $a^k \omega_i\in\infopat{\absval}$ is an internal substring (determining the equation $\wequat{Z}{Xa^k \omega_i Y}$). The list $\mathcal{U}$ is sorted by $k$ value decreasing.
\begin{algorithmic}[1]
\STATE $\mathcal{U} \gets \text{sortByAkPrefixes}(\infopat{\absval})$
\STATE $i \gets 1$ 
\STATE $\langle k_{\max}, \omega_{\max}\rangle \gets \mathcal{U}[i]$
\WHILE{$k_{max} > 1$ \AND $i \leq |\mathcal{U}|$}
    \STATE $(k_1, \omega_{\text{next}}) \gets \mathcal{U}[i+1]$
\STATE
\LONGCOMMENT{If two words in a factor code share a common maximal prefix $a^k$, then none of them is a prefix of another}
    \IF{$k_1 == k_{\max}$}
        \STATE \textbf{return} $k_{\max}$
    \ENDIF
    \IF{\NOT($\omega_{\max}.\text{isPrefixOf}(\omega_{\text{next}})$)}
        \STATE \textbf{return} $k_{\max}$
    \ELSE
        \STATE $k_{\max} \gets k_1$
        \STATE $\omega_{\max} \gets \omega_{\text{next}}$
        \STATE $i \gets i + 1$
    \ENDIF
\ENDWHILE
\STATE \textbf{return} $k_{\max}$
\end{algorithmic}

\end{algorithm}

\subsubsection{Proof of Lemma~\ref{Lemma:ReducedLength}}

\lemlenreduction*
\begin{proof}
First, we can easily construct a concrete string value satisfying all the equations given in $\objval{\absval}$, and having any length equal and more than $\displaystyle\sum_{i=1}^k |\omega_i|+|\upsilon_0|+|\upsilon_1|$. 

Second, let us assume that there exists an abstract object $\absval_*$ with the same concretisation set as $\absval$, but with $\objval{\absval_*}\neq \objval{\absval}$. First of all, the equations restricting prefixes and suffixes of strings in their concretisation sets must coincide. Really, let us assume $\wequat{Z}{\upsilon'_0 Y}\in \objval{{\absval}_*}$, and $\upsilon'_0\neq\upsilon_0$. Let $|\upsilon'_0|\leq |\upsilon_0|$, and $\delta$ be a letter not occurring in any of $\omega_i$, $\upsilon_i$. Then the concretisation set of $\absval_*$ contains a string prefixed with $\upsilon'_0\delta$, while the concretisation set of $\absval$ cannot contain such a string. The same reasoning proves that the equation in ${\absval}_*$ determining the left ideal is $\wequat{Z}{X\upsilon_1}$.

Now we reason by recursion on $k$. From the set $\{\omega_1,\dots, \omega_k\}$, we choose the smallest $\omega_{i_1}$ wrt the length-lexicographic order. As before, $\delta$ is a letter not occurring in any of $\omega_i$ and $\omega'_i$.
\begin{itemize}
\item If there is some $\omega'_{i_1}$ s.t. $|\omega_{i_1}|>|\omega'_{i_1}|$, then $\concr{\absval_*}$ contains a string prefixed with $\upsilon_0 \delta\omega'_{i_1}\delta$, while $\concr{\absval}$ does not contain such a string.
\item If there is no $\omega'_{i_1}$ s.t. $\omega_{i_1}$ is its prefix, or any $\omega'_{i_1}$ starting with $\omega_{i_1}$ is longer than $\omega_{i_1}$, then $\concr{\absval}$ contains a string prefixed with $\upsilon_0 \delta \omega_{i_1}\delta$, while $\concr{\absval_*}$ does not contain such a string.
\end{itemize} 
Hence, the only option in which $\concr{\absval}$ and $\concr{\absval_*}$ can coincide is the case when $\concr{\absval_*}$    contains $\omega_{i_1}$.

Recursively continuing this reasoning, we prove that $\objval{\absval}$ and $\objval{\absval_*}$ contain the same set of equations.
\end{proof}

\subsubsection{Proof of Lemma~\ref{Lemma:Propagation}}

\lempropagation*
\begin{proof}
Let us assume the contrary: let the reduced representation of $\infopat{\absval}_1$ contain an equation $\wequat{Z}{X\omega Y}$ such that it is not contained in $\infopat{\absval}_1$ and $\sigma_2$ does not preserve $\omega$. The cases of equations $\wequat{Z}{\omega Y}$ and $\wequat{Z}{X\omega}$ are considered similarly.

By the assumption, $\omega$ is contained as a subword in all the words in $\sigma_1(\concr{\sigma^{-1}_1(\infopat{\absval}_1)})$. Now we show how to construct a word s.t. it satisfies all the constraints of $\infopat{\absval}_1$, but does not contain the subword $\omega$.

Take one-letter word $\delta$ neither starting nor ending $\omega$. Such a word exists, because $\sigma_1$ and $\sigma_2$ are non-erasing, and $\sigma_2\succ\sigma_1$. Given $\upsilon_0$ as a prefix, $\upsilon_1$ as a suffix, and $\omega_i$ as infixes, construct a skeleton of the counterexample with the parameter $\tau$:

$$\overbrace{\upsilon_0\delta^{|\omega|} \omega_1 \delta^{|\omega|} \dots \omega_k \delta^{|\omega|}}^{\text{does not contain }\omega} \tau \delta^{|\omega|} \upsilon_1$$

By construction, $\omega$ can occur only in the $\tau$ part there, crossing no $\delta^{|\omega|}$ bound.

Given an equation $\wequat{Z}{X_i \tau_i Y_i}$ defined by $\sigma_2$, specify the corresponding infix of $\tau$. If $|\tau_i|<|\omega|$, the infix is $\delta \tau'_i\delta$, where $\tau'_i$ is an arbitrary element of the inverse image of $\tau_i$ wrt $\sigma$, i.e. $\{\upsilon\mid \sigma(\upsilon)=\tau_i\}$. If $|\tau_i|\geq |\omega|$, choose an element from its inverse image as follows. The element is accumulated in the $\upsilon_i$, initially set to $\empt$. The suffix of $\tau_i$, $\tau_{k,i}$, is initially set to $\tau_i$, while $k$ is set to zero.

Let $\tau'_{k,i}$ be a maximal prefix of $\tau_{k,i}$ preserved by $\sigma_2$. Choose its maximal suffix $\hat{\tau}_{k,i}$ s.t. $\sigma_2^{-1}(\hat{\tau}_{k,i})$ (which is denoted by $\tau''_{k,i}$ below)  starts $\omega$. 

\begin{itemize}
\item If $\tau'_{k,i}=\tau_{k,i}$ (i.e. the remaining part of $\tau_i$ is preserved by $\sigma_2$), let $\upsilon_i \mapsto \upsilon_i\tau_{k,i}$ and end the loop.
\item Otherwise, since $\sigma_2$ does not preserve $\omega$, by its choice, $\omega=\tau''_{k,i}\delta_1\omega_1$, where $\delta_1$ is not preserved by $\sigma_2$. Hence, for every letter in $\nu_i$ immediately following $\tau'_{k,i}$, $\delta'_1$, $\sigma_2^{-1}(\delta'_1)$ contains an element $\hat{\delta}_{1}$  not equal to $\delta_1$. Let $\upsilon_i\mapsto \upsilon_i\sigma_2^{-1}(\tau'_{k,i})\hat{\delta}_{1}$. Set $\tau_{k+1,i}$ to be the remaining suffix of $\tau_{k,i}$ without $\tau'_{k,i}\sigma_2(\delta_1)$, and increment $k$.
\end{itemize}

Continuing this procedure, we obtain the counterexample --- a string that satisfies the given property but does not contain $\omega$.
\end{proof}

\subsection{Some restrictions in word-equation-based approach}

\begin{lemma}\label{lemma:expressibility}
{There are regular languages having morphic images whose inverse is the given language are not representable by languages of word equations.}
\end{lemma}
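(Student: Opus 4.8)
The plan is to read the statement in the sense made precise by the discussion preceding its citation: the languages of the form $\sigma^{-1}\bigl(\Sol_Z(E)\bigr)$, with $E$ a word equation and $\sigma$ a length-preserving morphism, do not exhaust the regular languages, and I would exhibit an explicit regular witness $R$. First I would normalise the morphism. A length-preserving morphism maps every letter to a single letter, so it is a relabelling that is allowed to merge letters; passing to the image alphabet $\sigma(\Sigma)$ we may assume $\sigma$ is surjective, whence $\sigma\bigl(\sigma^{-1}(L)\bigr)=L$. Consequently, if a regular $R$ satisfied $R=\sigma^{-1}\bigl(\Sol_Z(E)\bigr)$, then $\Sol_Z(E)=\sigma(R)$, and the right-hand side is regular because regular languages are closed under morphic images. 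The problem therefore splits into ruling out the merging morphisms and ruling out the injective (relabelling) ones; in the latter case the equation would have to solve, up to renaming letters, exactly $R$ as its $Z$-solution set.

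For the witness I would take $R=\{w\in\{a,b\}^*\mid w\text{ contains no factor }aa\}$, which is thin yet has exponentially many words of each length (a Fibonacci count). The merging case is easy: the only nontrivial merge on $\{a,b\}$ sends both letters to one symbol $c$, so $L\subseteq\{c\}^*$ and $\sigma^{-1}(L)=\{w\mid |w|\in\Lambda\}$ for some $\Lambda\subseteq\mathbb{N}$ is a union of length classes; but $R$ is not such a union (at length $2$ it contains $ab,ba,bb$ but excludes $aa$), so no merge produces $R$. Hence, up to the $a\leftrightarrow b$ swap, $\sigma$ must be the identity, and everything reduces to showing that $R$ (the ``no-$aa$'' language) is not the $Z$-solution set of any single word equation.

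Ruling this out is the crux, and the main obstacle. The natural tool is the cited theorem of Plandowski: any single-variable solution set either contains a pattern language or is thin. Since $R$ is thin, the dichotomy does not by itself yield a contradiction, so a density argument (as used in Example~\ref{example:patseries} and Lemma~\ref{Lemma::infproj}) is not enough here. What I would need in addition is a structural bound on thin $Z$-solution sets --- namely that a thin solution projection is slender, i.e. has only boundedly many words per length --- which would clash immediately with the exponential growth of $R$. Establishing this slenderness (by lifting the arithmetic-progression structure of one-variable equation solutions through the $Z$-projection, or by a direct pumping argument producing a solution word that contains $aa$) is the delicate step; if the slenderness claim failed, I would fall back on a tailored pumping argument against the specific equation, or replace $R$ by a modular-counting language such as $\{w\mid |w|_a\text{ is even}\}$, whose inexpressibility is driven by the impossibility of erasing under a length-preserving $\sigma$ together with the absence of letter-counting in word equations over a non-unary alphabet.
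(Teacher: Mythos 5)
You have correctly identified where the difficulty lies, but the crux is exactly what your proposal leaves unproven: everything reduces to showing that the no-$aa$ language $R$ is not (a subalphabet restriction of) the $Z$-solution set of any word equation, and this step is missing. Worse, the structural claim you hope to use --- that thin solution projections are slender --- is false. The system $\{\wequat{Z}{XbY},\,\wequat{aX}{Xa},\,\wequat{aY}{Ya}\}$ has $Z$-solution set $\{a^i b a^j \mid i,j\geq 0\}$, and a finite system is equivalent to a single equation over a non-unary alphabet (a classical fact), so this thin language is expressible yet has $n$ words of length $n$; iterating the construction gives thin expressible languages of polynomial growth of any degree. Hence the dichotomy you would need (thin implies bounded, or at least subexponential, word counting) is either false or unproven, and your fallbacks (``absence of letter-counting in word equations'', or switching to $\{w \mid |w|_a \text{ even}\}$) are not citable facts but further unproven non-expressibility claims of comparable difficulty. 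The paper avoids this entirely through its choice of witness: it takes $\Lang_0=(ab\,|\,ba)^*$, a Kleene star of a finite set with non-commuting elements, so that any candidate image $\sigma(\Lang_0)=(\omega_1\omega_2\,|\,\omega_2\omega_1)^*$ is again of the form $F^*$. It proves this image thin by a factor-counting argument, then invokes the cited theorem (a thin $\Lang^*$ expressible by a word equation forces the elements of $\Lang$ to commute) to get $\omega_1\omega_2=\omega_2\omega_1=\tau^n$, whence $\omega_1,\omega_2$ are powers of $\tau$ and $a^{k_1+k_2}\in\sigma^{-1}(\Lang_1)\setminus\Lang_0$, contradicting the inverse-image hypothesis. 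Your witness $R$ is not of the form $F^*$ (since $a\in R$ but $aa\notin R$), so the only available theorem of this kind cannot be applied to it; the choice of witness is not cosmetic but is the whole proof.

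Two secondary problems. First, your normalisation ``pass to the image alphabet, so $\sigma$ is surjective and $\Sol_Z(E)=\sigma(R)$'' is unsound: for a non-surjective length-preserving $\sigma$ (say the identity embedding of $\{a,b\}^*$ into $\{a,b,c\}^*$) one has $\sigma^{-1}\bigl(\Sol_Z(E)\bigr)=\Sol_Z(E)\cap\{a,b\}^*$, and a subalphabet restriction of a word-equation language need not be a word-equation language --- indeed $\{a,b\}^*$ viewed inside $\{a,b,c\}^*$ is thin, equals $F^*$ with $F=\{a,b\}$ and $ab\neq ba$, hence is itself inexpressible by the commutativity theorem. So even a complete proof that $R$ is not \emph{exactly} a solution set would not cover all length-preserving morphisms. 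Second, you restrict to length-preserving $\sigma$ from the outset, whereas the paper's argument handles arbitrary morphisms, deriving non-erasingness (and non-unarity of the image) from the hypothesis $\sigma^{-1}(\Lang_1)=\Lang_0$ rather than assuming it.
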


\begin{proof}
Let us consider $\Lang_0=(ab|ba)^*$. Let $\sigma(a)=\omega_1$, $\sigma(b)=\omega_2$, then $\sigma((ab|ba)^*)=(\omega_1\omega_2|\omega_2\omega_1)^*=\Lang_1$. We assume that an inverse $\sigma^{-1}(L_1)$ is a maximal set of words in alphabet $\{a,b\}$ s.t. $\forall \upsilon\in \sigma^{-1}(\Lang_1)(\sigma(\upsilon)\in \Lang_1)$, and that $\sigma^{-1}(\Lang_1)=\Lang_0$.

The last condition imposes an obvious restriction on $\sigma$: $|\omega_1|>0$ and $|\omega_2|>0$, and the alphabet of $\omega_1\omega_2$ is not unary.

Now we refer to a following lemma of~\cite{Day}:

If a thin regular language $\Lang^*$ (i.e. a Kleene star of a regular $\Lang$) is represented by a word equation, then for all $\tau_1,\tau_2\in \Lang$, $\tau_1 \tau_2 = \tau_2 \tau_1$.

We recall that a language $\Lang$ in alphabet $\Sigma$ is said to be thin iff there exists at least one word $\omega\in\Sigma^+$ that is avoided as a subword in elements of $\Lang$. That is, $\forall u\in \Lang(u\neq u_1\omega u_2)$.

Let us show that $\Lang_1$ is thin. If $|\omega_1|=|\omega_2|=1$, the fact trivially is implied from the fact that $\Lang_0$ is thin (e.g., words in $\Lang_0$ never include $aaa$). Let $|\omega_1|+|\omega_2|\geq 3$. We count a number of possible substrings in $\Lang_1$ of the length $|\omega_1\omega_2\omega_1\omega_2|$. Such a substring may include either:

\begin{itemize}
\item two occurrences of $\omega_1\omega_2$ and $\omega_2\omega_1$, or their single occurrences combined in any order, giving a total of 4 variants;
\item a single occurrence of $\omega_1\omega_2$ or $\omega_2\omega_1$, prefixed and suffixed by two other occurrences, a total of $|\omega_1+\omega_2|\cdot 8$ variants.
\end{itemize}

Hence, the number of substrings of the length $|\omega_1\omega_2\omega_1\omega_2|$ in words of $\Lang_1$ is at most $4+|\omega_1+\omega_2|\cdot 8$. While a total number of the substrings is $2^{|\omega_1\omega_2\omega_1\omega_2|}$, which is greater than the given upper bound for any $|\omega_1|+|\omega_2|\geq 3$.

We have shown that $\Lang_1$ is thin. Since $\Lang_1=(\omega_1\omega_2|\omega_2\omega_1)^*$, by Day et al, 
$$\exists \tau,n\biggerl\tau\text{ is primitive }\&\omega_1\omega_2=\tau^n\&\omega_2\omega_1=\tau^n\biggerr.$$ 
Therefore, both $\omega_1$ and $\omega_2$ are powers of $\tau$, say, $\omega_1=\tau^{k_1}$, $\omega_2=\tau^{k_2}$. Then $\sigma(a^{k_1+k_2})=\tau^{k_1\cdot(k_1+k_2)}=(\omega_1\omega_2)^{k_1}\in \Lang_1$, while $a^{k_1+k_2}\not\in \Lang_0$.
\end{proof}

\end{document}